\newcommand{\zx}[1]{{{\color{red}  #1}}}
\newcommand{\aaa}{\boldsymbol \alpha}
\newcommand{\bbb}{\boldsymbol \beta}
\newcommand{\rrr}{\boldsymbol \gamma}
\newcommand{\XX}{\mathbf X}
\newcommand{\YY}{\mathbf Y}
\newcommand{\QQ}{\mathbb Q}
\newcommand{\WW}{\mathbf W}
\newcommand{\xx}{\mathbf x}
\newcommand{\yy}{\mathbf y}
\newcommand{\xxk}{\tilde{\mathbf x}}
\newcommand{\zz}{\mathbf z}
\newcommand{\ww}{\mathbf w}
\newcommand{\zzk}{\tilde{\mathbf z}}
\newcommand{\XXk}{\tilde{\mathbf{X}}}
\newcommand{\ZZ}{\mathbf Z}
\newcommand{\ZZk}{\tilde{\mathbf{Z}}}
\newcommand{\WWk}{\tilde{\mathbf{W}}}
\newcommand{\YYm}{\mathbf{Y}^{\operatorname{mis}}}
\newcommand{\YYo}{\mathbf{Y}^{\operatorname{obs}}}
\newcommand{\yyo}{\mathbf{y}^{\operatorname{obs}}}
\newcommand{\XXo}{\mathbf{X}^{\operatorname{obs}}}
\newcommand{\ZZo}{\mathbf{Z}^{\operatorname{obs}}}
\newcommand{\zzo}{\mathbf{z}^{\operatorname{obs}}}
\newcommand{\xxo}{\mathbf{x}^{\operatorname{obs}}}
\newcommand{\ZZm}{\mathbf{Z}^{\operatorname{mis}}}
\newcommand{\ZZko}{\tilde{\mathbf{Z}}^{\operatorname{obs}}}
\newcommand{\zzko}{\tilde{\mathbf{z}}^{\operatorname{obs}}}
\newcommand{\xxko}{\tilde{\mathbf{x}}^{\operatorname{obs}}}
\newcommand{\ZZalg}{\mathbf{Z}^{\operatorname{Alg1}}}
\newcommand{\Zalg}{{Z}^{\operatorname{Alg1}}}
\newcommand{\XXko}{\tilde{\mathbf{X}}^{\text{obs}}}
\newcommand{\SSigma}{\boldsymbol{\Sigma}}
\newcommand{\AAA}{\mathcal A}
\newcommand{\BBB}{\mathcal B}
\newcommand{\CC}{\mathcal C}
\newcommand{\DD}{\mathcal D}
\newcommand{\GG}{\mathbf G}
\renewcommand{\SS}{\mathbf S}
\newcommand{\SSS}{\mathcal S}
\newcommand{\PP}{\mathbb{P}}
\newcommand{\III}{\mathbb I}
\newcommand{\argmax}{\operatornamewithlimits{arg\,max}}
\newcommand{\TT}{\mathbf T}
\newcommand{\EE}{\mathbb E}
\newcommand{\LL}{\mathbf L}
\newcommand{\uu}{\mbox{$\mathbf u$}}
\newcommand{\vv}{\mbox{$\mathbf v$}}
\newcommand{\ttt}{\boldsymbol \theta}
\newtheorem{theorem}{Theorem}
\newtheorem{algorithm}{Algorithm}
\newtheorem{defi}{Definition}
\newtheorem{remark}{Remark}
\newtheorem{pro}{Proposition}
\newtheorem{lemma}{Lemma}
\providecommand{\keywords}[1]
{
  \small	
  \textbf{\textbf{Keywords---}} #1
}
\title{Variable Selection in Latent Variable Models via Knockoffs: An Application to International Large-scale Assessment in Education}
\title{\Large  Variable Selection in Latent Regression IRT Models via Knockoffs: An Application to International Large-scale Assessment in Education}
\author{ 
Zilong Xie\thanks{School of Mathematical Sciences, Fudan University, Shanghai, China}, Yunxiao Chen\thanks{Department of Statistics, London School of Economics and Political Science, London, UK}, 
Matthias von Davier\thanks{Lynch School of Education, Boston College, Chestnut Hill, Massachusetts, USA}, 
Haolei Weng\thanks{Department of Statistics and Probability, Michigan State University, East Lansing, Michigan, USA}
}
\date{} 
\begin{document}

\maketitle

\begin{abstract}
International large-scale assessments (ILSAs) play an important role in educational research and policy making. They collect valuable data on education quality and performance development across many education systems, giving countries the opportunity to share techniques, organizational structures, and policies that have proven efficient and successful. To gain insights from ILSA data, we identify non-cognitive variables associated with students' academic performance. This problem has three analytical challenges: 1) academic performance is measured by cognitive items under a matrix sampling design; 2) there are many missing values in the non-cognitive variables; and 3) multiple comparisons due to a large number of non-cognitive variables. We consider an application to the Programme for International Student Assessment (PISA), aiming to identify non-cognitive variables associated with students' performance in science. We formulate it as a variable selection problem under a general latent variable model framework and further propose a knockoff method that conducts variable selection with a  controlled error rate for false selections. 
\end{abstract}

\keywords{Model-X knockoffs, missing data, latent variables, variable selection,  international large-scale assessment}

\section{Introduction}
International large-scale assessments (ILSAs), including the Programme for International Student Assessment (PISA), Programme for the International Assessment of Adult Competencies (PIAAC), Progress in International Reading Literacy Study (PIRLS), and Trends in International Mathematics and Science Study (TIMSS), play an important role in educational research and policy making. They collect valuable data on education quality and performance development across many education systems in the world, giving countries the opportunity to share techniques, organizational structures, and policies that have proven efficient and successful \citep{singer2018international,von2012role}.

PISA is a worldwide study by the Organisation for Economic Co-operation and Development (OECD) in member and non-member nations intended to evaluate educational systems by measuring 15-year-old school students' scholastic performance in the subjects of mathematics, science, and reading, as well as a large number of	non-cognitive variables, such as students’ socioeconomic status, family background, and learning experiences. Students' scholastic performance is measured by response data from cognitive items that measure ability/proficiency in each of the three subjects, and non-cognitive variables are collected through non-cognitive questionnaires for students, school principals, teachers, and parents.  In this study, we focus on the knowledge domain of science in PISA 2015, where science was the assessment focus in this survey. Given the importance of science education 	\citep{national2012framework, uk2015national}, it is of particular interest for educators and policymakers to understand what non-cognitive variables (e.g., socioeconomic status, family background, learning experiences) are significantly associated with 
student's knowledge of science. Naturally, 
one would consider a regression model with students' performance in science as the response variable and the non-cognitive variables as predictors and identify the predictors with non-zero regression coefficients. Seemly straightforward, constructing such a regression model and then selecting the non-null variables is nontrivial due to three challenges brought by the complexity of the current problem. First, students' performance in science is not directly observed but instead measured by a set of test items.  The measurement is further complicated by a matrix sampling design adopted by PISA \citep{gonzalez2010principles}. That is, each student is administered a small subset of available cognitive items in order to cover an extensive content domain while not overburdening students and schools in terms of their time and administration costs. Consequently, one cannot simply calculate a total score as a surrogate for student science performance. We note that OECD provides plausible values, which are obtained using a multiple imputation procedure \citep{von2009plausible}, as a summary of each student's overall performance in each subject domain. However, it is not suitable to use a plausible value as the response variable when performing the current variable selection task. This is because the multiple imputation procedure for producing the plausible values involves the predictors through a principal component analysis step \citep[chapter 9,][]{organisation2016pisa}, due to which all the predictors are associated with the plausible values and thus, performing variable selection is not sensible. Second, students' non-cognitive variables are collected via survey questions, which contain many missing values. In fact, in the US sample considered in the current study, {around 6\% of the entries are missing, and the proportion of sample points that are fully observed is less than 26\%.}
Consequently, it is virtually impossible to conduct the regression analysis without a proper treatment of the missing values. Finally, PISA collects a large number of non-cognitive variables. In the current study of PISA 2015 data, we have 62 predictors, even though careful pre-processing is performed that substantially reduces the number of variables.  Due to the multiple comparison issues, it is a challenge to control for a reasonable error metric when conducting variable selection. 
 
We tackle these challenges through several methodological contributions. We introduce a latent construct for science knowledge and use an Item Response Theory (IRT) model \citep{chen2023IRT} to measure this latent construct based on students' responses to science items. The relationship between the latent construct and non-cognitive variables is further modeled through a structural model that regresses the latent construct onto the non-cognitive variables.
This structural equation model is often known as the latent regression IRT model, or simply the latent regression model \citep{mislevy1984estimating,von2010stochastic}. When there are many missing values in the non-cognitive variables, estimating the latent regression model is a challenge. To tackle this problem, we propose to model the predictors using a Gaussian copula model \citep{fan2017high,han2012composite}, which allows the predictors to be of mixed types (e.g., continuous, binary, ordinal). Thanks to the Gaussian copula model, we can estimate the latent regression model with a likelihood-based estimator. In dealing with multiple comparisons, we consider 
    the knockoff framework for controlled variable selection \citep{barber2015controlling,candes2018panning}. More specifically, we adapt the derandomized knockoffs method \citep{ren2021derandomizing} to the current latent regression model with missing values. This approach allows us to control the  Per Family Error Rate (PFER), i.e., the expected number of false positives among the detections. We choose the derandomized knockoff method instead of the Model-X knockoff method because the latter is a randomized procedure that may suffer from a high Monte Carlo error. The derandomized knockoff method leverages the Model-X knockoff method by aggregating the results from multiple knockoff realizations. 
    To our best knowledge,  this is the first time that missing data is considered in a knockoff approach with theoretical guarantees.

In real-world applications, especially in social sciences, missing data are commonly encountered. In addition, many variables of interest, such as individuals’ attitudes,  personality traits, and abilities, are latent constructs that are not directly observable.
They are often defined by multiple indicators and play the role of a response variable or predictors in a model \citep[Chapter 4,][]{skrondal2004generalized}. For example, 
the latent construct for students' science knowledge
is such a variable,  and it serves as the response variable in the latent regression analysis of PISA data. While we focus on data from an education survey and a tailored latent regression model, we describe the proposed knockoff method under a general latent variable model framework so that the proposed method can be applied to variable selection problems involving missing values, latent constructs or both. {\color{black}
Model selection of latent variable models is usually performed based on information criteria, such as the Akaike information criterion \citep[AIC;][]{akaike1974new} and the Bayesian information criterion \citep[BIC;][]{schwarz1978estimating}. These methods suffer from several issues under the current complex data setting. First, when the latent regression model involves many predictors,
an information criterion needs to be combined with a search method,  such as a stepwise selection method or a Lasso-type regularised estimation method. The search method is used to retain a smaller number of candidate models 
from the original model space that is exponentially large so that the information criterion can be computed. Even so, this approach may be computationally infeasible
 when there exist many latent variables or missing values, 
as a search method  needs to optimize 
many marginal likelihoods that involve high-dimensional integrals. For regularised estimation methods, the optimization additionally involves non-smooth regularisation terms and thus can be computationally even more time-consuming. Moreover, stepwise selection methods are greedy algorithms that lack a theoretical guarantee for identifying the true model. Second, the computational burden with the information-criteria-based methods mostly comes from handling missing data and latent variables. One may naturally wonder whether we can use a two-step procedure that first handles the missing data and latent variables using an off-the-shell missing data handling methods, such as imputation methods \citep{little2019statistical,van2018flexible} and missing indicator methods \citep{cohen1975,dardanoni2011regression,dardanoni2015model},  and then applies an information criterion to the imputed or augmented data. Unfortunately, such a procedure lacks theoretical justification and is often practically infeasible or inaccurate. For example, the missing indicator method cannot be performed when some predictors are latent variables. In addition, a small simulation study in Section F.2 of the supplementary material shows that the BIC performs poorly when calculated based on imputed data. Finally,  the proposed method is more flexible, as it allows the users to choose the
threshold for the PFER, allowing a trade-off between type
I and type II error rates. This is an advantage that model selection based on an information criterion does not offer.
 }

The remainder of the paper is structured as follows. Section \ref{sec:data} provides the background on the central substantive question -- how students' knowledge of science is associated with their non-cognitive variables -- and a description of the PISA 2015 data. In Section~\ref{sec:review}, we introduce the latent regression IRT model for studying the relationship between a latent construct of science knowledge and non-cognitive variables and a Gaussian copula model for handling missing predictors, which are of mixed types. Section~\ref{sec:varsec} proposes knockoff methods for 
 controlled variable selection under the latent regression IRT model with missing data.  The proposed method is evaluated via a simulation study in Section~\ref{sec:sim} and then applied to data from PISA 2015 in Section~\ref{sec:real}. 
Finally, we discuss the implications of our results and possible directions for future research in Section \ref{sec:dis}. Proof of theoretical results, details of computation, additional simulation studies, and further information about the PISA data are given in the supplementary material. 

\section{Background and Overview of PISA 2015 Data}\label{sec:data}
	
\subsection{Academic Achievement and Non-cognitive Predictors}

The term ``non-cognitive" typically refers to a broad range of personal attributes, skills, and characteristics representing one's attitudinal, behavioral, emotional, motivational, and other psychosocial dispositions. It is often used as a catch-all phrase encompassing variables that are potentially important for academic achievement but not measured by typical achievement or cognitive tests \citep{farkas2003cognitive}. Social science researchers have devoted considerable research effort towards identifying non-cognitive predictors of students' academic achievement \citep[e.g.,][]{duckworth2015measurement,richardson2012psychological,lee2018non}.

Science has changed our lives and is vital to the future prosperity of society. Thus, science education plays an important role in the modern education system \citep{national2012framework, uk2015national}. Identifying the predictors of science education helps educators, policymakers, and other stakeholders understand the psychosocial factors behind science education, which may lead to better policies and practices of science education. PISA, which collects both students' science achievement and non-cognitive variables, provides a great opportunity for identifying the key non-cognitive predictors of science achievement. 
 
	\subsection{PISA 2015 Data}
	
	PISA is conducted in a three-year cycle, with each cycle focusing on one of the three subjects, i.e., mathematics, science, and reading. PISA 2015 is the most recent cycle that focused on science. It collected data from 72 participating countries and economics. Computer-based tests were used, with assessments lasting a total of two hours for each student.
	Following a matrix sampling design, different students took different combinations of test items on science, reading, mathematics, and collaborative problem-solving. Test items involved a mixture of multiple-choice and 
	constructive-response questions. See \cite{organisation2016pisa} for the summary of the design and results of PISA 2015.
	
This study considers a subset of the PISA 2015 dataset. Specifically, to avoid modeling country heterogeneity, we considered data from a single country, the United States (US). After some data pre-processing which excluded observations with poor-quality data, the sample size is 5,685.  PISA 2015 contained 184 items in the science domain that were dichotomously or polytomously scored. Due to 
	the matrix sampling design of PISA,  on average, each student was only assigned {16.25\%} of the items. 
	
In addition, we consider non-cognitive variables collected by the student survey, which provides information about the students themselves, their homes, and their school and learning experiences. We constructed 62 variables as candidates in variable selection. These variables include {11} raw responses to questionnaire items (e.g., {GENDER (gender), LANGAH (language at home)}), {34} indices that OECD constructed (e.g., {CULTPO (cultural possession), HEDRES (home educational resources)}), and {17} composites that we constructed {based on students' responses to questionnaire items} (e.g., {OUT.GAM (play games out of school), OUT.REA (reading out of school)}). We decided to include these constructed variables rather than the corresponding raw responses for better substantive interpretations. {For some ordinal variables, certain adjacent categories were merged due to sample size considerations.}
Details of these 62 candidate variables are given in Section~\ref{sec:real} and the supplementary material. 
Unlike the cognitive items, students were supposed to answer all the items in the student survey. However, there are still many missing responses in the student survey data. Among the candidate variables,  20 variables have more than 5\% of their data missing, and the variable {DUECEC (duration in early childhood education and care)} has the largest missing rate, {37.17\%}. 
	

\section{Model Framework}\label{sec:review}
		
In this section, we describe a general latent variable model framework, which includes the latent regression model for analyzing PISA data as a special case. The model is defined through (1) a structural model, (2) a measurement model, and (3) a data missingness mechanism.

\begin{figure}[ht]
  \centering
  \includegraphics[scale=0.8]{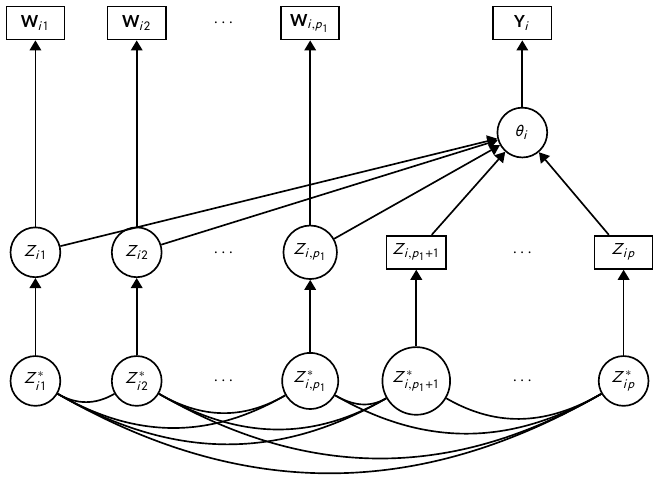}

\caption{{Path diagram for the general latent variable model framework. Variables within a circle represent unobserved or latent variables, while those within a rectangle represent observed variables. The measurement models are represented by the directed edges from $\theta_i$ to $\YY_{i}$ and those from $Z_{ik}$ to $\WW_{ik}$, $k = 1, 2, \ldots, p_1$. The structural model is represented by the directed edges from $Z_{ij}$s to $\theta_i$. The predictor model is represented by the directed edges from $Z^*_{ij}$ to $Z_{ij}$ and the undirected edges between $Z^*_{ij}$s.}}

\label{fig:latentreg}
\end{figure}

\subsection{Structural Model}

We consider data collected from $N$ observations. For each observation, there is a response variable $\theta_i$ and predictors $\ZZ_i = (Z_{i1}, \ldots, Z_{ip})^\top$, where $\theta_i$ and some or all entries of $\ZZ_i$ can be latent constructs measured by observed indicators. 
We allow the variables in $\ZZ_i$ to be binary, ordinal, continuous, or
a mixture of them. 
{\color{black}Without loss of generality, we assume that $Z_{i1}$, ..., $Z_{ip_1}$ are continuous latent constructs, with $p_1 = 0$ when all entries of $\ZZ_i$ are observable.}
In the PISA application, each observation is a student, $\theta_i$ represents the student's latent construct on science knowledge, and $\ZZ_i$ contains observable non-cognitive predictors.   
 
The structural model defines the joint distribution of $(\theta_i, \ZZ_i)$ through two steps -- (1) the conditional distribution of $\theta_i$ given $\ZZ_i$ and (2) the marginal distribution of $\ZZ_i$. 
A linear regression model is assumed for $\theta_i$ given $Z_{i1}$, ..., $Z_{ip}$. More specifically, 
for each variable $j$, we introduce a transformation $g_j(Z_j)$. When $Z_j$ is an ordinal variable with categories 
$\{0, ..., K_j\}$, the transformation function $g_j$ creates $K_j$ dummy variables, i.e., $g_j(Z_j) = (\III(\{Z_j\geq 1\}), ..., \III(\{Z_j \geq K_j\}))^\top$. For continuous and binary variables, $g_j$ is an identity link, i.e., $g_j(Z_j) = Z_j$. We assume
	\begin{equation}\label{eq:struc}
\theta_i \vert \ZZ_i \sim N(\beta_0 + \bbb_1^\top g_1(Z_{i1}) + \cdots + \bbb_p^\top g_p(Z_{ip}), \sigma^2),
	\end{equation}
where $\beta_0$ is the intercept, $\bbb_1$, ...,  $\bbb_p$ are the slope parameters, and $\sigma^2$ is the residual variance. Note that $\bbb_j$ is a scalar when predictor $j$ is continuous or binary and is a vector when the predictor is ordinal. Here, $\beta_0$, $\bbb_1$, ..., $\bbb_p$, and $\sigma^2$ are unknown and will be estimated from the model. The main goal of our analysis is to find predictors for which $\Vert \bbb_j\Vert \neq 0$. {The structural model is depicted in Figure~\ref{fig:latentreg}.}

Since $\ZZ_i$ may contain variables of mixed types, one cannot simply adopt a Gaussian assumption. Here, we consider a Gaussian copula model. This model introduces underlying random variables $\ZZ_i^* = (Z_{i1}^*, ..., Z_{ip}^*)^\top$, for which $\ZZ_1^*$, ..., $\ZZ_N^*$ are independent and identically distributed, following a $p$-variate normal distribution $N(\mathbf 0, \SSigma)$. We assume that the normal distribution is non-degenerate, i.e.,  $\text{rank}(\SSigma) = p$. Each underlying variable $Z_{ij}^*$ is assumed to marginally follow a standard normal distribution, i.e., the diagonal entries of $\SSigma$ are 1. Each predictor $Z_{ij}$ is assumed to be a transformation of its underlying variable $Z_{ij}^*$, denoted by $Z_{ij} = F_j(Z_{ij}^*)$.  	
	    For a continuous predictor $j$, let
	    $F_j(Z_{ij}^*) = c_j + d_jZ_{ij}^*,$
	where $c_j$ and $d_j$ are unknown parameters. 
	{For the latent constructs, we let $Z_{ij} = Z_{ij}^*$ as their location and scale need to be fixed for identification, i.e., $c_j = 0$ and $d_j = 1$, $j=1, ...,p_1$.}
	   For a binary or ordinal predictor $j$, let 
	   $F_j(Z_{ij}^*) =k$ if $Z_{ij}^*  \in (c_{jk}, c_{j,k+1}], k = 0, \ldots, K_j$, 
where $c_{j1}$, ..., $c_{jK_j}$ are unknown parameters, and $c_{j0} = -\infty$ and $c_{j,K_j+1} =\infty$. 
Note that $K_j=1$ for a binary variable and $K_j >1$ for an ordinal variable.
{The predictor model is also illustrated in Figure~\ref{fig:latentreg}.} 
 
We note that the above model specifies a joint distribution for $Z_{i1}, ..., Z_{ip}$. More specifically, 
let $\DD \subset \{1, ..., p\}$ be the set of dichotomous and polytomous predictors. We use $\Xi$ as  generic notation for the unknown parameters in the Gaussian copula model, including $\SSigma$ and the parameters in the transformations between $Z_{ij}^*$
and $Z_{ij}$. We further use $\phi(\cdot|\SSigma)$ to denote the density function of the multivariate normal distribution $N(\mathbf{0}, \SSigma)$. Then the density function of $\ZZ_i$ takes the form 
\begin{equation}\label{eq:density}
\begin{aligned}
    f(\zz\vert \Xi)&= \int\ldots\int \left\{\left(\prod_{j\in \DD }~dz_{j}^{*}\right) \right.\times \\ 
    &\left.\left[\phi(\zz^*\vert \SSigma) \times \left(\prod_{j\notin\DD} d_j^{-1}\right) \times  \left( \mathop{\prod}_{j\in \DD }\III (z_{j}^* \in  (c_{j, z_{j}-1},c_{j, z_{j}} ])\right) \right]\Bigg\vert_{z_{j}^* = \frac{z_{j} - c_j}{d_j}, j\notin\DD}.\right\} 
\end{aligned}
\end{equation}

\subsection{Measurement Model}\label{subsec:measurement}

We note that $\theta_i$ is a latent construct that is 
not directly observable. In addition, some variables in $\ZZ_i$ may also be latent constructs. The latent constructs are defined by observable data through a measurement model. We now specify this measurement model based on complete data (i.e., no data is missing). The treatment of missing values is left to Section~\ref{subsec:infer}. 
Let $\YY_i = (Y_{i1}, ..., Y_{iJ})^\top$ be the indicators for $\theta_i$. And let $\mathbf W_{ij} = (W_{ij1}, ..., W_{ijl_j})^\top$ be the indicators for $Z_{ij}$, $j=1, ..., p_1$.  The measurement model defines the conditional distribution of $(\YY_i^\top, \mathbf W_{i1}^\top, ..., \mathbf W_{ip_1}^\top)^\top$ given $(\theta_i, Z_{i1}, ..., Z_{ip_1})^\top$. 
This conditional distribution is specified by assuming  
(1) $\YY_i$ is conditionally independent of all the other variables given $\theta_i$, (2) $\mathbf W_{ij}$ is conditionally independent of all the other variables given $Z_{ij}$, for all $j = 1, ..., p$, and (3) the conditional model of $\YY_i$ given $\theta_i$ and those of $\mathbf W_{ij}$  given $Z_{ij}$. {These conditional models are visualized in Figure~\ref{fig:latentreg}.} We now elaborate on these conditional models. 

\medskip
\noindent
{\bf Conditional model of $\YY_i$ given $\theta_i$.} If $\theta_i$ is observable, then we just let $\YY_i = \theta_i$, and in this case, the conditional model of  $\YY_i$ given $\theta_i$ is degenerate. Otherwise, when $\theta_i$ is a latent construct, a unidimensional linear factor model or IRT model \citep[Chapter 3,][]{skrondal2004generalized} can be used for this conditional distribution, depending on the variable types in $\YY_i$. 
We assume that this measurement model satisfies the standard identifiability conditions. In the application to PISA data, students' science performance in science, $\theta_i$, is measured by cognitive items. In this application, $\YY_i$ contains students' responses to cognitive items, where the responses are either binary (correct/incorrect) or ordinal. In what follows, we describe the measurement model used in the scaling of PISA 2015 data \citep[chapter 9,][]{organisation2016pisa}. This model will be used in our simulation studies and application to PISA data.

		More specifically, this model assumes local independence, an assumption that is commonly adopted in IRT models \citep{embretson2000item}. That is, $Y_{ij}$, $j =1, ..., J$, are conditionally independent given $\theta_i$. For a dichotomous item $j$, the conditional distribution of $Y_{ij}$ given $\theta_i$ is assumed to follow a two-parameter logistic model (2PL, \cite{birnbaum1968some})  
		\begin{equation}\label{eq:2pl}
			\mathbb P(Y_{ij} = 1 \vert \theta_i) = \frac{\exp(a_j \theta_i +b_j)}{1+\exp(a_j \theta_i +b_j)},
		\end{equation}
		where $a_j$ and $b_j$ are two item-specific parameters. For a polytomous item $j$ {with $K_j+1$ categories},  $Y_{ij}$ given $\theta_i$ is assumed to follow  a generalized partial credit model (GPCM, \cite{muraki1992generalized}), for which  
		\begin{equation}\label{eq:gpcm}
		 \PP(Y_{ij} = k\vert \theta_i) = \frac{\exp\left[\sum\limits_{r = 1}^k  (a_j \theta_i +b_{jr})\right]}{1 + \sum\limits_{k' = 1}^{K_j}\exp\left[\sum\limits_{r = 1}^{k'}  (a_j \theta_i +b_{jr})\right]}, ~k = 1, 
		 \ldots, K_j,
		\end{equation}
		where {$a_j, b_{j1}, b_{j2}, \ldots, b_{j,K_j}$} are item-specific parameters.
	In OECD's analysis of PISA data,  the item-specific parameters are first calibrated based on item response data from all the countries and then treated as known when inferring the proficiency level of students or the proficiency distributions of countries  \citep[chapter 9,][]{organisation2016pisa}. We follow this routine when analyzing PISA data. Specifically, the item-specific parameters are fixed to the values used by OECD for scaling PISA 2015 data\footnote{The item parameters can be found from: \url{https://www.oecd.org/pisa/data/2015-technical-report/PISA2015_TechRep_Final-AnnexA.pdf}}.

	In the rest, we denote the conditional probability density/mass function of $\YY_i$ given $\theta_i$ at $\YY_i = \yy_i$ as $h(\yy_i\vert\theta_i; \Delta)$, where $\Delta$ denotes the unknown parameters in this conditional model. In the PISA application, $h(\yy_i\vert\theta_i; \Delta) = \prod_{j=1}^J P(Y_{ij} = y_{ij} \vert \theta_i)$, where $P(Y_{ij} = y_{ij} \vert \theta_i)$ follows \eqref{eq:2pl} or \eqref{eq:gpcm} depending on whether item $j$ is dichotomous or polytomous. As all the item parameters are pre-calibrated in this application, $\Delta$ becomes an empty vector and will not be estimated. In situations where the item parameters are unknown, $\Delta$ can be estimated from data; see Section F.1 in the supplementary material for a simulation study under this setting.


\medskip
\noindent
{\bf Conditional model of $\mathbf W_{ij}$ given $Z_{ij}$.}  
When $Z_{ij}$ is a latent construct, a unidimensional linear factor model or IRT model can be used for this conditional distribution, depending on the variable types in $\mathbf W_{ij}$. We assume that these measurement models satisfy the standard identifiability conditions. In the rest, we denote the conditional probability density/mass function of $\mathbf W_{ij}$ given $Z_{ij}$ at $\mathbf W_{ij} = \mathbf w_{ij}$ as 
\begin{equation}\label{eq:WcondZ}
    q_j(\mathbf w_{ij}\vert Z_{ij}; \Lambda_j),~j = 1, ..., p_1,
\end{equation}
where $\Lambda_j$ denotes the unknown parameters in this conditional model if they exist.  

	\subsection{Data Missingness and Statistical Inference}\label{subsec:infer}
	

In PISA data, as well as many other multivariate data in the social and behavioral sciences, there are often a substantial proportion of missing values. Here, we impose assumptions for data missingness. 
First, we assume that entries of $\YY_i$ are missing completely at random. This assumption is sensible in the PISA application, where the missing responses to cognitive items are due to the matrix sampling design of PISA. {\color{black}Second, we assume that $W_{i1}$, ..., $W_{ip_1}$ do not have missing values. This assumption is made for simplicity and can be easily relaxed. }
Finally, we assume that the missing data in 
$(Z_{i,p_1+1}, ..., Z_{ip})^\top$ are Missing At Random (MAR), which is a quite strong but commonly adopted assumption in missing data analysis  \citep{little2019statistical, van2018flexible}. 

More specifically,  let $\ww_i$ be the realisation of $\WW_i = (\WW_{i1}^\top, ..., \WW_{ip_1}^\top)^\top$, $i=1, ..., N$, and recall that $\Xi$ denotes the unknown parameters of the Gaussian copula model and $\Lambda_1$, ..., $\Lambda_{p_1}$ denote the parameters in the measurement models for $Z_{ij}, j = 1$. Let  $\BBB_i$ be an index set containing all the indicators $j$ such that $Y_{ij}$ is not missing. We let $\YYo_i = \{Y_{ij}: j \in \BBB_i\}$ be the observed indicators for $\theta_i$ and let 
$\YYm_i = \{Y_{ij}: j \notin \BBB_i\}$ be the missing ones. 
Similarly, we let $\AAA_{i}$ be the set indicating all the observed variables in $(Z_{i,p_1+1}, ..., Z_{i,p})^\top$, and let $\ZZo_i = \{Z_{ij}: j \in \AAA_i\}$ and  $\ZZm_i = \{Z_{ij}: j \notin \AAA_i\}$. 
Under the MAR assumption, the log-likelihood function for $\Xi$, $\Lambda_1$, ..., and $\Lambda_{p_1}$ takes the form
$l_1(\Xi, \Lambda_1, ..., \Lambda_{p_1})= \sum_{i=1}^N \log f_i(\ww_i, \zzo_i {\vert \Xi, \Lambda_1, ..., \Lambda_{p_1}})$, 
where 
\[f_i(\ww_i, \zzo_i\vert \Xi, \Lambda_1, ..., \Lambda_{p_1}) = \int \cdots\int f(\zz_i\vert \Xi) (\prod_{j=1}^{p_1} q_j(\mathbf w_{ij}\vert z_{ij}; \Lambda_j)) (\prod_{j\notin \AAA_i}d z_{ij}).\] 
Note that the  integrals in $f_i(\ww_i, \zzo_i\vert \Xi, \Lambda_1, ..., \Lambda_{p_1})$  are
with respect to $\ZZm_i$.

The maximum likelihood estimator for $\Xi, \Lambda_1, ..., \Lambda_{p_1}$ is given by 
\begin{equation}\label{eq:mmlcopula}
\begin{aligned}
(\hat\Xi, \hat\Lambda_1, ..., \hat \Lambda_{p_1}) = \argmax_{\Xi, \Lambda_1, ..., \Lambda_{p_1}}\quad & l_1(\Xi, \Lambda_1, ..., \Lambda_{p_1})\\
 \text{ subject to }\quad&\Sigma_{jj} = 1,~j = 1, \ldots, p,\\
  & {d_j > 0,  j \notin \DD,} ~ c_{j1} < c_{j2} < \ldots < c_{jK_j}, j\in \DD. 
\end{aligned} 
\end{equation}
We note that this optimization problem involves high-dimensional integrals and constraints.  We adopt a stochastic proximal gradient algorithm proposed in \cite{zhang2022computation}. 
In this algorithm, the integrals are handled by Monte Carlo sampling of the missing values, and the unknown parameters are updated by stochastic proximal gradient descent, in which constraints are handled. The details of this algorithm can be found in the supplementary material.

Given $\hat\Xi, \hat\Lambda_1, ..., \hat \Lambda_{p_1}$
from \eqref{eq:mmlcopula}, one can estimate the rest of the unknown parameters, including the regression coefficients in \eqref{eq:struc} that are of major interest.  
We denote  {$\bbb = (\bbb_1^\top, ...,\bbb_p^\top)^\top$}.  Let $\yyo_i$ 
be the realisation of $\YYo_i$, $i=1, ..., N$. The log-likelihood for $\bbb$, {$\beta_0$,} $\sigma^2$ and $\Delta$
takes the form
\small{\begin{equation}\label{eq:loglik}
\begin{aligned}
    &l_2(\bbb, {\beta_0, } \sigma^2, \Delta) = \\
    &\sum_{i=1}^N \log \left[\int \cdots \int \left(\prod_{j\notin \AAA_i }~dz_{ij}\right) f(\zz_i\vert \hat \Xi) \left(\prod_{j=1}^{p_1} q_j(\mathbf w_{ij}\vert z_{ij}; \hat \Lambda_j)\right) f_i(\yyo_i \vert \zz_i; \bbb, {\beta_0, } \sigma^2, \Delta)\right],
\end{aligned}
\end{equation}}
where $f_i(\yyo_i \vert \zz_i; \bbb, {\beta_0,} \sigma^2, \Delta)$ is the conditional density function of  $\YYo_i$ given $\ZZ_i = \zz_i$ 
\small{\begin{equation}\label{eq:Yobsmodel}
    \begin{aligned}
        &f_i(\yyo_i \vert \zz_i; \bbb, {\beta_0,} \sigma^2, \Delta) = \\
        & \frac{1}{\sqrt{2\pi\sigma^2}} \int \cdots \int d\theta_i (\prod_{j \notin \BBB_i}dy_{ij}) h(\yy_i\vert \theta_i; \Delta) \exp\left(-\frac{(\theta_i - (\beta_0 + \bbb_1^\top g_1(z_{i1}) + \cdots + \bbb_p^\top g_p(z_{ip}))^2}{2\sigma^2}\right) .
    \end{aligned}
\end{equation}}
We estimate $\bbb$, ${\beta_0}$ ,$\sigma^2$ and $\Delta$ by maximising $l_2(\bbb, {\beta_0,} \sigma^2, \Delta)$.
Similar to the optimisation \eqref{eq:mmlcopula}, the maximisation of $l_2(\bbb, {\beta_0}, \sigma^2, \Delta)$ also involves high-dimensional integrals. 
We carry out this optimization using a stochastic Expectation-Maximisation (EM) algorithm\footnote{The stochastic proximal gradient algorithm used for the optimization problem \eqref{eq:mmlcopula} can also be used to solve the current optimization problem. The stochastic EM algorithm is chosen as it tends to converge empirically faster for the current problem.} \citep{nielsen2000stochastic, zhang2020improved}. The details are given in the supplementary material. 


\section{Variable Selection via Knockoffs}\label{sec:varsec}

\subsection{Problem Setup and Knockoffs} 

As mentioned previously, our goal is to solve a model selection problem, i.e., to find the non-null predictors for which $\Vert \bbb_j\Vert \neq 0$. We hope to control the statistical error in the model selection to assure that most of the discoveries are indeed true and replicable. This is typically achieved by controlling for a certain risk function, such as the false discovery rate, the $k$-familywise error rate, and the per familywise error (PFER); see  \cite{janson2016familywise} and \cite{candes2018panning}. 
Let $\hat {\mathcal S}$ and $\mathcal S^*\subset \{1, ..., p\}$ be the selected and true non-null predictors, respectively.  
 The current study concerns the control of PFER, defined as $\mathbb E |\hat{\mathcal S}\setminus \mathcal S^*|$, where $|\cdot\vert$
denotes the number of elements in a set. 

The knockoff method is a general framework for controlled variable selection.  The key to a knockoff method is the construction of knockoff variables, where the knockoff variables mimic the dependence structure within the original variables but are null variables (i.e., not associated with the response variable). They serve as negative controls in the variable selection procedure that help identify the truly important predictors while controlling for a certain risk function, such as the PFER. Many knockoff methods have been developed \citep{barber2015controlling, barber2019knockoff,candes2018panning,fan2019rank,fan2020ipad, janson2016familywise, sesia2019gene,romano2020deep}. Many knockoff methods are based on the model-X knockoff framework \citep{candes2018panning}, which is very flexible and can be extended to the current setting involving missing data and mixed-type predictors. However, one drawback of the model-X knockoffs is that it only takes one draw of the knockoff variables through Monte Carlo sampling. As a result, this procedure often suffers from high uncertainty brought about by the Monte Carlo error, even though the risk function is controlled. To alleviate this uncertainty, which has important implications on the interpretability of the variable selection results, we adopt the derandomized knockoff method \citep{ren2021derandomizing}. This method can substantially reduce the Monte Carlo error by 
aggregating the selection results across multiple runs of a knockoff algorithm. In what follows, we first introduce the way of constructing knockoff variables under the joint model described in the above section and then introduce a derandomized knockoff procedure for controlling PFER.

\subsection{Constructing Knockoffs with Missing Data}\label{subsec:construction}

We extend the concept of knockoffs to the missing data setting. 
To control the variable selection error with the knockoff procedure introduced below, a Stronger MAR condition is needed. It is called the SMAR condition as introduced in Definition~\ref{def:smar} below. 
{\begin{defi}[SMAR condition]\label{def:smar}
Let $\XX_i = (\XX_{i1}^\top, ..., \XX_{ip}^\top)^\top$, such that $\XX_{ij} = \WW_{ij}$ if $j = 1, ..., p_1$ and $\XX_{ij} = Z_{ij}$ otherwise. 
Consider the conditional distribution of  $\AAA_i$ given $\XX_i$.  Let $q(\aaa \vert \xx_i)$ denote the conditional probability mass function of $\AAA_i$ given $\XX_i$. We say the SMAR condition holds with respect to the non-null variables $\mathcal S^*$, if $q(\aaa \vert \xx) = q(\aaa \vert \xx')$ holds, for any $\aaa$, $\xx = (\xx_{1}^\top, ..., \xx_{p_1}^\top)^\top$, and $\xx' = ({\xx_{1}'}^\top, ..., {\xx_p'}^\top)^\top$ satisfying $\{\xx_j: j\in \{1, ..., p\} \cap \mathcal S^*\} = \{\xx'_j: j\in \{1, ..., p\} \cap \mathcal S^*\}$.
\end{defi}}
{This SMAR condition says that the 
probability of being missing is the same within groups defined by the observed non-null variables. 
It is stronger than MAR because MAR only requires   
 $q(\aaa \vert \xx) = q(\aaa \vert \xx')$ to hold, for any $\aaa$, $\xx = (\xx_{1}^\top, ..., \xx_{p_1}^\top)^\top$, and $\xx' = ({\xx_{1}'}^\top, ..., {\xx_p'}^\top)^\top$ satisfying
$\{\xx_i: i\in \aaa\} = \{\xx_i': i\in \aaa \}$, i.e.,   the 
probability of being missing is the same within groups defined by the observed variables, regardless of whether they are in $\mathcal S^*$ or not. On the other hand, the SMAR condition is weaker than  Completely Missing at Random (MCAR), as MCAR implies that $q(\aaa \vert \xx) = q(\aaa \vert \xx')$ for all  $\aaa$, $\xx$, and $\xx'$. }
{Throughout the rest, the SMAR condition is assumed.}

{\begin{defi}[Knockoffs]\label{def:knockoff}
     Suppose that the SMAR condition in Definition~\ref{def:smar} holds and {the true values of $\Xi$, $\Lambda_1$, ..., $\Lambda_{p_1}$, $\bbb$, $\beta_0$, $\sigma^2$, and $\Delta$ are known}. 
     Under the setting in Section~\ref{sec:review}, we say that $(\WWk_i, \ZZko_i)$ is a knockoff copy of $(\WW_i, \ZZo_i)$, if there exists a random vector $\WWk_i = (\WWk_{i1}^\top, ..., \WWk_{ip_1}^\top)^\top$, where $\WWk_{ij} = (\WWk_{ij1}, \ldots, \WWk_{ijl_j})^\top$ for each $j = 1, \ldots, p_1$, as well as underlying variables $\ZZ^*_i = (Z^*_{i1}, ..., Z^*_{ip})^\top$ and $\tilde \ZZ_i^* = (\tilde Z_{i1}^*, ..., \tilde Z_{ip}^*)^\top$, such that
    \begin{enumerate}
    
        \item $\ZZo_i = \{F_j(Z_{ij}^*): j \in \AAA_i\}$ and $\tilde\ZZ^{obs}_i = \{F_j(\tilde Z_{ij}^*): j \in \AAA_i\}$;  

        \item { $\YYo_i, \WW_i$, and $\ZZk^*_{i}$ are conditionally independent given $\ZZ_i^*$;

        { 
        \item Let $\ZZ_i = (F_1(Z_{i1}^*, ..., F_p(Z_{ip}^*))^\top$. Then {$\ZZ_i$ follows the Gaussian copula model~\eqref{eq:density}, $\WW_{ij}$ given $Z_{ij}$ follows the conditional model \eqref{eq:WcondZ} for each $j=1, \ldots, p_1$, and $\YYo_i$ given $\ZZ_i$ follows the conditional model \eqref{eq:Yobsmodel}.}

        \item Let $\ZZk_i = (F_1(\tilde Z_{i1}^*), ..., F_p(\tilde Z_{ip}^*))^\top$. Then for each $j=1, \ldots, p_1$, 
        the conditional probability density/mass function of $\WWk_{ij}$ given $\tilde{Z}_{ij}$ is the same as \eqref{eq:WcondZ}
        }
        .
        
        \item For any subset $\mathcal S \subset \{1,...,p\}$, $(\ZZ^*_i, \tilde \ZZ^*_i)_{\text{swap}(\mathcal S)}$ and $(\ZZ^*_i, \tilde \ZZ^*_i)$ are identically distributed. }
    \end{enumerate}
\end{defi}}
Recall that $F_j(\cdot)$ is the transformation between each predictor and its underlying variable. In addition, the vector {$(\ZZ^*_i, \tilde \ZZ^*_i)_{\text{swap}(\mathcal S)}$} is obtained from {$(\ZZ_i^*, \tilde \ZZ_i^*)$  }
by swapping the entries {$Z_{ij}^*$} and {$\tilde Z_{ij}^*$} for each $j \in \mathcal S$; for example, with $p = 3$ and $\mathcal S = \{1,3\}$, {$(Z^*_{i1}, Z^*_{i1}, Z^*_{i3}, \tilde Z^*_{i1}, \tilde Z^*_{i2}, \tilde Z^*_{i3})_{\text{swap}(\{1,3\})} = (\tilde  Z^*_{i1}, Z^*_{i2}, \tilde  Z^*_{i3}, Z^*_{i1}, \tilde Z^*_{i2}, Z^*_{i3})$}. 
We compare the current definition of knockoffs under a missing data setting with the standard definition for model-X knockoffs in \cite{candes2018panning}.  
The model-X knockoff framework assumes no missing data in predictors $\ZZ_i$, and $\ZZ_1$, ..., $\ZZ_N$ are independent and identically distributed. Therefore, the definition of model-X knockoff omits the subscript $i$. On the other hand, the current analysis depends on $\AAA_i$, which differs across observations. Consequently, knockoffs are defined for each $\ZZo_i$. When there {are no unobservable predictors and} no missing data, i.e., {$p_1 = 0$ and } $\AAA_i = \{1, ..., p\}$, $i = 1, ..., N$, the current definition coincides with the definition in \cite{candes2018panning}.  Note that stronger conditions are needed for the construction of knockoffs when there exist missing data. These conditions (e.g., SMAR) are needed to ensure that the joint distribution of 
{ $\YYo_i$, $\AAA_i$, $\WW_i$, $\ZZo_i$, $\WWk_i$, and $\ZZko_i$ 
remains identical when swapping the null indices}, which is essential for establishing 
the exchangeability property \citep{candes2018panning}
for controlling variable selection error. 
{Specifically, 
under Definition~\ref{def:knockoff}, $(\WWk_i, \ZZko_i)$ and $\YYo_i$ are likely not conditionally independent given $(\WW_i, \ZZo_i)$.}
Consequently, 
when constructing the knockoff variables {$(\WWk_i, \ZZko_i)$}, one needs information  from not only {$(\WW_i, \ZZo_i)$} but also {$\YYo_i$}, to compensate for the missing information. In other words, 
the joint distribution of {$\YYo_i, \WW_i$, and $\ZZo_i$} is needed to construct {$(\WWk_i, \ZZko_i)$}. 

In what follows, we present an algorithm for constructing knockoffs {$(\WWk_i, \ZZko_i)$} under Definition~\ref{def:knockoff}. 
  To ensure the exact satisfaction of 
Definition~\ref{def:knockoff}, we assume that the true model parameters are known. In practice, we plug an estimate of the parameters into the algorithm; see Section~\ref{subsec:robust} for theoretical justifications and further discussions.

\begin{algorithm}[Constructing knockoff copies]\label{alg:knockcons2}~
	\begin{itemize}
		\item[]{\bf Input:} Observed data {$\YYo_i$, $\WW_i,$} and $\ZZo_i$, $i = 1, ..., N$, the {true model parameters $\Xi$} of the Gaussian copula model, {the true parameters $\Lambda_1$, ..., $\Lambda_{p_1}$ in the measurement models for $Z_{ij}$,} and the {true parameters $\bbb, \beta_0, \sigma^2, \Delta$ in the conditional model of $\YY_i$ given $\ZZ_i$}. 
		
		\item[]{\bf Step 1:} Sample underlying variables $\ZZ_i^*$
		from  their conditional distribution given {$\YYo_i$, $\WW_i,$ and $\ZZo_i$}.
    \item[]{\bf Step 2:} Sample $\ZZk_i^*$ given $\ZZ_i^*$, where $(\ZZ_i^*,\ZZk_i^*)$ jointly follows a multivariate normal distribution with mean zero and covariance matrix
		\begin{equation}\label{eq:jointcov}
		\GG = 	\begin{pmatrix} \SSigma &  \SSigma-\SS\\  \SSigma-\SS & \SSigma\end{pmatrix},
		\end{equation}
		where $\SSigma$ is the correlation matrix in the Gaussian copula model, and $\SS$ is a diagonal matrix specified in such a way that the joint covariance matrix $\GG$ is positive semidefinite. The construction of $\SS$ is based on the   Minimize the Reconstructability (MVR) procedure   \citep{spector2022powerful}.
		
		\item[]{\bf Step 3:} Obtain $\ZZk_i$ from $\ZZk^*_i$, where $\tilde Z_{ij} = F_j(\tilde Z^*_{ij})$ for each $j = 1,\ldots, p$.  

            {\item[]{\bf Step 4:} Sample $\WWk_{ij}$ from the conditional distribution $q_j(\cdot\vert\tilde{Z}_{ij}; \Lambda_j)$ for each $j = 1,\ldots, p_1$.}
		
		\item[]
		{\bf Output:} Knockoff copy {$(\WWk_{i}, \ZZko_i)$}, where $\ZZko_i = \{\tilde Z_{ij}^{obs}: j\in \AAA_i\}$.
	\end{itemize}	
\end{algorithm}

\begin{pro}\label{prop:knockZmiss}
	The output {$(\WWk_{i}, \ZZko_i)$} from  Algorithm~\ref{alg:knockcons2} satisfies Definition~\ref{def:knockoff}. 
\end{pro}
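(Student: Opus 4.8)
The plan is to exhibit the complete-data pair $(\ZZ_i, \ZZk_i)$ constructed implicitly by Algorithm~\ref{alg:knockcons2} and to verify the four requirements of Definition~\ref{def:knockoff} for it. The backbone of the argument is a single structural fact about the model of Section~\ref{sec:review}: since the structural model makes $\theta_i$, and hence $\YY_i$, depend on the predictors only through the categories $g_j(Z_{ij})$, and the missingness obeys MAR, the underlying Gaussian vector $\ZZ_i^*$, the responses $\YY_i$, and the pattern $\AAA_i$ are mutually conditionally independent given $\ZZ_i$; in particular $\YY_i \perp \ZZ_i^* \mid \ZZ_i$ and $\AAA_i \perp \ZZ_i^* \mid \ZZ_i$. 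I would record this first, writing the augmented joint law of $(\ZZ_i^*, \YY_i, \AAA_i)$ as $\phi(\ZZ_i^*\mid\SSigma)\, f(\YY_i\mid\ZZ_i)\, q(\AAA_i\mid \zzo_i)$ with $\ZZ_i = (F_1(Z_{i1}^*),\dots,F_p(Z_{ip}^*))$, and reading off the factorisation after conditioning on $\ZZ_i = \zz$.

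First I would argue that Step~1 produces a legitimate draw from the true posterior of the missing information. Because $\YY_i \perp \AAA_i \mid \ZZ_i$ and MAR makes $q(\aaa\mid\zz)$ a function of $\zzo$ only, a short Bayes computation gives $P(\ZZ_i^* \mid \ZZo_i, \YY_i) = P(\ZZ_i^* \mid \ZZo_i, \YY_i, \AAA_i)$, so sampling $\ZZ_i^*$ from the former coincides with drawing it from its full conditional given everything observed. Consequently the imputed tuple $(\ZZ_i^*, \ZZ_i, \YY_i, \AAA_i)$ has exactly the augmented joint law above. This immediately yields Condition~1 (the imputed $\ZZ_i$ agrees with $\ZZo_i$ on $\AAA_i$, and the output is $\{\tilde Z_{ij}: j\in\AAA_i\}$ by definition) and Condition~3 (the reconstructed $\ZZ_i$ follows the Gaussian copula model and $\YY_i \mid \ZZ_i$ follows the latent regression model, since the whole tuple matches the generative model).

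Next I would handle Conditions~2 and~4. Step~2 draws $\ZZk_i^*$ from the Gaussian conditional of $\GG$ using fresh randomness, so $\ZZk_i^* \perp (\YY_i,\AAA_i) \mid \ZZ_i^*$; integrating out $\ZZ_i^*$ against its conditional law given $\ZZ_i$ and invoking $(\YY_i,\AAA_i)\perp \ZZ_i^*\mid \ZZ_i$ from the first step pushes this down to $\ZZk_i^* \perp (\YY_i,\AAA_i)\mid \ZZ_i$. Combined with $\YY_i\perp\AAA_i\mid\ZZ_i$ and the fact that $\ZZk_i$ is a coordinatewise function of $\ZZk_i^*$, this gives the mutual conditional independence of $\ZZk_i$, $\YY_i$, $\AAA_i$ given $\ZZ_i$, i.e.\ Condition~2. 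For Condition~4 I would note that, since the imputation restores the correct marginal $\ZZ_i^*\sim N(\mathbf 0,\SSigma)$, Step~2 makes $(\ZZ_i^*,\ZZk_i^*)\sim N(\mathbf 0,\GG)$ exactly; the diagonal choice of $\SS$ makes $\GG$ invariant under the swap of any coordinate subset $\mathcal S$ (the standard Gaussian model-X identity of \cite{candes2018panning}), so $(\ZZ_i^*,\ZZk_i^*)_{\text{swap}(\mathcal S)}\eqd(\ZZ_i^*,\ZZk_i^*)$. Applying the common map $F$, which commutes with the swap because the same $F_j$ is used for $Z_{ij}$ and $\tilde Z_{ij}$, transfers exchangeability to $(\ZZ_i,\ZZk_i)$.

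The step I expect to be the main obstacle is Condition~2, precisely because Step~1 deliberately conditions on $\YY_i$ when imputing $\ZZ_i^*$ (the paper emphasises that $\YY_i$ is needed to compensate for the missing information). One must show this use of $\YY_i$ leaves no residual dependence between the knockoff and $\YY_i$ after conditioning on $\ZZ_i$. The resolution rests entirely on the model property $\YY_i\perp \ZZ_i^*\mid\ZZ_i$: although $\YY_i$ influences which $\ZZ_i^*$ is drawn, it does so only through $\ZZ_i$, so conditioning on $\ZZ_i$ erases that channel and restores the independence the exchangeability theory requires. Verifying this factorisation carefully, rather than the routine Gaussian swap computation underlying Condition~4, is where the real content lies.
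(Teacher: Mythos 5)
Your proposal is correct and follows essentially the same route as the paper's proof: conditions 1 and 3 from the fact that Step~1 samples $\ZZ_i^*$ from its true posterior given $(\ZZo_i,\YY_i)$, condition 2 from the conditional independence of the freshly sampled $\ZZk_i^*$ together with the MAR factorisation $p(\YY_i,\AAA_i\vert\ZZ_i)=p(\AAA_i\vert\ZZ_i)\,p(\YY_i\vert\ZZ_i)$, and condition 4 from the Gaussian swap identity pushed through the common link map $F$. The only difference is that you spell out two points the paper dismisses as ``easy to see'' --- that MAR lets one ignore $\AAA_i$ when imputing $\ZZ_i^*$, and that $\ZZk_i^*\perp(\YY_i,\AAA_i)\mid\ZZ_i$ follows by integrating out $\ZZ_i^*$ --- which is a welcome tightening rather than a departure.
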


The proof of this proposition is given in the supplementary material. 
Figure~\ref{fig:LSGCknockoff} below gives the path diagram for the generation of knockoff copies.  We provide several remarks on the algorithm.
This algorithm allows for mixed types of predictors 
under a Gaussian copula model, which
extends the multivariate Gaussian model for knockoff construction considered in \cite{barber2015controlling} and \cite{candes2018panning}. 
When all the predictors are continuous, the Gaussian copula model degenerates to the multivariate Gaussian model. In that case, and if there is no missing data, then Algorithm~\ref{alg:knockcons2} coincides with the knockoff construction method in \cite{candes2018panning}, except that \cite{candes2018panning} uses the Mean Absolute Correlation (MAC) procedure to construct the $\SS$ matrix. 
 
\begin{figure}[ht]
	\centering
	\adjustbox{max height = 0.15\textheight}{
\includegraphics{./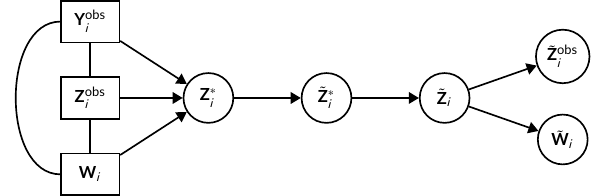}
}
	
\caption{Path diagram for constructing {$(\tilde{\WW}_i, \ZZko_i)$.}}\label{fig:LSGCknockoff}	   
\end{figure}

  

When {$\WW_i$ or $\ZZo_i$} contains binary or ordinal variables, the sampling of $\ZZ_i^*$ is not straightforward.  However, we can obtain approximate samples via Gibbs sampling. Thanks to the underlying multivariate normality assumption, each step of the Gibbs sampler only involves sampling from univariate normal or truncated normal distributions.  Details of the Gibbs sampler are given in the supplementary material. 
	We compute the diagonal matrix $\SS$ in Step~2 of the algorithm using the  MVR procedure   \citep{spector2022powerful}, which tends to be more powerful than the MAC procedure adopted in 
	 \cite{barber2015controlling} and \cite{candes2018panning}.  

\subsection{Variable Selection via Derandomised Knockoffs}

We now describe a knockoff procedure for variable selection with a controlled PFER. Suppose that knockoff copies {$(\WWk_{i}, \ZZko_i)$}, $i = 1, ..., N$, have been obtained using Algorithm~\ref{alg:knockcons2}. For ease of exposition, we denote 
$\ZZo = \{\ZZo_i\}_{i=1}^N$, $\ZZko = \{\ZZko_i\}_{i=1}^N$, {$\WW = \{\WW_i\}_{i=1}^N$, $\WWk = \{\WWk_i\}_{i=1}^N$,  and $\YYo = \{\YYo_i\}_{i=1}^N$.} We define a knockoff statistic that measures the importance of each predictor. 
	\begin{defi}[Knockoff statistic]\label{def:feature}
		Consider a statistic {$T_j$} taking the form 
  $T_j = t_j\left({(\WW, \ZZo)}, {(\WWk, \ZZko)}, \YYo\right)$
		for some function {$t_j$}, where {$(\WWk, \ZZko)$} are knockoffs satisfying Definition~\ref{def:knockoff}.
		This statistic is called a knockoff statistic for the $j$th predictor
		if it satisfies the \textit{flip-sign property}; that is for any subset $\mathcal S \subset\{1, \ldots, p\}$,
            \begin{equation}
			t_{j}\left({\{(\WW, \ZZo), (\WWk, \ZZko)\}_{\mathrm{swap}(\mathcal S)}}, \YYo\right) = \begin{cases}t_{j}\left({(\WW, \ZZo), (\WWk, \ZZko)}, \YYo\right), & j \notin \mathcal S, \\ -t_{j}\left({(\WW, \ZZo), (\WWk, \ZZko)}, \YYo\right), & j \in \mathcal S,\end{cases}
		\end{equation}
        where {$\left\{(\WW, \ZZo), (\WWk, \ZZko)\right\}_{\mathrm{swap}(\mathcal S)}$ }is obtained by swapping 
        \begin{itemize}
            \item[]{(1)} the entries of $\WW_{ij}$ and $\WWk_{ij}$  for each~$j \in \mathcal S \cap \{1, 2, \ldots, p_1\}$, $i=1, ..., N$;
            \item[]{(2)} the entries $Z_{ij}$ and $\tilde Z_{ij}$ for each~$j \in \mathcal S \cap \AAA_i$, $i=1, ..., N$.
        \end{itemize}
        
	\end{defi}
	
	The flip-sign property in Definition~\ref{def:feature} 
	is key to guaranteeing valid statistical inference from finite samples. However, to achieve a good power, {$T_j$} should also provide evidence regarding whether $\|\bbb_j\| = 0$. See Section~3 of \cite{candes2018panning} for a generic method of constructing {$T_j$} and specific examples. 
	In this study, we will focus on knockoff statistics constructed based on the likelihood function. 
	More specifically, we incorporate the knockoff variables into {the general latent variable model} defined in Section~\ref{sec:review}. That is, the measurement model remains the same, while the structural model becomes 
\begin{equation}\label{eq:struc2}
	\theta_i \vert \ZZ_i, \ZZk_i \sim N(\beta_0 + \bbb_1^\top g_1(Z_{i1}) + \cdots + \bbb_p^\top g_p(Z_{ip}) + \rrr_1^\top g_1(\tilde Z_{i1}) + \cdots + \rrr_p^\top g_p(\tilde Z_{ip}), \sigma^2),
	\end{equation}
	where $\ZZ_i$ and  $\ZZk_i$ are defined in Definition~\ref{def:knockoff}. {Since $\tilde \ZZ_{i}^*$ and $\YYo_i$ are conditionally independent given $\ZZ^*_i$, the true value of $\rrr_j$ is $\mathbf 0$, $j=1, ..., p$}, though these parameters will be estimated when constructing the knockoff statistics.  {The general latent variable model corresponding to this is depicted by the path diagram shown in Figure~\ref{fig:latentreg2} .}
 
\begin{figure}
  \centering
  \adjustbox{max height = 0.3\textheight}{
\includegraphics{./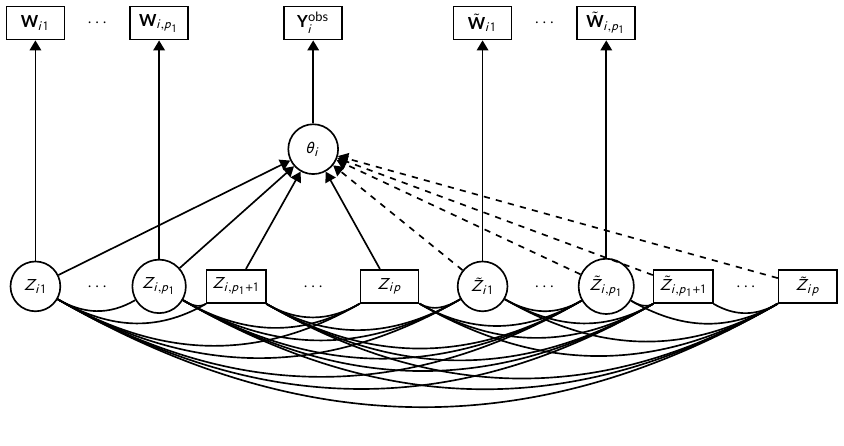}
}

\caption{
    Path diagram for {the general latent variable model involving knockoff variables.} 
    The interpretation is similar to that of Figure~1. 
     The directed edges from the knockoff variables {$\tilde{Z}_{ij}$s} to $\theta_i$ are drawn with dashed lines, as the true values of the corresponding coefficients are zero.
    }

    \label{fig:latentreg2}
\end{figure}

{Suppose that the values of $\Xi$, $\Lambda_1$, ..., $\Lambda_{p_1}$, $\beta_0$, $\sigma^2$, and $\Delta$ are known.} 
The likelihood function for $(\boldsymbol{\beta}, \boldsymbol{\gamma})$ under
this extended latent regression model takes the form
\small{\begin{equation}\label{eq:likelihood}
\begin{aligned}
    {\tilde l_2  (\bbb, \rrr)} =&\sum_{i=1}^N \log \left\{\int \cdots \int \left[ \left(\prod_{j\notin \AAA_i }~dz_{ij}\right) \left(\prod_{j\notin \AAA_i }~d\tilde z_{ij}\right) \right.\right.\times \\
    &\left.\left. f(\zz_i, \zzk_i \vert \Xi) { \left(\prod_{j=1}^{p_1} q_j(\mathbf w_{ij}\vert z_{ij};  \Lambda_j)\right)\left(\prod_{j=1}^{p_1} q_j(\tilde{\mathbf w}_{ij}\vert \tilde{z}_{ij};  \Lambda_j)\right)} f_i(\yyo_i \vert \zz_i, \zzk_i; \bbb,\rrr, {\beta_0,} \sigma^2, \Delta)\right]\right\}.
\end{aligned}
\end{equation}}
Here, {$f_i(\yyo_i \vert \zz_i, \zzk_i; \bbb,\rrr, \sigma^2, \Delta)$} is the density function of the conditional distribution of {$\YYo_i$ } given $\ZZ_i = \zz_i$ and $\ZZk_i = \zzk_i$; that is, 
\small{\begin{equation*}
    \begin{aligned}
      &f(\yyo_i \vert \zz_i, \zzk_i; \bbb,\rrr, {\beta_0,} \sigma^2,\Delta) =\\
      & \frac{1}{\sqrt{2\pi\sigma^2}} \int \cdots \int \left[d\theta_i (\prod_{j \notin \BBB_i}dy_{ij})  \times\right.
      \left.h(\yy_i\vert \theta_i; \Delta) \exp\left(-\frac{(\theta_i - (\beta_0 + \bbb_1^\top g_1(z_{i1}) + ... + \rrr_1^\top g_1(\tilde z_{i1}) + ... + \rrr_p^\top g_p(\tilde z_{ip}))^2}{2\sigma^2}\right)\right]. 
    \end{aligned}
\end{equation*}}
In addition, $f(\zz_i, \zzk_i \vert \Xi)$ denotes the density function of the Gaussian copula model for $(\ZZ_i, \ZZk_i)$, noting that this density function is completely determined by the parameters $\Xi$ of the Gaussian copula model for $\ZZ_i$; see the supplementary material for the specific form of $f(\zz_i, \zzk_i \vert \Xi)$.

A knockoff statistic {$T_j$} is constructed based on {$\tilde l_2(\bbb, \rrr)$}. 	Specifically, consider the maximum likelihood estimator based on {$\tilde l_2(\bbb, \rrr)$
\begin{equation}\label{eq:latregest}
(\tilde \bbb, \tilde \rrr) = \argmax_{\bbb, \rrr} ~ \tilde l_2(\bbb, \rrr).
\end{equation}
}	
Then a knockoff statistic can be constructed as 
	\begin{equation}\label{eq:knockoff}
		{T_j}  =  \operatorname{sign}(\|\tilde \bbb_j^\dagger\|  - \|\tilde \rrr_j^\dagger\|)\max\left\{\|\tilde \bbb_j^\dagger\|/\sqrt{p_j}, \|\tilde \rrr_j^\dagger\|/\sqrt{p_j}\right\},
	\end{equation}
	where $p_j$ is the dimension of $\bbb_j$ (or equivalently that of $\rrr_j$), and $\tilde \bbb_j^\dagger = \operatorname{Cov}(g_j(Z_{ij}))^{\frac{1}{2}}\tilde\bbb_j$  and
	 $\tilde \rrr_j^\dagger = \operatorname{Cov}(g_j(Z_{ij}))^{\frac{1}{2}}\tilde\rrr_j$ are standarised coefficients.	
	\begin{pro}\label{prop:knockffstat}
	 {Assume that the values of $\Xi$, $\Lambda_1$, ..., $\Lambda_{p_1}$, $\beta_0$, $\sigma^2$, and $\Delta$ are known and the knockoffs satisfy Definition~\ref{def:knockoff}. Then}   
  {$T_j$}  given by  \eqref{eq:knockoff} satisfies Definition~\ref{def:feature}.
	\end{pro}
	
The proof of this proposition is given in the supplementary material. 
Similar to the estimation of the latent regression model without knockoffs, the optimization problem \eqref{eq:latregest} can be solved using a stochastic EM algorithm. We remark that the statistic \eqref{eq:knockoff} is a special case of the Lasso coefficient-difference statistic given in \cite{candes2018panning} when the Lasso penalty is set to zero. Since the sample size $N$ is often much larger than $p$ in ILSA applications, this likelihood-based knockoff statistic performs well in our simulation study and real data analysis.  For higher-dimensional settings,  a Lasso coefficient-difference statistic may be preferred; see \cite{candes2018panning}. 
	  
	We now adapt the derandomized knockoff method \citep{ren2021derandomizing} to the current problem. This method achieves PFER control by aggregating	the results from multiple runs of a baseline algorithm proposed in \cite{janson2016familywise}. This baseline algorithm is summarised in Algorithm~\ref{alg:pfer2} below. 

{
    \begin{algorithm}[Baseline algorithm for PFER control \citep{janson2016familywise}]\label{alg:pfer2}~
	\begin{itemize}
		\item[]{\bf Input:} Observed data {$\YYo, \WW,$ and $\ZZo$}, a PFER level $\nu \in \mathbb Z_+$,  the {true model parameters $\Xi$} of the Gaussian copula model, {the true parameters $\Lambda_1$, ..., $\Lambda_{p_1}$ in the measurement models for $Z_{ij}$,} and the {true parameters $\bbb, \beta_0, \sigma^2, \Delta$ in the conditional model of $\YY_i$ given $\ZZ_i$}.

        \item[]{\bf Step 1:} Generate knockoffs $(\WWk, \ZZko)$ using Algorithm~\ref{alg:knockcons2}.

        \item[]{\bf Step 2:} 
        Compute a set of knockoff statistics {$T_1, ..., T_p$} 
         {using equations \eqref{eq:latregest} and \eqref{eq:knockoff}}.
         
        \item[]{\bf Step 3:} Compute the threshold 
		$\tau = \inf \left\{t>0: 1+|\left\{j: {T_{j}} < -t \right\}|=v\right\}.$
		We let $\tau = -\infty$  if the set on the right-hand side is an empty set. 
		
		
		\item[]
		{\bf Output:} $\hat{\mathcal S} = \{j: {T_{j}} > \tau\}$.
		
	\end{itemize}
	
\end{algorithm}
}

\begin{pro}\label{prop:pfer0}
$\hat{\mathcal S}$ given by Algorithm~\ref{alg:pfer2} satisfies $\mathbb E|\hat{\mathcal S}\setminus \mathcal S^*| \leq   \nu$, 
i.e.,  the PFER can be controlled at level $\nu$.  

\end{pro}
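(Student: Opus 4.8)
The plan is to derive the bound from two ingredients: a distributional symmetry of the knockoff statistics under sign flips of the null coordinates, and an optional-stopping argument for the data-dependent threshold $\tau$. Throughout, write $V = |\hat{\mathcal S}\setminus \mathcal S^*| = |\{j \notin \mathcal S^*: W_j > \tau\}|$ for the number of false selections, so the goal is $\mathbb E[V] \le \nu$.

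First I would establish the symmetry lemma: conditional on the magnitudes $(|W_1|, \ldots, |W_p|)$ and on the non-null signs $\{\operatorname{sign}(W_j): j \in \mathcal S^*\}$, the null signs $\{\operatorname{sign}(W_j): j \notin \mathcal S^*\}$ are i.i.d.\ uniform on $\{-1, +1\}$. This combines two inputs. The flip-sign property (Definition~\ref{def:feature}) says that swapping a null subset $\mathcal S \subseteq \{1,\ldots,p\}\setminus\mathcal S^*$ negates exactly the statistics $\{W_j : j \in \mathcal S\}$ and leaves the rest fixed. The knockoff construction (Definition~\ref{def:knockoff}, Proposition~\ref{prop:knockZmiss}) supplies the matching distributional statement: since $\YY$ depends on the predictors only through the non-null coordinates and $\ZZk_i$ is conditionally independent of $\YY_i$ given $\ZZ_i$, swapping null coordinates leaves the conditional law of $\YY$ unchanged, while the swap preserves the joint law of $(\ZZ_i,\tilde\ZZ_i)$; here the SMAR condition is what guarantees that this invariance survives the missingness pattern $\AAA_i$. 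Hence $([\ZZo,\ZZko]_{\mathrm{swap}(\mathcal S)}, \YY) \eqd ([\ZZo, \ZZko], \YY)$ for every null subset $\mathcal S$, so that $(W_1, \ldots, W_p) \eqd (\epsilon_1 W_1, \ldots, \epsilon_p W_p)$ whenever $\epsilon_j = +1$ for $j \in \mathcal S^*$ and $\epsilon_j = \pm 1$ for $j \notin \mathcal S^*$; averaging over all such sign patterns yields the stated conditional law.

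Given the lemma, I would run an optional-stopping argument as the threshold decreases. Scanning $t$ downward from $+\infty$, set $V^+(t) = |\{j \notin \mathcal S^*: W_j > t\}|$ and $V^-(t) = |\{j \notin \mathcal S^*: W_j < -t\}|$; as $t$ passes each null magnitude $|W_j|$ exactly one of the two counts increases by one, and by the symmetry lemma this increment lands on $V^+$ or $V^-$ according to an independent fair coin. Consequently $M(t) = V^+(t) - V^-(t)$ is a martingale in reverse time, started at $0$, with increments in $\{-1,0,+1\}$. The threshold $\tau$ is a stopping time for this filtration, being the first $t$ at which the total negative count reaches $\nu-1$; in particular $V^-(\tau) \le |\{j: W_j < -\tau\}| = \nu - 1$. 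Optional stopping then relates $\mathbb E[V^+(\tau)]$ to $\mathbb E[V^-(\tau)]$, and since on the event that $\tau$ is finite the false selections are exactly $V^+(\tau)$, combining with $V^-(\tau)\le \nu-1$ delivers $\mathbb E[V] \le \nu$.

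The step I expect to require the most care is the treatment of the data-dependent threshold together with the degenerate convention $\tau = -\infty$ when the defining set is empty. I must verify that $\tau$ is measurable with respect to the reverse-time filtration (so optional stopping applies) and that ties among the $|W_j|$, and statistics equal to zero, are broken consistently. Most delicate is the boundary event in which fewer than $\nu-1$ negative statistics ever appear: there the martingale identity picks up an extra term corresponding to $\{\tau=-\infty\}$, which must be bounded directly. It is precisely the extra ``$1+$'' in the threshold of Algorithm~\ref{alg:pfer2}, forcing $V^-(\tau)\le\nu-1$ in the main case, that leaves the slack absorbing this boundary contribution and fixes the final constant at $\nu$. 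Once these bookkeeping points are settled, the inequality $\mathbb E|\hat{\mathcal S}\setminus \mathcal S^*| \le \nu$ follows, in agreement with the PFER guarantee of \cite{janson2016familywise}.
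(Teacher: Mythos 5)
Your argument is correct in its core but takes a genuinely different route from the paper. The paper never proves Proposition~\ref{prop:pfer0} directly: it obtains it as the special case $\QQ=\PP$, $\epsilon=0$ of Theorem~\ref{thm:robust}, whose proof deliberately avoids any joint symmetry statement. There, Lemma~\ref{lem:Wcontrol} supplies only the one-coordinate-at-a-time bound $\PP(W_j>0\mid |W_j|,\WW_{-j})\le e^{\epsilon}\,\PP(W_j<0\mid |W_j|,\WW_{-j})$, and the counting is done with the leave-one-out thresholds $T_j$ and the identity $\sum_{j\notin\SSS^*}\EE\bigl[\III(W_j<-T_j)/\sum_{k\notin\SSS^*}\III(W_k<-T_k)\bigr]=1$, in the style of \cite{barber2020robust}. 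You instead invoke the full joint lemma---all null signs i.i.d.\ fair coins given the magnitudes and the non-null signs---and then run the reverse-time martingale and optional-stopping argument of \cite{barber2015controlling} and \cite{janson2016familywise}. Both are legitimate. The trade-off is that your route needs swap-invariance of $([\ZZo,\ZZko],\YY)$ for \emph{arbitrary} null subsets, which in the missing-data setting has to be re-derived from Definition~\ref{def:knockoff}, SMAR, and the conditional independences (the paper only ever verifies the singleton swap, at the end of the proof of Theorem~\ref{thm:robust}, though the multi-coordinate version follows by the same computation); in exchange, the paper's per-coordinate formulation is exactly the one that survives replacing $\PP$ by an estimated $\QQ$, which is why Theorem~\ref{thm:robust} comes essentially for free there while the coin-flip/martingale structure does not extend to approximate knockoffs.

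One concrete step does not go through as you state it: the boundary event $\tau=-\infty$. You assert that the ``$1+$'' slack, which gives $\EE[V^+(\tau)]\le\nu-1$ in the main case, absorbs the contribution of this event. But if $\tau=-\infty$ literally means that every variable is selected (as Algorithm~\ref{alg:pfer2} is written), then on that event the false selections are \emph{all} nulls, not your $V^+(\tau)$, and the excess over the martingale bound is of order $(\nu-2)\,\PP(\tau=-\infty)$, which is not bounded by $1$ when $\nu$ is comparable to the number of nulls; a direct computation with $p=12$ all-null variables and $\nu=8$ gives $\EE|\hat{\SSS}\setminus\SSS^*|\approx 8.7>\nu$. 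The argument closes only if the empty-set convention is read as selecting no $j$ with $W_j\le 0$ (equivalently $\tau\to 0^{+}$), in which case $V=V^+(\tau)$ on every event, $V^-(\tau)\le\nu-1$ throughout, and optional stopping yields $\EE[V]=\EE[V^-(\tau)]\le\nu-1$. To be fair, the paper's own proof has the identical loose end---the step $1+\sum_{j=1}^{p}\III(W_j<-\tau)\le\nu$ fails verbatim at $\tau=-\infty$---so this is a shared bookkeeping issue with the algorithm's stated convention rather than a defect unique to your approach; but ``the slack absorbs it'' is not a proof, and you should either fix the convention or bound the boundary term explicitly.
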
 

{
\begin{algorithm}[Derandomised knockoffs \citep{ren2021derandomizing}]\label{alg:derandom2}~
	\begin{itemize}
		\item[]{\bf Input:}  Observed data {$\YYo, \WW,$ and $\ZZo$}, the number of runs $M$ of the baseline algorithm,  a selection threshold $\eta$, a PFER level $\nu \in \mathbb Z_+$, the {true model parameters $\Xi$} of the Gaussian copula model, {the true parameters $\Lambda_1$, ..., $\Lambda_{p_1}$ in the measurement models for $Z_{ij}$,} and the {true parameters $\bbb, \beta_0, \sigma^2, \Delta$ in the conditional model of $\YY_i$ given $\ZZ_i$}.
				
		\item[]
		{\bf Step 1:} For each $m = 1, \ldots, M$, run Algorithm \ref{alg:pfer2} independently and obtain the selection set $\hat{\mathcal S}^{(m)}$.
		
		\item[]
		{\bf Step 2:} For each $j = 1, \ldots, p$, compute the selection frequency 
		$$\Pi_{j}=\frac{1}{M} \sum_{m=1}^{M} \mathbb{I}\left(j \in \hat{\mathcal S}^{(m)}\right).$$
		
		
		\item[]
		{\bf Output:} $\hat{\mathcal S}=\left\{j \in \{1,\ldots,p\} :\Pi_{j} \geq \eta\right\}.$ 
		
	\end{itemize}
	
\end{algorithm}
}

Following the theoretical result in \cite{ren2021derandomizing} when the threshold $\eta$ is chosen properly, Algorithm~\ref{alg:derandom2} guarantees to control PFER at level $\nu$. We provide a simplified version of this result in Proposition~\ref{prop:pfer} below.


\begin{pro}\label{prop:pfer}

If for any  $\eta \in (0,1)$, the condition $\mathbb P(\Pi_j \geq \eta) \leq \mathbb E[\Pi_j]$ holds for every $j \notin \mathcal S^*$, then $\hat{\mathcal S}$ given by Algorithm~\ref{alg:derandom2} satisfies $\mathbb E|\hat{\mathcal S}\setminus \mathcal S^*| \leq   \nu$, i.e., the PFER can be controlled at level $\nu$. In particular, assuming that the probability mass function  of $\Pi_j$ is monotonically non-increasing for each $j \notin \mathcal S^*$, $\mathbb P(\Pi_j \geq \eta) \leq \mathbb E[\Pi_j]$ holds for  $M = 31$ and $ \eta = 1/2$. 
\end{pro}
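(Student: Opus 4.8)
The plan is to split the argument along the two sentences of the statement: a routine averaging argument for the general claim, and a discrete summation-by-parts argument for the concrete verification at $M=31$, $\eta=1/2$.

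For the general claim I would begin from the decomposition of the derandomised PFER into marginal false-selection probabilities. Using only the definition $\hat{\mathcal S}=\{j:\Pi_j\geq\eta\}$ and linearity of expectation,
\[
\mathbb E|\hat{\mathcal S}\setminus \mathcal S^*| = \sum_{j\notin \mathcal S^*}\mathbb P(j\in\hat{\mathcal S}) = \sum_{j\notin \mathcal S^*}\mathbb P(\Pi_j \geq \eta).
\]
Applying the hypothesis $\mathbb P(\Pi_j\geq\eta)\leq \mathbb E[\Pi_j]$ at the threshold $\eta$ used by the algorithm bounds this by $\sum_{j\notin\mathcal S^*}\mathbb E[\Pi_j]$. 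I would then unfold $\Pi_j = M^{-1}\sum_{m=1}^M \mathbb I(j\in \hat{\mathcal S}^{(m)})$, exchange the two finite sums, and recognise the inner sum $\sum_{j\notin\mathcal S^*}\mathbb P(j\in\hat{\mathcal S}^{(m)})$ as the per-run PFER $\mathbb E|\hat{\mathcal S}^{(m)}\setminus\mathcal S^*|$. By Proposition~\ref{prop:pfer0}, each run of Algorithm~\ref{alg:pfer2} controls this at $\nu$, so the average over $m=1,\ldots,M$ is again bounded by $\nu$. This part uses no structure beyond the baseline guarantee and is entirely mechanical.

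For the concrete claim the real work is to verify the hypothesis under the monotone-pmf assumption. With $M=31$, the variable $\Pi_j$ is supported on $\{0,1/31,\ldots,1\}$; writing $p_k=\mathbb P(\Pi_j=k/31)$, the assumption reads $p_0\geq p_1\geq\cdots\geq p_{31}$, and since $15/31<1/2<16/31$ the event $\{\Pi_j\geq 1/2\}$ equals $\{k\geq 16\}$. The target $\mathbb P(\Pi_j\geq 1/2)\leq \mathbb E[\Pi_j]$ rearranges to $\sum_{k=0}^{31} c_k p_k\geq 0$ with $c_k=k/31-\mathbb I(k\geq 16)$, and one checks $\sum_k c_k=0$. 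I would then apply Abel summation: setting $\Delta_l=p_l-p_{l+1}\geq 0$ (with $p_{32}=0$) yields $\sum_k c_k p_k=\sum_{l=0}^{31}\Delta_l\,C_l$ where $C_l=\sum_{k=0}^l c_k$, so because every $\Delta_l\geq 0$ it suffices to show $C_l\geq 0$ for all $l$.

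This partial-sum positivity is the crux, and the only place where the specific value $M=31$ is used. For $l\leq 15$ we have $C_l=l(l+1)/62\geq 0$ trivially, while for $16\leq l\leq 31$ a direct computation gives
\[
C_l=\frac{l(l+1)}{62}-(l-15)=\frac{l^2-61l+930}{62}=\frac{(l-30)(l-31)}{62},
\]
which is non-negative at every integer in that range, the two factors sharing a sign with roots landing exactly at $l=30,31$. Hence $C_l\geq 0$ throughout, the Abel sum is non-negative, and the hypothesis holds. I expect this step to be the main obstacle, both because it requires spotting the summation-by-parts reduction and because the integrality of the roots is precisely what makes $M=31$ the threshold rendering $\eta=1/2$ admissible; note that a naive Markov bound $\mathbb P(\Pi_j\geq\eta)\leq \mathbb E[\Pi_j]/\eta$ is too weak for $\eta<1$, so the monotonicity assumption is doing essential work.
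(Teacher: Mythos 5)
Your argument is correct, and it is worth separating the two halves. For the general claim your route is essentially the paper's: the paper obtains Proposition~\ref{prop:pfer} as the $\PP=\QQ$, $\epsilon=0$ specialisation of Theorem~\ref{thm:derandpfer}, whose proof uses exactly your decomposition $\mathbb E|\hat{\mathcal S}\setminus\mathcal S^*|=\sum_{j\notin\mathcal S^*}\mathbb P(\Pi_j\geq\eta)\leq\sum_{j\notin\mathcal S^*}\mathbb E[\Pi_j]$ followed by unfolding $\Pi_j$ and invoking the per-run guarantee; your direct appeal to Proposition~\ref{prop:pfer0} rather than to the robust Theorem~\ref{thm:robust} is the same argument stripped of the $\hat{\text{KL}}$ bookkeeping. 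For the second claim you go further than the paper, which omits the verification entirely and defers to Proposition~2 of \cite{ren2021derandomizing}: your Abel-summation proof is complete and checks out --- $\sum_k c_k=0$, $C_l=l(l+1)/62\geq 0$ for $l\leq 15$, and $C_l=(l-30)(l-31)/62\geq 0$ for $16\leq l\leq 31$ --- so you have supplied a self-contained argument where the paper has only a citation. One small caveat on your closing remark: the value $31$ is not a threshold in any sharp sense. For any odd $M=2r+1$ with $\eta=1/2$ the same computation gives $C_l=(l-M)\bigl(l-(M-1)\bigr)/(2M)$ on the upper range, so the partial sums are again non-negative; what your calculation is really detecting is that $M$ must be odd (for even $M$ one already has $\sum_k c_k<0$ and the uniform pmf is a counterexample), with $M=31$ merely the default choice inherited from \cite{ren2021derandomizing}.
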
 

While noting that other choices are possible, we set $M=31$ and $\eta = 1/2$, which is also the default choice in 
\cite{ren2021derandomizing}. We also note that the statistics $\Pi_j$, $j=1, ..., p$, rank the importance of the predictors. The predictors with $\Pi_j \geq \eta$ are selected as the non-null variables.

\subsection{A Three-step Procedure When Model Parameters are Unknown and Its Robustness}\label{subsec:robust}

The knockoff procedure described previously requires the true joint model for {$\YY_i$, $\WW_i$, and $\ZZ_i$}, which is infeasible in practice. 
When the true model is known,  the variable selection problem becomes trivial since the null and non-null variables can be directly identified from the true model. In practice, we first estimate the model parameters and then conduct variable selection based on the estimated model. This procedure involves three steps.
    First, estimate the parameters $\Xi$ in the Gaussian copula model {as well as the parameters $\Lambda_1, \ldots, \Lambda_{p_1}$ in the measurement models for $Z_{ij}$.} This is done by the maximum likelihood estimator \eqref{eq:mmlcopula}. 
Second, estimate the parameters {$\bbb$, $\beta_0$, $\sigma^2$, and $\Delta$} based on the log-likelihood \eqref{eq:loglik}, where the estimated Gaussian copula model $\hat \Xi$ {and the estimated measurement models $\hat \Lambda_1, ... ,\hat \Lambda_{p_1}$ are }
plugged in. 
{Third, select variables by plugging the estimated parameters $\hat \Xi$, $\hat \Lambda_j$s, $\hat \bbb$, $\hat \beta_0$, $\hat \sigma^2$, and $\hat{\Delta}$ into {Algorithm \ref{alg:pfer2}} or \ref{alg:derandom2}.}



Empirically,  simulation results in Section~\ref{sec:sim} show that  PFER is well controlled when we apply the above three-step procedure. 
Theoretically, by plugging into the estimated model rather than the true model, the PFER can no longer be exactly controlled as described in Propositions~\ref{prop:pfer0} and \ref{prop:pfer}. Following a similar proof strategy as in {\cite{barber2020robust}}, we show that this procedure is robust, in the sense that the resulting PFER is controlled near $\nu$ if the plug-in model is sufficiently accurate. Note that {\cite{barber2020robust}} only consider the robustness of model-X knockoffs for controlling false discovery rate and does not cover PFER. 

{More precisely, we use $\mathbb P$ and $\mathbb Q$ to denote the true and plug-in models, respectively. 
 Consider a pair of  $i$ and $j$, satisfying $j\in (\{1, ..., p_1\}\cup\AAA_i)$.
  {We consider $\XX_i$ in Definition~\ref{def:smar}.
  We further let $\XXo_{i, -j} = \{\XX_{ik}: k \in (\{1, ..., p_1\}\cup\AAA_i)\backslash\{j\}\}$.
  We also define $\XXk_i$ and $\XXko_i$ similar to $\XX_i$ and $\XXo_i$, respectively, but with  $\WW_{ij}$ replaced by $\WWk_{ij}$ and with $Z_{ij}$ replaced by $\tilde Z_{ij}$.} 
  Let
$\PP_{ij}(\xx_{ij} \vert {\xxo_{i,-j}}, \yyo_i)$  denote the conditional density function of $\XX_{ij}$ given $\XXo_{i, -j} = {\xxo_{i,-j}}$ and $\YYo_i = \yyo_i$ under the true model $\PP$.
Let $\QQ_{ij}(\xxko_{i,-j}, \xxk_{ij}   \vert  {\xxo_{i, -j}}, \xx_{ij}, \yyo_i)$ denote the conditional density function of $(\XXko_{i, -j}, \XXk_{ij})$ given $\XXo_{i, -j} = {\xxo_{i,-j}}$, $\XX_{ij} = \xx_{ij}$ and $\YYo_i = \yyo_i$ under the plug-in model $\QQ$.
We define 
$$\hat{\text{KL}}_j = \sum\limits_{i:j\in(\{1, ..., p_1\}\cup\AAA_i)} \log\left(\frac{\PP_{ij}(\XX_{ij} \vert {\XXo_{i, -j}}, \YYo_i )\cdot\QQ_{ij}(\XXko_{i,-j},\tilde \XX_{ij}   \vert  {\XXo_{i, -j}}, \XX_{ij}, \YYo_i )}{\PP_{ij}(\XXk_{ij} \vert {\XXo_{i, -j}}, \YYo_i ) \cdot \QQ_{ij}(\XXko_{i,-j}, \XX_{ij}   \vert  {\XXo_{i, -j}}, \tilde \XX_{ij}, \YYo_i )}\right).$$
Here $\{i:j\in(\{1, ..., p_1\}\cup\AAA_i)\} = \{1, ..., p\}$ if $j \in \{1, ..., p_1\}$, and $\{i:j\in(\{1, ..., p_1\}\cup\AAA_i)\} = \{i:j\in\AAA_i\}$ otherwise. Note that the numerator inside of the logarithm corresponds to the true data generation mechanism for $(\XX_{ij}, \XXko_i)$, and the denominator corresponds to that when switching the roles of $\XX_{ij}$ and $\XXk_{ij}$.
 }
$\hat{\text{KL}}_j$ 
 can be viewed as an observed Kullback–Leibler (KL) divergence 
that measures the discrepancy between the true model  $\mathbb P$ and its approximation $\mathbb Q$, with $\hat{\text{KL}}_j = 0$ when $\QQ = \PP$. We remark that this definition of $\hat{\text{KL}}_j$ is consistent with that in \cite{barber2020robust}. However, the  $\hat{\text{KL}}_j$ in \cite{barber2020robust}  can be further simplified with a pairwise exchangeable property of their procedure under a model-X knockoff setting without missing data, while this pairwise exchangeable property does not always hold for the current procedure due to the involvement of {$\YYo$} and thus, the current $\hat{\text{KL}}_j$  cannot be further simplified.

\begin{theorem}\label{thm:robust}
Under the definitions above,  for any $\epsilon \geq 0$, consider the null variables for which $\hat{\text{KL}}_j \leq \epsilon$. If we use a modified Algorithm~\ref{alg:pfer2} that generates knockoffs under the plug-in model $\mathbb Q$ {which is assumed to be independent of data}, 
then the expected number of rejections that correspond to such nulls obeys
$\mathbb E |\{j: j\in \hat{\mathcal S}\setminus \mathcal S^* \mbox{~and~} \hat{\mbox{KL}}_j \leq \epsilon\} | \leq \nu e^{\epsilon}.$
In particular, $\hat{\text{KL}}_j=0$, when $\QQ=\PP$. \end{theorem}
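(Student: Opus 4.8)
The plan is to mirror the robustness argument of \cite{barber2020robust}, but to replace their false-discovery-proportion bookkeeping with the per-family count used in Proposition~\ref{prop:pfer0}, and to track the plug-in error through the likelihood ratios $e^{\hat{\text{KL}}_j}$. Write $\mathcal H_0 = \{1,\dots,p\}\setminus\mathcal S^*$ for the null indices, let $D$ denote the full realised data $([\ZZo,\ZZko],\YY,\{\mathcal A_i\})$ obtained by drawing the originals under the true model $\mathbb P$ and the knockoffs under the plug-in model $\mathbb Q$ (via the modified Algorithm~\ref{alg:pfer2}), and set $V_\epsilon = |\{j\in\hat{\mathcal S}\cap\mathcal H_0 : \hat{\text{KL}}_j\le\epsilon\}|$. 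First I would reduce the theorem to a single ratio inequality. By the definition of $\tau$ we have $1+|\{j: W_j<-\tau\}|=\nu$, hence $1+|\{j\in\mathcal H_0 : W_j<-\tau\}|\le\nu$. Since $V_\epsilon=\sum_{j\in\mathcal H_0}\III(W_j>\tau)\,\III(\hat{\text{KL}}_j\le\epsilon)$, multiplying and dividing by $1+|\{k\in\mathcal H_0: W_k<-\tau\}|$ gives the pointwise bound
\begin{equation}
V_\epsilon \;\le\; \nu\cdot \frac{\sum_{j\in\mathcal H_0}\III(W_j>\tau)\,\III(\hat{\text{KL}}_j\le\epsilon)}{1+|\{k\in\mathcal H_0: W_k<-\tau\}|}.
\end{equation}
It therefore suffices to show that the expectation of the ratio on the right is at most $e^{\epsilon}$; taking $\epsilon=0$ (so $\mathbb Q=\mathbb P$ and every $\hat{\text{KL}}_j=0$) then recovers Proposition~\ref{prop:pfer0}.

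Second, I would set up the single-coordinate change of measure that introduces the factor $e^{\hat{\text{KL}}_j}$. For $j\in\mathcal H_0$ let $\mathrm{swap}_j$ be the involution exchanging $Z_{ij}\leftrightarrow\tilde Z_{ij}$ for all $i$ with $j\in\mathcal A_i$. By the flip-sign property (Definition~\ref{def:feature}) it sends $W_j\mapsto -W_j$ and fixes $W_k$ for $k\neq j$, and it sends $\hat{\text{KL}}_j\mapsto -\hat{\text{KL}}_j$. The key computation is that the joint density $\ell$ of $D$ (originals under $\mathbb P$, all knockoffs under $\mathbb Q$ given $(\ZZo_i,\YY_i)$) satisfies $\log\big(\ell(D)/\ell(\mathrm{swap}_j D)\big)=\hat{\text{KL}}_j(D)$: the factors for $\ZZo_{i,-j}$, $\YY_i$ and $\mathcal A_i$ are invariant under $\mathrm{swap}_j$ and cancel, leaving exactly the ratio of the true marginal $\PP_{ij}(\cdot\mid\ZZo_{i,-j},\YY_i)$ and the plug-in joint knockoff density $\QQ_{ij}(\ZZko_{i,-j},\cdot\mid\ZZo_{i,-j},\cdot,\YY_i)$ appearing in the definition of $\hat{\text{KL}}_j$. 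Because $\mathrm{swap}_j$ preserves the reference measure, this yields the identity $\mathbb E[f(D)] = \mathbb E[f(\mathrm{swap}_j D)\,e^{-\hat{\text{KL}}_j(D)}]$ for integrable $f$. This is precisely the step where the presence of $\YY_i$ forbids the per-coordinate simplification available in \cite{barber2020robust}, so the full joint form of the $\QQ$-factor must be carried throughout.

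Third, and this is where I expect the real work, I would robustify the optional-stopping argument of \cite{barber2015controlling} to handle the data-dependent threshold $\tau$. A naive term-by-term use of the swap identity fails, because $\mathrm{swap}_j$ also perturbs $\tau$ and the denominator through the sign of $W_j$. Instead I would condition on the $\sigma$-field generated by the magnitudes $(|W_k|)_{k=1}^p$, the non-null signs, and the unordered swap-pairs, and reveal the null signs in order of decreasing $|W_j|$. Under exact knockoffs these signs would be i.i.d.\ Rademacher and the ratio of the running positive and negative null counts would be a supermartingale bounded by $1$ at the stopping time $\tau$. With plug-in knockoffs the conditional law of each revealed sign is tilted by the single-coordinate ratio $e^{\hat{\text{KL}}_j}$ from Step~2; since the numerator of our ratio only counts nulls with $\hat{\text{KL}}_j\le\epsilon$, every tilt that can contribute to it is capped by $e^{\epsilon}$, which inflates the supermartingale bound from $1$ to $e^{\epsilon}$. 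The hard part will be verifying that, conditional on this revealing filtration, flipping a single null coordinate carries \emph{exactly} the likelihood ratio $e^{\hat{\text{KL}}_j}$ even though the plug-in knockoffs are generated jointly and $\hat{\text{KL}}_j$ cannot be reduced to a per-coordinate conditional (as it can when $\YY$ is absent); arranging the conditioning so that only such single-coordinate tilts enter the supermartingale increments, each uniformly bounded by $e^{\epsilon}$, is the delicate point.

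Finally I would assemble the pieces: combining the reduction of Step~1 with the $e^{\epsilon}$ bound on the ratio from Step~3 gives $\mathbb E[V_\epsilon]\le\nu e^{\epsilon}$, which is the claim, and the stated special case $\hat{\text{KL}}_j=0$ under $\mathbb Q=\mathbb P$ follows immediately. Steps~1 and~2 are essentially bookkeeping; the crux is the robustified supermartingale of Step~3.
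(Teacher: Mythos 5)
Your Steps 1 and 2 coincide with the paper's argument: the reduction $V_\epsilon \le \nu\, R_\epsilon(\tau)$ with $R_\epsilon(\tau)=\sum_{j\notin\SSS^*}\III(W_j>\tau,\hat{\text{KL}}_j\le\epsilon)\big/\bigl(1+\sum_{j\notin\SSS^*}\III(W_j<-\tau)\bigr)$ is exactly the paper's first display, and your single-coordinate swap identity is precisely the content of the paper's Lemma~\ref{lem:Wcontrol}, which is stated there in conditional form, $\PP\bigl(W_j>0,\hat{\text{KL}}_j\le\epsilon\,\big|\,|W_j|,\WW_{-j}\bigr)\le e^\epsilon\,\PP\bigl(W_j<0\,\big|\,|W_j|,\WW_{-j}\bigr)$, and proved by factoring the conditional law of the unordered pair $\{Z_{ij},\tilde Z_{ij}\}$ over the $i$ with $j\in\AAA_i$ and recognising the product of ratios as $e^{\pm\hat{\text{KL}}_j}$. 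The gap is in your Step 3. You propose to bound $\EE[R_\epsilon(\tau)]$ by an ordered-revealing supermartingale in the style of the original knockoff filter, and you yourself flag the unresolved difficulty: once you condition only on magnitudes and previously revealed signs rather than on all of $\WW_{-j}$, the tilt attached to revealing the sign of $W_j$ is no longer the clean single-coordinate ratio $e^{\hat{\text{KL}}_j}$ (the event $\{\hat{\text{KL}}_j\le\epsilon\}$ is not measurable with respect to the sign of $W_j$ alone, and restricting to that event does not commute with averaging the conditional inequality over a coarser $\sigma$-field). As written, the crucial inequality of Step 3 is asserted, not proved.

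The paper closes this step with no martingale at all. Define the leave-one-out thresholds $T_j=T(W_1,\dots,|W_j|,\dots,W_p)$. Then (i) on $\{W_j>\tau\}$ one has $\tau=T_j$, so each term of $R_\epsilon(\tau)$ can be rewritten with $T_j$ in place of $\tau$ and with the sum over $k\ne j$ in the denominator; (ii) $T_j$ and that denominator are measurable with respect to $(|W_j|,\WW_{-j})$, so Lemma~\ref{lem:Wcontrol} applies term by term and converts each numerator $\III(W_j>T_j,\hat{\text{KL}}_j\le\epsilon)$ into $e^\epsilon\,\III(W_j<-T_j)$; and (iii) the consistency property that $T_j=T_k$ whenever $\max\{W_j,W_k\}<-\min\{T_j,T_k\}$ makes the resulting sum $\sum_{j\notin\SSS^*}\III(W_j<-T_j)\big/\bigl(1+\sum_{k\notin\SSS^*,k\ne j}\III(W_k<-T_j)\bigr)$ deterministically at most $1$. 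This yields $\EE[R_\epsilon(\tau)]\le e^\epsilon$ directly; the most economical repair of your plan is to drop the sequential revealing and adopt this leave-one-out conditioning, which is exactly the $\sigma$-field for which your Step 2 identity delivers the $e^{\hat{\text{KL}}_j}$ tilt. One further small point: the claim that $\hat{\text{KL}}_j=0$ when $\QQ=\PP$ is not immediate from the definition; it requires checking that $(Z_{ij},\tilde Z_{ij},\ZZo_{i,-j},\ZZko_{i,-j},\YY_i)$ is exchangeable in its first two arguments for null $j$, which the paper derives from Proposition~\ref{prop:knockZmiss} together with the conditional independence of $\YY_i$ from $Z_{ij}$ given the remaining coordinates.
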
  
When $\PP = \QQ$, we can set $\epsilon = 0$, and thus,   Theorem~\ref{thm:robust} implies Proposition~\ref{prop:pfer0}. 
This property of robustness carries over to the derandomized procedure. We define 
$\Pi_j^\dagger =  \left(\sum_{m=1}^{M} \mathbb{I}\left(j \in \hat{\mathcal S}^{(m)} \mbox{~and~} \hat{\text{KL}}^{(m)}_j \leq \epsilon\right)\right)/M,$
where $\hat{\mathcal S}^{(m)}$ is the selection in the $m$th run of   modified Algorithm~\ref{alg:derandom2} that generates knockoffs under the plug-in model $\mathbb Q$, and 
$\hat{\text{KL}}^{(m)}_j$
is the corresponding observed KL divergence based on the knockoffs from the $m$th run. 

\begin{theorem}\label{thm:derandpfer}
Under the definitions above, for any $\epsilon \geq 0$, consider the null variables for which $\hat{\text{KL}}_j^{(m)} \leq \epsilon$ for all $m=1, ..., M$. We use a modified Algorithm~\ref{alg:derandom2} where knockoffs are generated under the plug-in model $\mathbb Q$ which is assumed to be independent of data, and obtain selections $\hat{\mathcal S}$.  
If the condition 
$\mathbb P(\Pi_j^\dagger \geq \eta) \leq \mathbb E[\Pi_j^\dagger]$
holds for every $j \notin \mathcal S^*$,  then
$\mathbb E |\{j: j\in \hat{\mathcal S}\setminus \mathcal S^* \mbox{~and~} \hat{\text{KL}}_j^{(m)}  \leq \epsilon,  m =1, ..., M\} | \leq \nu e^{\epsilon}.$

If the probability mass function  of $\Pi_j^\dagger$ is monotonically non-increasing for each $j \notin \mathcal S^*$, 
$\mathbb P(\Pi_j^\dagger \geq \eta) \leq \mathbb E[\Pi_j^\dagger]$
holds for  $M = 31$ and $ \eta = 1/2$. 
\end{theorem}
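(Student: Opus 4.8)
The plan is to reduce Theorem~\ref{thm:derandpfer} to two ingredients already in place: the single-run robustness bound of Theorem~\ref{thm:robust}, and the derandomisation aggregation argument underlying Proposition~\ref{prop:pfer}. The corrected selection frequency $\Pi_j^\dagger$ is designed precisely so that, for the plug-in analysis, it plays the role that $\Pi_j$ plays in the exact analysis. First I would record that the quantity to be controlled can be written as
\[
\mathbb{E}\left|\{j: j\in \hat{\mathcal S}\setminus \mathcal S^* \text{ and } \hat{\text{KL}}_j^{(m)} \leq \epsilon,\ m=1,\ldots,M\}\right| = \sum_{j\notin \mathcal S^*} \mathbb{E}\left[\mathbb{I}(\Pi_j\geq \eta)\,\mathbb{I}(E_j)\right],
\]
where $E_j=\{\hat{\text{KL}}_j^{(m)}\leq\epsilon \text{ for all } m\}$ and I have used $\hat{\mathcal S}=\{j:\Pi_j\geq\eta\}$.

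The key reduction is the observation that on $E_j$ the two selection frequencies coincide. Indeed, whenever $\hat{\text{KL}}_j^{(m)}\leq\epsilon$ holds, the indicator $\mathbb{I}(j\in\hat{\mathcal S}^{(m)})$ equals $\mathbb{I}(j\in\hat{\mathcal S}^{(m)}\text{ and }\hat{\text{KL}}_j^{(m)}\leq\epsilon)$, so summing over the $M$ runs gives $\Pi_j=\Pi_j^\dagger$ on $E_j$. Consequently $\mathbb{I}(\Pi_j\geq\eta)\,\mathbb{I}(E_j)=\mathbb{I}(\Pi_j^\dagger\geq\eta)\,\mathbb{I}(E_j)\leq \mathbb{I}(\Pi_j^\dagger\geq\eta)$, and hence the target is bounded above by $\sum_{j\notin\mathcal S^*}\mathbb{P}(\Pi_j^\dagger\geq\eta)$.

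From here the aggregation proceeds exactly as for Proposition~\ref{prop:pfer}. Invoking the hypothesis $\mathbb{P}(\Pi_j^\dagger\geq\eta)\leq\mathbb{E}[\Pi_j^\dagger]$ for each null $j$, expanding $\mathbb{E}[\Pi_j^\dagger]=\tfrac1M\sum_{m}\mathbb{P}(j\in\hat{\mathcal S}^{(m)}\text{ and }\hat{\text{KL}}_j^{(m)}\leq\epsilon)$, and interchanging the finite sums over $j$ and $m$, I obtain
\[
\sum_{j\notin\mathcal S^*}\mathbb{P}(\Pi_j^\dagger\geq\eta)\leq \frac{1}{M}\sum_{m=1}^M \mathbb{E}\left|\{j:j\in\hat{\mathcal S}^{(m)}\setminus\mathcal S^*\text{ and }\hat{\text{KL}}_j^{(m)}\leq\epsilon\}\right|.
\]
Because the $M$ knockoff draws are generated independently and each run is a modified baseline Algorithm~\ref{alg:pfer2} under $\mathbb{Q}$, Theorem~\ref{thm:robust} applies to each summand and bounds it by $\nu e^{\epsilon}$; averaging over the $M$ runs yields the claimed $\nu e^{\epsilon}$.

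For the final sentence, verifying $\mathbb{P}(\Pi_j^\dagger\geq\eta)\leq\mathbb{E}[\Pi_j^\dagger]$ at $M=31$, $\eta=1/2$ under a monotonically non-increasing probability mass function is identical to the corresponding combinatorial step in Proposition~\ref{prop:pfer}, since $\Pi_j^\dagger$ is supported on the same grid $\{0,1/M,\ldots,1\}$; one simply repeats that argument with $\Pi_j^\dagger$ in place of $\Pi_j$. The only genuinely new point, and therefore the step I would treat most carefully, is the coupling identity $\Pi_j=\Pi_j^\dagger$ on $E_j$: it is what cleanly separates the derandomisation bookkeeping from the per-run robustness, and it rests on the fact that restricting to $\{\hat{\text{KL}}_j^{(m)}\leq\epsilon\}$ within each run leaves the selection indicators unchanged on that event. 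Everything else is the same averaging argument as in the exact case.
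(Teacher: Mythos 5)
Your proposal is correct and follows essentially the same route as the paper: the coupling identity $\Pi_j=\Pi_j^\dagger$ on the event $E_j$ is exactly the paper's observation that the selection indicators are unchanged when restricted to $\{\hat{\text{KL}}_j^{(m)}\leq\epsilon\}$, and the subsequent aggregation via the hypothesis $\mathbb P(\Pi_j^\dagger\geq\eta)\leq\mathbb E[\Pi_j^\dagger]$ and Theorem~\ref{thm:robust} matches the paper's argument. The only cosmetic difference is that you bound $\mathbb E[\Pi_j^\dagger]$ by averaging the single-run bound over all $M$ runs, whereas the paper uses the identical distribution of the runs to reduce to run $1$; both are equivalent.
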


\section{Simulation Study}\label{sec:sim}

In this section, we conduct a simulation study to evaluate the performance of the proposed knockoff method. We check if the PFER can be controlled at the targeted level when the three-step procedure described in Section~\ref{subsec:robust} is applied. The power of variable selection will also be assessed.  

We set $p = 100, J = 60$, and consider $N \in \{1000, 2000, 4000\}$ for comparing power under different sample sizes. It leads to three settings. For each setting, we generate 100 independent replications. The data are generated as follows.  We divide the predictors into five blocks, each containing  10 continuous variables and 10 binary variables. {Ordinal variables or unobservable variables are not included in this study for simplicity. In Section F.3 of the supplementary material, we present a simulation study that includes unobservable variables.}

We consider the following design for the correlation matrix $\SSigma$ of the underlying variables $\ZZ_{i}^*$,
which is similar to the one used in \cite{grund2021treatment} that concerns analyzing missing data in ILSAs. This correlation matrix mimics the correlation structure in ILSA data.
	{(a)} 
	Within block 1, the correlation between every pair of variables is 0.6. 
	{(b)} 
	Within block 2, the correlation between every pair of variables is 0.6. For the $10$-by-10 submatrix recording the correlations between variables in blocks 1 and 2, the diagonal entries are set to be 0.3, and the off-diagonal entries are set to be 0.15. 
	{(c)} Within block 3, the correlation between every pair of variables is 0.6. The variables in block 3 have a correlation of 0.15 with each variable in blocks 1 and 2. 
	{(d)} Within block 4, the correlation between every pair of variables is 0.3. For the 10-by-30 submatrix recording the correlations between variables in block 4 and those in blocks 1 to 3, all the entries take a value of 0.15, except that the diagonal entries of the 10-by-10 submatrix corresponding to blocks 4 and 1 are set to 0.3. 
	{(e)} Within block 5, the correlation between every pair of variables is 0.3. For the 10-by-40 submatrix recording the correlations between variables in block 5 and those in blocks 1 to 4, the entries are generated independently from a uniform distribution over the interval $[0.1, 0.2]$.  
The same correlation matrix is used in all 100 replications. The heat map of this correlation matrix is given in {Figure~\ref{fig:heat}}. This correlation matrix has a maximal eigenvalue of 22.73 and a minimal eigenvalue of 0.11.

\begin{figure}[ht]
    \centering
    \includegraphics[scale=0.5]{./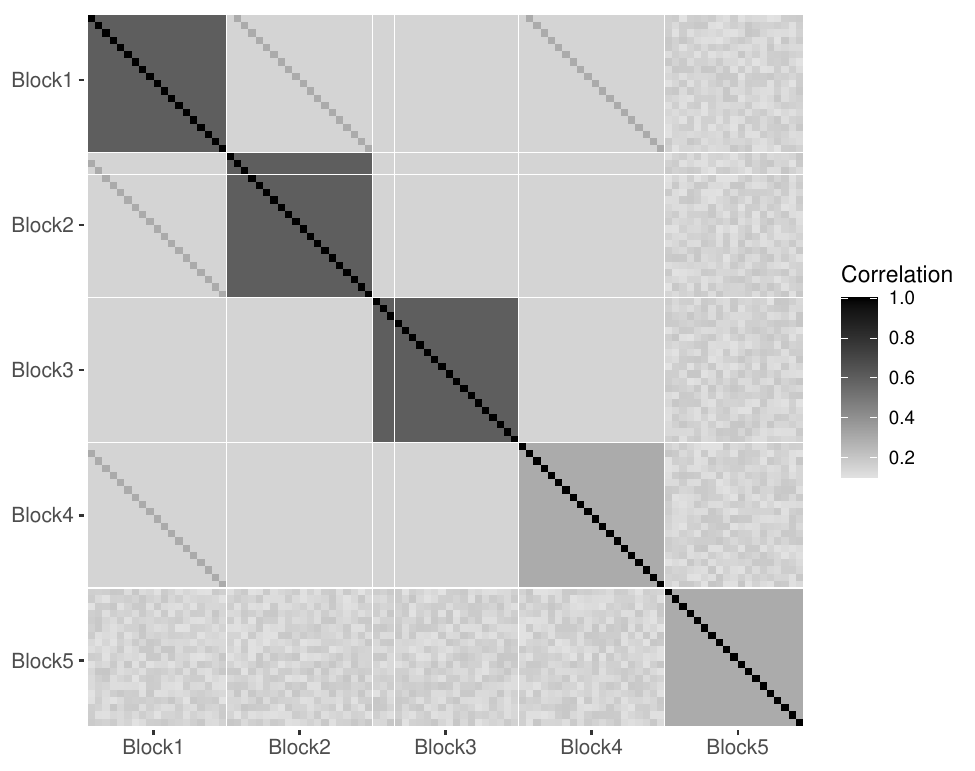}
    \caption{Heatmap of the designed correlation matrix in the simulation study.}
    \label{fig:heat}
\end{figure}

The rest of the Gaussian copula model is set as follows. For continuous variables, we set $c_j=0$ and $d_j=1$. For the binary variables, we set their threshold parameters $c_{j1}$ to take one of the values in $(-1.2, -0.3, 0, 0.3, 1.2)$ iteratively (i.e., $c_{11,1} = -1.2, c_{12,1} = -0.3$ and so on). 
Regarding the parameters in the structural model, {we set the intercept $\beta_0 = 0$, $\beta_j = 0.5$ for $j = 1, 22, 43, 64, 85$, $-0.5$ for $j = 11, 32, 53, 74, 95$, and 0 for the rest of the variables}. Under this setting, 
the non-zero coefficients are distributed uniformly among the variables.  We further set $\sigma^2 = 1$ for the residual variance.

Data missingness is generated following the SMAR condition. For each observation $i$, we generate a random variable $R_i$ from a categorical distribution with support $\{1, 2, ..., 5\}$, satisfying $P(R_i = k) = 0.2$, for all $k=1, ..., 5$. The data missingness is determined by $R_i$ and the non-null variables. Let $\mathcal S_k^*$ denote the set of non-null variables in the $k$th block. For observation $i$, when $R_i = k$, we let all the variables in $\mathcal S_k^*$ be observed. For each of the rest of the variables $j$, its probability of being missing is given by
$(1+\exp(1-(\sum_{j'\in \mathcal{S}_k^*} Z_{ij'})/2))^{-1}.$
Under this setting, around 33\% of the entries of the data matrix for predictors are missing. 

Finally, we generate the parameters in the measurement model with only dichotomous items. 
We sample  $a_j$'s from a uniform distribution $U[0.5, 1.5]$, and $b_j$'s from uniform distribution $U[-2, 0]$, where the range of these distributions is chosen to guarantee that $a_j\theta_i + b_j$ to be in a suitable range. 
When generating the responses, a matrix sampling design is adopted. Here, all the items are divided into three equal-sized blocks. Each observation is randomly assigned one of the three blocks, and the responses to the rest of the two blocks are missing completely at random. 


We apply the three-step procedure described in Section~\ref{subsec:robust}, including both the baseline procedure based on Algorithm~\ref{alg:pfer2} and the derandomized procedure based on Algorithm~\ref{alg:derandom2}. {In this simulation study, we assume that all item parameters are known and fix them to their true values in both \eqref{eq:loglik} and \eqref{eq:likelihood}. In addition, we include an $l_2$-penalty on $\bbb$ in equation \eqref{eq:loglik}, as well as an $l_2$-penalty on $(\bbb^\top, \rrr^\top)^\top$ in equation \eqref{eq:likelihood} during the estimation to mitigate the problem of overfitting. More details are given in the supplementary material.}
Different target levels are considered, including $\nu \in \{1, 2, ..., 5\}$.  
Our results are given in Table~\ref{tab:sim}. Two performance metrics are reported, including (1) the average PFER, which is calculated by averaging $|\hat{\mathcal S} \setminus {\mathcal{S}}^*|$ over 100 replications, and (2) the average True Positive Rate (TPR), which is calculated by averaging $|\hat{\mathcal S} \cap {\mathcal{S}}^*|/|{\mathcal{S}}^*|$. As we can see, the baseline algorithm controls the PFER around the nominal level, while the derandomized knockoff method tends to be more conservative, which gives an average PFER much smaller than the nominal level. On the other hand, the derandomized method tends to be more powerful than the baseline algorithm in the sense that it typically achieves a higher average TPR. This phenomenon is consistent with the findings in \cite{ren2021derandomizing} under linear and logistic regression settings. 


\begin{table}
	\caption{Simulation results. Here, ``Baseline" refers to the baseline algorithm, Algorithm 2, and ``DRM" refers to derandomized knockoffs, Algorithm 3. $\nu$ refers to the nominal PFER level.}
	\label{tab:sim}
	\centering
	\adjustbox{max height=0.2\textheight}
	{
	{\begin{tabular}{cccccccc}			
		\toprule
		&&& $\nu=1$  &  $\nu=2$ & $\nu=3$ & $\nu=4$ & $\nu=5$ \\
		\midrule
		&
		&  Baseline & 0.68 & 1.72 & 2.78 & 4.01 & 5.04\\
		\cmidrule{3-8}
		&\multirow{-2}{*}{PFER} &  DRM & 0.01 & 0.10 & 0.33 & 0.55 & 0.83  \\			
		\cmidrule{2-8}
		& & Baseline & 54.0\% & 65.5\% & 70.9\% & 74.2\% & 77.2\%\\
		\cmidrule{3-8} 
		\multirow{-5}{*}{N = 1000}&\multirow{-2}{*}{TPR}& DRM & 59.9\% & 69.1\% & 74.6\% & 78.3\% & 80.7\%\\
		\midrule
		&
		 &  Baseline &  1.28  & 2.52  &  3.63 & 4.98  &  5.85 \\
		\cmidrule{3-8}
		&\multirow{-2}{*}{PFER}&  DRM &  0.06 & 0.33  & 0.69  & 1.15  &  1.60\\		
		\cmidrule{2-8}
		& & Baseline &  81.7\% & 88.9\% & 91.1\% &  93.5\% &  94.0\%\\
		\cmidrule{3-8} 
		\multirow{-5}{*}{N = 2000}&\multirow{-2}{*}{TPR}& DRM & 83.8\% & 90.3\% & 93.1\% &  95.2\% &  95.8\%\\
		\midrule
		&
		 &  Baseline &   0.76&    1.84&   3.06 &  4.17 & 5.41  \\
		\cmidrule{3-8}
		&\multirow{-2}{*}{PFER}&  DRM & 0.14  & 0.53   &  0.95 &  1.51  & 1.95  \\			
		\cmidrule{2-8}
		& & Baseline &  95.3\% & 98.9\% &  99.2\% &  99.4\% &    99.5\%\\
		\cmidrule{3-8} 
		\multirow{-5}{*}{N = 4000}& \multirow{-2}{*}{TPR}& DRM &   97.7\% & 99.4\% &  99.6\% &  99.6\% &  99.7\%\\
		\bottomrule
	\end{tabular}}
 }
\end{table}

\section{Application to PISA 2015}\label{sec:real}

We now apply the proposed method to the PISA 2015 dataset described in Section~\ref{sec:data}.  
Our results are given in Table~\ref{tab:real}. In this table, the predictors are ranked according to the value of $\Pi_j$ when $\nu=1$, from the largest to the smallest. For each predictor, we give the variable name, the variable type (continuous, binary, or ordinal), and a brief explanation of the variable. Further details about these variables are given in the supplementary material. In addition, we present the estimated coefficients of these variables under the full model (i.e., the model with all the predictors) and their standard errors based on a non-parametric bootstrap procedure with 200 replications. For each continuous variable, the standardized estimated coefficient is given, which is the estimated coefficient multiplied by the standard deviation of the corresponding variable.  
Variable selection results with nominal PFER levels $\nu = 1, 2, 3$ are given in Table~\ref{tab:real}, for which 
36, 45, and 48
predictors are selected, respectively. Note that by the construction of the derandomized knockoff method, these selection results are nested,  in the sense that the variables selected with $\nu = t$ are also selected with $\nu = t+1$, $t=1, 2, ...$. We also point out that for the first 
{20 variables (ANXTES to UNFAIR)}
, $\Pi_j = 1$, i.e., the variables are always selected by the baseline algorithm, and for the last
  11 variables (FISCED to WEALTH), 
$\Pi_j = 0$, for any $\nu = 1, 2, 3$, i.e., they are never selected by the baseline algorithm.

\begin{center}
\begin{longtable}{>{\raggedright\arraybackslash}p{\dimexpr(\textwidth-4\tabcolsep)/8}>{\centering}p{\dimexpr(\textwidth-4\tabcolsep)/16}>
{\raggedright\arraybackslash}p{\dimexpr(\textwidth-4\tabcolsep)/2}>{\raggedleft\arraybackslash}p{\dimexpr(\textwidth-4\tabcolsep)/10}>{\centering\arraybackslash}p{\dimexpr(\textwidth-4\tabcolsep)/10}}
 \caption{Results from applying Algorithm~3 to PISA data. The variables are ordered according to the value of $\Pi_j$ when $\nu = 1$, from the largest to the smallest. For variables with the same $\Pi_j$ values, they are ordered alphabetically. Continuous, binary, and ordinal variables are indicated by C, B, and O, respectively. For an ordinal variable $Z_j$, a coefficient corresponds to a dummy variable $\III(Z\geq k)$, for each non-baseline category $k=1, ..., K_j$.}
	\label{tab:real}\\ 

  \toprule
	Name & Type &Description & Estimate & SE\\
			\midrule
			\multicolumn{5}{c}{$\nu = 1$}\\
            \hline
            ANXTES	 & C &  Personality: test anxiety.  &  $-0.0542$ &  $0.0097$\\            
			\hline									
BELONG & C & Subjective well-being: sense of belonging to school. &	$-0.0454$ & $0.0108$\\
            	\hline
             DISCLI & C&  Disciplinary climate in science classes. &	$0.0657$ &  $0.0104$\\
			\hline	
   CPSVAL & C &  Collaboration and teamwork dispositions: value cooperation.  &  $-0.0862$ & $0.0110$\\
			\hline
			EBSCIT	& C &  Enquiry-based science teaching and learning practices. &  $-0.0561$ &  $0.0117$\\ 
            \hline	
            EISCED & O &  ISCED (International Standard Classification of Education) level student expects to complete. (0/1/2 = [level 2 or 3A]/[level 4 or 5B]/[level 5A or 6]) &	{$0.1733$  $0.0615$} & {$0.0350$  $0.0299$} \\
		\hline	
            ENVAWA & C & Environmental awareness.  & $0.0640$ & $0.0109$ \\
            \hline
			ENVOPT & C & Environmental optimism.  &  $-0.0886$   &  $0.0090$ \\
			\hline
			EPIST & C & Epistemological beliefs.  &	 $0.0887$ & $ 0.0101$\\
			\hline			
			GENDER & B &  Student's gender. ($0/1 =$ female/male) & $0.1884$ & $0.0211$\\
			\hline
			JOYSCI	& C & Enjoyment of science. & $0.0889$ & $0.0124$\\
			\hline			
                 OUT.JOB	&B & Whether work for pay outside the school.(0/1 = no/yes)  & $0.2076$ &  $0.0290$\\
			\hline
                OUT.PAR & B& Whether talk to parents outside the school. (0/1 = no/yes)  & $-0.1373$ &  $0.0307$\\
            \hline
			OUT.SPO & B& Whether exercise or do a sport outside the school. (0/1 = no/yes) & $0.1966$ &  $0.0223$\\
			\hline
			OUT.STU &B & Whether study for school or homework outside the school. (0/1 = no/yes) & $0.1188$ & $0.0202$\\
			\hline
						
			PERFEE	 & C & Perceived feedback.  & $-0.1373  $ & $0.0125$\\
			\hline
			REPEAT  & B & Whether the student has ever repeated a grade. (0/1 = no/yes) & $-0.2391$ & $0.0350$ \\
                \hline
                SCI.CHE &B& Whether attended chemistry courses in this or last school year. (0/1 = no/yes) &	$0.1109$ & $0.0207$\\	
  		\hline
   			TMINS & C & Learning time in class per week (minutes). & $0.0949$ & $0.0098$\\
            \hline   
            UNFAIR & C & Teacher unfairness. &  $-0.0542$ & $0.0108$\\	
      \hline
   
   LANGAH	& B & Whether language at home different from the test language. (0/1 = no/yes) & $-0.1022 $ & $0.0280 $\\
            \hline	
    TDSCIT &C& Teacher-directed science instruction.  &  $0.0504$ & $0.0112$\\ 
    \hline	      
EISEIO & C &  Student's expected  International Socio-economic Index of occupational status. & $0.0448$ & $0.0109$\\        		
    \hline
    OUTHOU & C &  Out-of-school study time per week (hours). &	$-0.0448$ & $0.0106$ \\
    \hline
    COOPER & C &  Collaboration and teamwork dispositions: enjoy cooperation. &   $0.0537$  & $0.0113$\\
    \hline		
    INSTSC & C & Instrumental motivation. & $-0.0388$ &  $0.0097$\\	
    \hline            
    SCIEEF	& C & Science self-efficacy. & $0.0408   $ & $0.0105$\\
   \hline

   FISEIO & C & ISEI (International Socio-economic Index) of occupational status of father. & $0.0436$ & $ 0.0122$\\
    \hline
    MISEIO &C& ISEI (International Socio-economic Index) of occupational status of mother.  & $0.0373$ & $0.0113$\\
    \hline
   CULTPO & C & Cultural possessions at home.  & $0.0388$ & $0.0114$ \\
    \hline
   CHONUM &O& Whether can choose the number of school science course(s) they study. (0/1/2 = no, not at all/ yes, to a certain degree/yes, can choose freely) &	 {$0.0995$ $-0.0288$} &  {$0.0230$ $0.0361$} \\ 
   \hline
               
    SCI.PHY	&B&  Whether attended physics courses in this or last school year. (0/1 = no/yes) & $-0.0718$ & $ 0.0201$\\   
    \hline
    SKIDAY &O& The frequency student skipped a whole school day in the last two full weeks of school. (0/1/2 = [none]/[one or two times]/[three or more times]) &	 {$-0.0398$  $-0.1312$} &  {$0.0198$  $0.0452$} \\
    \hline
    CHODIF &O&  Whether can choose the level of difficulty for school science course(s). (0/1/2 = no, not at all/ yes, to a certain degree/yes, can choose freely) &	 {$0.0677$  $0.0459$} &  {$0.0227$ $0.0305$} \\
    \hline
    DAYPEC & O & Averaged days that student attends physical education classes each week. (0/1/2/3 = [0]/[1 or 2]/[3 or 4]/[5 or more]) &  {$-0.0561$ $0.0150$ $-0.0740$} &  {$0.0361$  $0.0409$  $0.0278$}\\
    \hline
    ADINST &C& Adaption of instruction.  & $0.0243$ &  $0.0140$\\ 
  \hline	
  
   \multicolumn{5}{c}{$\nu = 2$}\\
    \hline
    SCI.EAR &B& Whether attended earth and space courses in this or last school year. (0/1 = no/yes) & $-0.0549$ &$ 0.0221$ \\
			\hline
    OUT.NET  & B & Whether use Internet outside the school. (0/1 = no/yes)  & $0.0663$ & $0.0254$ \\
    			\hline
    ARRLAT &O& The frequency of arriving late for school in the last two full weeks of school. (0/1/2 = [none]/[one or two times]/[three or more times]) &
     {$-0.0731$  $-0.0118$} &  {$0.0211$  $0.0349$} \\		
    \hline
GRADE & O & Student's grade. (0/1/2 = lower than modal grade/not lower than modal grade/higher than modal grade.) &	 {$0.1125$ $-0.0090$} &  {$0.0368$ $0.0242$}\\
	\hline            
    {HEDRES} &C& Home educational resources.  &  $-0.0216$  & $0.0108$\\

                \hline

   INTBRS &C& Interest in broad science topics. & $0.0232$ & $0.0120$\\ 
			\hline
   {CHOCOU} &O&  Whether can choose the school science course(s) they study. (0/1/2 = no, not at all/yes, to a certain degree/yes, can choose freely)
			&	 {$0.0516$ $ -0.00612$} &  {$0.0223$ $0.0291$} \\
  \hline
   {OUT.VED} &B & Whether watch TV/DVD/Video outside the school. (0/1 = no/yes) & $0.0400$ & $0.0204$\\
			\hline
   {DUECEC} & O & Duration in early childhood education and care of students. (0/1/2/3 = [less than two years]/[at least two but less than three years]/[at least three but less than four years]/[at least four years])
	    	&	 {$0.0496$  $-0.0313$  $-0.0797$} &  {$0.0268$  $0.0313$  $0.0416$}\\			
	    	\hline
  \multicolumn{5}{c}{$\nu = 3$}\\
			\hline
   SCI.GEN &B& Whether attended general, integrated, or comprehensive science courses in this or last school year. (0/1 = no/yes) &	$0.0372$ & $0.0210$ \\			
			\hline
   EMOSUP &C& Parents' emotional support. &$-0.0180$ & $0.0114$\\
    \hline
    DAYMPA &O& Number of days with moderate physical activities for a total of at least 60 minutes per week. ($0/1/2/3/4/5/6/7 = 0/1/2/3/4/5/6/7$)  &  {$0.0070$ $0.0453$  $0.0457$  $-0.0218$  $-0.0048$  $0.0555$  $0.0006$}  &   {$0.0446$ $0.0457$  $0.0405$ $0.0370$  $0.0344$  $0.0363$  $0.0340$}\\
    \hline
  \multicolumn{5}{c}{Unselected}\\
   \hline
   OUT.MEA & B& Whether have meals before school or after school. (0/1 = no/yes)  & $0.0373$ &   $0.0195$\\
    \hline
    OUT.GAM  & B & Whether play video-games outside the school. (0/1 = no/yes)  & $0.0222$ & $0.0230$ \\
    \hline
TEASUP & C& Teacher support in science classes of students' choice. & $0.0112$ &  $0.0126$\\
			\hline

    FISCED & O & Father's education in ISCED level.  (0/1/2/3/4 = [none or ISCED 1]/[ISCED 2]/[ISCED 3B or 3C]/[ISCED 3A or 4]/[ISCED 5B]/[ISCED 5A or ISCED 6])
	    	&	 {$0.0057$  $0.0051$  $-0.0145$  $0.0445$} &  {$0.0433$ $ 0.0343$  $0.0304$  $ 0.0360$}\\

      \hline
    MISCED & O & Mother's education in ISCED level.  (0/1/2/3/4 = [none or ISCED 1]/[ISCED 2]/[ISCED 3B or 3C]/[ISCED 3A or 4]/[ISCED 5B]/[ISCED 5A or ISCED 6])
	    	&	 {$-0.0376$   $0.0314$   $-0.0418$   $0.0382$} &  {$0.0520$  $ 0.0355$   $0.0270$   $0.0280$}\\        	
			\hline
               MOTIVA & C & {Achievement motivation.}  & $0.0082$ & $0.0100$\\
			\hline
                OUT.FRI & B& Whether meet or talk to friends on the phone outside the school. (0/1 = no/yes)  & $0.0105$ &  $0.0216$\\
			\hline
            
               OUT.HOL &B & Whether work in the household outside the school. (0/1 = no/yes)  & $-0.0040$ &  $0.0222$\\
			\hline
             
               OUT.REA  &B & Whether read a book/newspaper/magazine outside the school. (0/1 = no/yes) & $0.0159$ & $0.0244$\\
            			\hline

			SCI.APP &B& Whether attended applied sciences and technology courses in this or last school year. (0/1 = no/yes) & $0.0024$ & $0.0293$ \\
			\hline
			SCI.BIO  &B& Whether attended biology courses in this or last school year. (0/1 = no/yes) & $-0.0148$ & $ 0.0248$\\
			\hline
			
			SCIACT  & C & Index science activities.  & $0.0060$ & $0.0114$ \\
			\hline				
			SKICAL &O& The frequency of skipping some classes in the last two full weeks of school. (0/1/2 = [none]/[one or two times]/[three or more times]) &	 {$-0.0192$  $-0.0090$} &  {$0.0204$  $0.0390$} \\
			\hline
			
            WEALTH &C& Family wealth.  & $ 0.0053$ & $0.0108$\\	\bottomrule

\end{longtable}
\end{center}

We comment on some of the variable selection results. Several variables in the data concern the socioeconomic status of students' families, including the parents' occupational statuses (FISEIO, MISEIO), cultural possessions at home (CULTPO; e.g., books), parents' education levels (FISCED, MISCED), home educational resources (HEDRES), and family wealth (WEALTH), where FISEIO, MISEIO, FISCED, and MISCED are ordinal variables, and HEDRES, CULTPO and WEALTH are continuous variables. These variables are positively correlated with each other {(correlations/polyserial correlations between 0.22 and 0.69)}. It is interesting that parents' occupational statuses, cultural possessions, and home educational resources seem to be important in explaining students' performance in science (statistically significant and selected when {$\nu \leq 2$}).
{Given the rest of the variables, it is found that the higher occupational status of the father/mother or the more cultural possessions is associated with better science performance. However, HEDRES has a negative coefficient, which seems to be counter-intuitive and is worth further investigation.
}
 On the other hand, parents' education levels and family wealth seem to be less important (statistically insignificant and not selected even when $\nu = 3$). These results may be interpreted by a hypothetical mediation model as shown in Figure~\ref{fig:path}, which remains to be validated using additional data and statistical path analysis. 
 That is, WEALTH naturally has direct effects on CULTPO and HEDRES, which may have direct effects on students' science achievement. 
 Moreover, FISCED and MISCED naturally have a direct effect on FISEIO and MISEIO, respectively, and also possibly have direct effects on 
CULTPO, HEDRES, and WEALTH. 
However, there may not be direct paths from 
 WEALTH, FISCED or MISCED 
to students' science achievement. Students' science achievement may be largely influenced by genetic factors (e.g., intelligence) and environmental factors 
(e.g., education resources inside and outside home).
It is possible that 
 FISEIO, MISEIO, CULTPO, HEDRES  
and the other variables in the current analysis have provided good proxies to these genetic and environmental factors. 
Given these variables, FISCED, MISCED, and WEALTH tend to be conditionally independent of students' science achievement. 

\begin{figure}
    \centering
    \adjustbox{max height = 0.3\textheight}{

\includegraphics{./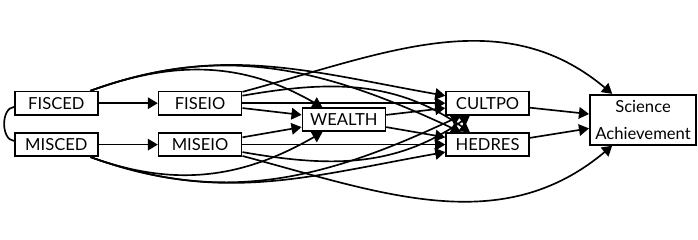}
}
    \caption{A hypothetical path diagram for several socioeconomic variables and science achievement.}
    \label{fig:path}
\end{figure}

Several variables consider students' behaviors attending school, including whether the student has ever repeated a grade (REPEAT), the frequency of a student skipping a whole school day in the last
two full weeks of school (SKIDAY), the frequency of the student arriving late for school in the last
two full weeks of school (ARRLAT), and the frequency of the student skipping some classes in the last
two full weeks of school (SKICLA), where REPEAT is a binary variable, and the other three are ordinal variables. 
These variables are positively correlated with each other {(tetrachoric/polychoric correlations between 0.07 and 0.53)}. The signs of the estimated coefficients are all consistent with our intuition. For instance, a student tended to perform worse on the test if they had ever repeated a grade or if they often arrived at school late. Among these variables, REPEAT, ARRLAT, and SKIDAY seem to be important variables in the sense that they are all selected with {$\nu \leq 2$}. On the other hand, given these variables as well as the rest of the variables, the variable SKICAL seems to be irrelevant (not selected even with $\nu = 3$, and the coefficients are not significant).

A few variables are related to teachers and their teaching style, including enquiry-based teaching and learning (EBSCIT), teacher-directed science instruction (TDSCIT), perceived feedback (PERFEE),  teacher unfairness (UNFAIR), adaptive instruction (ADINST), and teacher support in science classes of students' choice (TEASUP), all of which are continuous variables. 
Among these variables, ADINST, EBSCIT, TDSCIT, PERFEE, and UNFAIR are selected by our procedure with $\nu=1$,  while TEASUP is not selected. 
Variable UNFAIR has a negative coefficient, suggesting that teacher unfairness is associated with poor student performance after controlling for the other variables.
ADINST has a positive coefficient, suggesting that teachers’ flexibility with their lessons -- tailoring the lessons to the students in their classes -- tends to improve students' science performance.
In addition, it is interesting to see that TDSCIT has a positive coefficient while EBSCIT has a negative coefficient, which suggests that enquiry-based teaching and learning seem to have a negative effect on students' science achievement while teacher-directed instruction has a positive effect. It is possible that enquiry-based teaching and learning can broaden students' interests and increase their enjoyment of science {(correlation between EBSCIT and JOYSCI is 0.16 and that between EBSCIT and INTBRS is 0.13)}, but may be less efficient in developing students' science knowledge than teacher-directed instruction. Thus, a blended instruction model that combines the two teaching modes may be preferred. Finally, PERFEE has a negative coefficient, which may seem counter-intuitive at first glance, as providing informative and encouraging feedback is essential for improving student outcomes. This result may be due to the confounding of school types, which are not included in the current analysis. That is, students in disadvantaged schools may be more likely to report that their teachers provide them with feedback {\citep[Chapter 2,][]{pena2016pisa}}. These students also tended to perform worse on the test, which resulted in a negative coefficient estimate. 

Several variables concern students' attending of science courses in this or last school year, including chemistry (SCI.CHE), physics (SCI.PHY), earth and space (SCI.EAR), biology (SCI.BIO), general, integrated, or comprehensive science (SCI.GEN), and applied sciences and technology (SCI.APP). All these variables are binary. Among these variables, 
{SCI.CHE and SCI.PHY are selected with $\nu = 1$,  SCI.EAR is selected with $\nu = 2$, SCI.GEN is selected with $\nu = 3$, and the rest are not selected even with $\nu=3$. }
For the selected variables, SCI.CHE and SCI.GEN have positive coefficients, while SCI.PHY and SCI.EAR have negative coefficients. We suspect that these results may be due to the different curriculum settings at different types of schools, which are not included in the current model. Besides, there are also variables that measure students' opportunity to learn science at school. In particular, data are available on whether students can choose the number (CHONUM) and level of difficulty (CHODIF) of science courses, {and whether they can choose specific science courses (CHOCOU) at school.} It turns out that
 CHONUM and CHODIF are selected with $\nu = 1$, while CHOCOU is {selected with $\nu = 2$}. More specifically, the estimated coefficients for CHOCOU, CHONUM and CHODIF suggest that students with some freedom to choose the {subject}, number and level of difficulty of science courses tended to perform better in the test. 

Students' science achievement may also be related to their activities and received support outside of school. The current analysis includes variables on whether a student studies for school or homework (OUT.STU), talks to parents outside the school (OUT.PAR), works for pay (OUT.JOB), exercises or does sports (OUT.SPO), uses internet (OUT.NET), watches TV/DVD/Video (OUT.VED), plays video games (OUT.GAM), has meals (OUT.MEA), meets or talks to friends (OUT.FRI), works in the household (OUT.HOL), and reads a book/newspaper/magazine (OUT.REA) outside the school, and whether they receive emotional support from their parents (EMOSUP). 
All these variables are binary. Among these variables, 
OUT.STU, OUT.PAR, OUT.JOB, OUT.SPO, and  are selected with $\nu=1$, 
OUT.NET and 
OUT.VED are selected with $\nu=2$, EMOSUP is selected with $\nu=3$,  and the rest are not selected. 
Among the selected variables,
 variables OUT.STU, OUT.JOB, OUT.SPO, OUT.NET, and {OUT.VED} have positive coefficients, 
 suggesting that students with these outside-of-school activities also tended to perform better in the test after controlling for the rest of the variables. On the other hand, it is counter-intuitive that 
 OUT.PAR and EMOSUP have negative coefficients, though the coefficient for EMOSUP is not statistically insignificant.  
 This is worth future investigation.

Furthermore, the data contain variables that concern students' perceptions or attitudes towards science and related topics. They include the level of enjoying cooperation (COOPER), the level of valuing cooperation (CPSVAL), environmental awareness (ENVAWA), environmental optimism (ENVOPT),  epistemological beliefs about science (EPIST), enjoyment of science (JOYSCI), instrumental motivation (INSTSC), science self-efficacy (SCIEEF), and interest in broad science topics (INTBRS). All these variables are continuous. They are all selected. 
Specifically, INTBRS is selected with $\nu=2$, and the rest are selected with $\nu=1$. 
The correlation between CPSVAL and COOPER is 0.45. It is interesting that CPSVAL has a negative coefficient, suggesting that controlling for the other variables, students who more appreciate the value of cooperation and teamwork tended to perform worse in the test. In contrast, COOPER has a positive coefficient, implying that controlling for the other variables, students who enjoy cooperation and teamwork tended to perform better. Variables ENVAWA and ENVOPT have a correlation {$-0.14$}. Interestingly, ENVAWA has a positive coefficient, and ENVOPT has a negative coefficient. Moreover, INSTSC, which measures students' perception that studying science in school is useful to their future lives and careers, has a negative coefficient. It seems slightly counter-intuitive.  However, such a result is possible, given that variables like JOYSCI and INTBRS have been included in the regression model {(correlation between INSTSC and JOYSCI is 0.34 and correlation between INSTSC and INTBRS is 0.24)}. It may be explained by a mediation model in Figure~\ref{fig:instsc}, where INSTSC has positive direct effects on  JOYSCI and INTBRS, both of which further have positive effects on students' science achievement. However, given JOYSCI and INTBRS, the direct effect of INSTSC on science achievement is negative, possibly due to that INSTSC also brings pressure and stress to students when they learn science.

Finally, science achievement may also be related to other psychological factors. Specifically, the current analysis considers students' test anxiety level (ANXTES), sense of belonging to school (BELONG), expected education level (EISCED),
expected occupational status (EISEIO), and motivation to achieve (MOTIVA), all of which are continuous except that EISCED is ordinal.  
All these variables are selected with $\nu=1$, except for MOTIVA, which is not selected even when $\nu=3$.
For most of these variables, the signs of the estimated coefficients are consistent with our intuition. Specifically, ANXTES has a negative coefficient, suggesting a higher level of test anxiety is associated with poorer performance, controlling for the rest of the variables. EISCED and EISEIO have positive coefficients, which suggests that higher anticipation of the future is associated with high science achievement. However, it is less intuitive that BELONG has a negative coefficient, which may be due to not accounting for the school effect.

\begin{figure}
    \centering
    \adjustbox{max height = 0.3\textheight}{

\includegraphics{./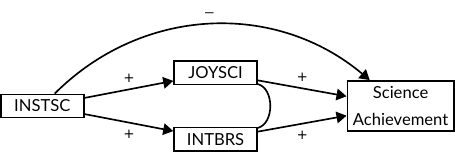}
}
    \caption{A hypothetical path diagram for INSTSC, JOYSCI, INTBRS, and science achievement.}
    \label{fig:instsc}
\end{figure}

\section{Discussions}\label{sec:dis}

In this paper, we considered identifying non-cognitive predictors of students' academic performance based on complex data from ILSAs that involve many missing values, mixed data types, and measurement errors. This problem can naturally be formulated as a variable selection problem. However, existing statistical methods are not applicable due to the complex data structure. For instance, variable selection methods for linear regression do not solve the current problem due to that (1) the response variable -- students' academic achievement -- is not directly observable but measured by cognitive items, and (2) there are  
many missing values in the predictors. We addressed these challenges by proposing a new model which combines a latent regression model and a Gaussian copula model. 
Furthermore, we proposed a derandomized knockoff method under a general latent variable model framework which includes the proposed latent regression model as a special case. 
This method tackles the multiple comparison issues of variable selection by controlling the PFER, a familywise error rate for variable selection. {Theoretical properties of the proposed method were established.} We focused on an application to PISA 2015 data, with the response variable being students' proficiency in the science domain. This analysis involved 5,685 students, 184 science items, and 62 non-cognitive variables that are of mixed types and contain many missing values. To our best knowledge, this is the first variable selection study of ILSAs that involves a dataset as large as the current one. With PFER level set to be $\nu = 1, 2, 3$, the proposed procedure selected
36, 45, and 48 variables, respectively. The model selection results are sensible, and signs of the parameter estimates for most of the selected variables are consistent with our intuition. The variable selection and parameter estimation results were examined from the perspectives of family socioeconomic status, school attending behaviors, teacher-related factors, science course resources and choices at school, out-of-school activities, perception and attitude towards science and related topics, and other psychological factors. These results provided 
insights into non-cognitive factors that are likely associated with students' science achievement, which can be useful to educators, policymakers, and other stakeholders.

The current analysis has several limitations that will be addressed in future research. First, the current application only considers the US sample and the science domain in PISA. It is of interest to investigate how the result of model selection varies across countries and knowledge domains. In particular, we expect the selection results to be substantially different across different countries due to cultural and socio-economic differences. In addition, the non-null predictors for different knowledge domains may also differ, which can suggest tailored education strategies for different domains. Second, it is also of interest to extend the current analysis to other ILSAs, such as the TIMSS and PIRLS, to see how the results change with slightly different test designs and different student age groups. 

The proposed method may be very useful for the scaling and reporting of ILSAs. First, the variable selection results establish a pathway between students' achievement in each subject domain and its possible influencing factors/causes. These results provide evidence that assists educators, policymakers, and related stakeholders to make informed education decisions. Second, it may improve the scaling methodology of ILSAs. Currently, a latent regression model is used in most ILSAs to estimate the performance distributions of populations (e.g., countries). This model, which is similar to the latent regression model in the current study, borrows information from non-cognitive background variables to compensate for the shortage of cognitive information. 
However, unlike the current model,  the latent regression model adopted in ILSAs does not directly regress on the background variables. Instead, it first conducts a PCA step to
reduce the background variables' dimensionality and then incorporates the derived PCA scores as predictors in latent regression.  This approach is often criticized for lacking interpretability, as the principal components often lack substantive meanings. Instead of performing PCA, 
we recommend reducing the dimensionality of the background variables by variable selection and then 
fitting the latent regression model with the selected predictors. With the theoretical guarantee of our variable selection method and by reporting the selected variables, the estimation and reporting of performance distributions become more transparent and interpretable.

 While we focus on an application to ILSAs, the proposed method also receives many other applications. For example, the method can also be used to identify neural determinants of visual short-term memory and to identify demographic correlates of psycho-pathological traits \citep{jacobucci2019practical}. Moreover, the proposed Gaussian copula model can be used with other regression models, such as linear and generalized linear regression models, for solving estimation and variable selection problems involving massive missing data and mixed types of variables. It is thus widely applicable to real-world problems involving missing data, which are commonly encountered in the social sciences, such as social surveys, marketing, and public health.

From the methodological perspective, there are several directions worth future development and investigation. First, 
the current model fails to account for possible multilevel structures in the data; for example, students are nested within schools. From the analysis of PISA data,  the signs of some estimated coefficients are not consistent with our intuition, which is likely due to not accounting for the school effect. Therefore, we believe that it is important to extend the current model by introducing random effects to model multilevel structures. New computation methods need to be developed accordingly. Second, the current analysis requires a relatively strong condition on data missingness, which is weaker than MCAR but stronger than MAR.
In social science, data may often be missing not at random. In that case, 
one may simultaneously model the complete data distribution and the missing data mechanism  \citep[e.g.,][]{kuha2018latent}.  
Such a joint model can be incorporated into the current analysis framework for generating knockoffs and further controlling variable selection errors. 
Third, the knockoff method may be coupled with the multiple imputation method for missing data analysis, as the knockoff variables can naturally be viewed as missing data. Thus, one may extend the state-of-the-art multiple imputation methods \citep{liu2014stationary,van2018flexible} to simultaneously impute missing data and knockoff copies and then use the imputed data for solving the variable selection problem. 
Finally, this paper focuses on the PFER as the performance metric for variable selection. Other performance metrics may be explored, such as false discovery rate and $k$ family-wise error rate, which may be more sensible in other applications. Making use of recent developments on knockoff methods \citep{ren2021derandomizing,ren2022derandomized}, we believe that it is not difficult to extend the current method to these error metrics.  


\appendix
\section*{Appendix}
\section{Proof of Theorems}
\begin{proof}[Proof of Theorem \ref{thm:robust}]
First of all, we claim that $\SSS^*$ is exactly the index set of non-null variables under the full rank assumption on $\SSigma$. In other words, $j \notin \SSS^*$ if and only if $Z_j$ is conditionally independent of $\YY$ given $\{Z_k\}_{k\neq j}$. This observation is summarized in the following Lemma, whose proof can be found in Appendix~\ref{app:lemproof}. 
\begin{lemma}\label{lem:null}
    Suppose the underlying correlation matrix $\SSigma$ has full rank, then $j \notin \SSS^*$ if and only if $Z_j$ is conditionally independent of $\YY$ given $\{Z_k\}_{k\neq j}$.
\end{lemma}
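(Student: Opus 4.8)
The plan is to prove both directions by exploiting the single structural fact that $\YY$ depends on $\ZZ$ only through the conditional mean of the latent trait. Writing $\mu(\zz) = \beta_0 + \sum_{k=1}^p \bbb_k^\top g_k(z_k)$, the measurement and structural models give the conditional law of $\YY$ given $\ZZ = \zz$ as
$$f(\yy \mid \zz) = \int \Big(\prod_{l} h_l(y_l \mid \theta)\Big)\, \frac{1}{\sqrt{2\pi}\,\sigma}\exp\!\Big(-\tfrac{(\theta - \mu(\zz))^2}{2\sigma^2}\Big)\, d\theta,$$
so that $f(\yy\mid\zz)$ is a function of $\zz$ only through the scalar $\mu(\zz)$. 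This reduction is the backbone of the whole argument.

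The forward direction is immediate. If $j \notin \SSS^*$, i.e. $\bbb_j = \mathbf{0}$, then $\mu(\zz)$ does not depend on $z_j$, hence neither does $f(\yy\mid\zz)$, and therefore $f(\yy\mid z_j, \{z_k\}_{k\neq j}) = f(\yy\mid\{z_k\}_{k\neq j})$, which is exactly the conditional independence of $Z_j$ and $\YY$ given $\{Z_k\}_{k\neq j}$.

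For the converse I would argue by contraposition, assuming $\bbb_j \neq \mathbf{0}$ and exhibiting conditional dependence in three steps. First, the full-rank hypothesis on $\SSigma$ guarantees that the conditional distribution of $Z_j^*$ given $\{Z_k^*\}_{k\neq j}$ is a non-degenerate normal; consequently, after fixing $\{Z_k\}_{k \neq j}$ at a value of positive density, $Z_j$ retains genuine variation. Second, I would verify case by case (continuous, binary, ordinal) that $\bbb_j^\top g_j(Z_j)$, and hence $\mu$, takes at least two distinct values as $z_j$ varies over a conditionally positive-probability set: for the continuous case because $Z_j$ ranges over a continuum, and for the binary and ordinal cases because each threshold cell receives positive conditional mass while $\bbb_j \neq \mathbf{0}$ forces the partial sums $\bbb_j^\top g_j(k)$ to differ across at least two adjacent categories. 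Third, I would fix one informative administered item $l$ (nonzero discrimination), for which $\EE[Y_l \mid \mu] = \int \PP(Y_l = 1 \mid \theta)\,\phi(\theta; \mu, \sigma^2)\,d\theta$ is strictly increasing in $\mu$; an integration-by-parts computation reduces its $\mu$-derivative to $\int \partial_\theta \PP(Y_l=1\mid\theta)\,\phi(\theta;\mu,\sigma^2)\,d\theta > 0$. Combining these, $\EE[Y_l \mid Z_j = z_j, \{Z_k\}_{k\neq j}]$ is a non-constant function of $z_j$, so $Y_l$ — and therefore $\YY$ — cannot be conditionally independent of $Z_j$ given $\{Z_k\}_{k\neq j}$.

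The main obstacle is the converse, and within it the interplay between the full-rank condition and the link functions $g_j$. I must ensure that conditioning on $\{Z_k\}_{k\neq j}$ does not collapse the variation of $Z_j$: this is precisely what full rank of $\SSigma$ buys, and it is why the hypothesis cannot be dropped, since a rank-deficient $\SSigma$ could make $Z_j$ a deterministic function of the others and render the conditional independence vacuously true even when $\bbb_j \neq \mathbf{0}$. I must also check that $g_j$ genuinely transmits this variation into $\mu$; the ordinal case is the fiddliest, as I need a nonzero coefficient vector $\bbb_j$ to produce distinct dummy-encoded contributions across adjacent categories rather than being annihilated by cancellation. By contrast, the monotonicity step for the latent trait is routine, since the expected item score is strictly increasing in $\theta$ whenever the discrimination is nonzero.
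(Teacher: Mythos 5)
Your proof is correct, and it follows the same overall strategy as the paper's: the forward direction is immediate from $\bbb_j=\mathbf 0$, and the converse is handled by contraposition, using the full rank of $\SSigma$ to guarantee that $Z_j$ retains non-degenerate variation conditionally on $\{Z_k\}_{k\neq j}$, with a case split over continuous, binary and ordinal types. The difference lies in which half of the argument each of you makes explicit. The paper reduces the lemma to the statement that no component of $(g_1(Z_1),\ldots,g_p(Z_p))^\top$ can be perfectly predicted by the others, proves that claim carefully from full rank (the ordinal case via the threshold cells), and dismisses the remaining step --- that non-predictability plus $\bbb_j\neq\mathbf 0$ yields genuine conditional dependence of $\YY$ on $Z_j$ --- as ``easy to show'' from the linear structure. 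You do the opposite: you spend your effort on precisely that last step, showing that the conditional law of $\YY$ given $\ZZ$ factors through the scalar $\mu(\zz)$ and that the expected score of an informative item is strictly monotone in $\mu$, so that any variation of $\bbb_j^\top g_j(Z_j)$ over a conditionally positive-probability set is visible in the distribution of $\YY$. This is a genuine gain in completeness, since the ``easy to show'' step is exactly where the measurement model enters and where the argument would fail if every administered item had zero discrimination; you are right to flag the need for at least one informative item, an assumption the paper leaves implicit (it holds for the calibrated PISA parameters). Two small points to tidy up: your monotonicity computation is written for a dichotomous item and should be noted to extend to the GPCM expected score for polytomous items, and in the ordinal case you should state explicitly that each threshold cell $(c_{jk},c_{j,k+1}]$ receives positive conditional mass because the conditional law of $Z_j^*$ given the remaining variables is a non-degenerate normal with full support --- which is the same use of the full-rank hypothesis that drives the paper's case analysis.
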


Next, for any index $j\notin\SSS^*$, we can establish the following result in a similar manner to \cite{barber2020robust}. The proof of this Lemma is also given in Appendix \ref{app:lemproof}.
\begin{lemma}\label{lem:Wcontrol}
    For any $j\notin\SSS^*$ and $\epsilon \geq 0$,
    \begin{equation}\label{eq:KLcontrol}
    \begin{aligned}
       &P\left({T_j} > 0, \hat{\text{KL}}_j \leq \epsilon \big\vert {|T_j|,\TT_{-j}},  |T_j|, \TT_{-j}, {\{\AAA_i\}_{i=1}^N, \{\BBB_i\}_{i=1}^N}\right) \\
        &\quad\quad\quad\quad\quad
        \leq e^{\epsilon}\cdot P\left({T_j}< 0 \big\vert {|T_j|,\TT_{-j}}, |T_j|, \TT_{-j}, {\{\AAA_i\}_{i=1}^N, \{\BBB_i\}_{i=1}^N}\right).
    \end{aligned}
\end{equation}
where {$\TT_{-j} = \{T_k\}_{k\neq j}$}.
\end{lemma}

For the given  value of $\nu$, we can  express the threshold $\tau$ as a function of {$\TT$} as follows:
\[{\tau = \tau(\TT)} = \inf \left\{t>0: 1 + |\left\{j: {T_j} < -t \right\}|=v\right\},\]
with $\inf \varnothing = -\infty.$ Note that by right continuity with respect to $t$, we have
\[1 + |\left\{j: {T_j} < -\tau \right\}|=v.\]
With this notation, we further define $${\tau_j = \tau(T_1, \ldots, T_{j-1}, |T_j|, T_{j+1}, \ldots, T_p)}$$ 
for each $j = 1, 2, ..., p$. Some useful properties about the sequence $\{\tau_j\}_{j=1}^p$ are summarised in the following lemma:
\begin{lemma}\label{lem:Tjk}
$\{\tau_j\}_{j=1}^p$ satisfies the following properties:
\begin{itemize}
    \item[]{(1)} For each $j=1,..,p$,  $\tau = \tau_j$ if $T_j > \tau$.
    \item[]{(2)} For any $j\neq k$, if $\max\{T_j, T_k\} < -\min\{{\tau_j, \tau_k}\}$, then ${\tau_j = \tau_k}.$ Consequently, if $T_j < -\tau_j$, then 
    \[T_k < -\tau_j \text{ if and only if } T_k < -\tau_k.\]
\end{itemize}    
\end{lemma}
\noindent See Appendix \ref{app:lemproof} for the proof of this lemma.

Now consider the number of false discoveries $|\{j: j\in \hat{\SSS}\cap\SSS^* \text{ and } \hat{\text{KL}}_j\leq \epsilon\}|$. We have
\begin{align}
    |\{j: j\in \hat{\SSS}\backslash\SSS^* \text{ and } \hat{\text{KL}}_j\leq \epsilon\}| &= \sum\limits_{j\notin\SSS^*} \III({T_j} > \tau, \hat{\text{KL}}_j\leq \epsilon) \\
    &= \left[1 + \sum\limits_{j\notin\SSS^*} \III({T_j} <  -\tau )\right] \cdot \frac{ \sum_{j\notin\SSS^*}\III({T_j}> \tau, \hat{\text{KL}}_j\leq \epsilon)}{1 + \sum_{j\notin\SSS^*} \III({T_j} < -\tau)} \\
    &\leq \left[1 + \sum\limits_{j=1}^p \III({T_j} < -\tau)\right] \cdot \frac{ \sum_{j\notin\SSS^*}\III({T_j}> \tau, \hat{\text{KL}}_j\leq \epsilon)}{1 + \sum_{j\notin\SSS^*} \III({T_j} < -\tau)} \\
    &\leq \nu\cdot R_{\epsilon}(\tau),
\end{align}
where we define $$R_{\epsilon}(\tau) = \frac{\sum_{j\notin\SSS^*}\III({T_j}> \tau, \hat{\text{KL}}_j\leq \epsilon)}{1 + \sum_{j\notin\SSS^*} \III({T_j} < -\tau)}.$$
Therefore, it suffices to prove that 
\[\EE[R_{\epsilon}(\tau) {\vert \{\AAA_i\}_{i=1}^N, \{\BBB_i\}_{i=1}^N}] \leq e^\epsilon.\]

We begin by observing that
\[\begin{aligned}
    \EE[R_{\epsilon}(\tau){\vert \{\AAA_i\}_{i=1}^N, \{\BBB_i\}_{i=1}^N}] &= \EE\left[\frac{\sum_{j\notin\SSS^*}\III({T_j}> \tau, \hat{\text{KL}}_j\leq \epsilon)}{1 + \sum_{j\notin\SSS^*} \III({T_j} < -\tau)} {\bigg\vert \{\AAA_i\}_{i=1}^N, \{\BBB_i\}_{i=1}^N}\right] \\
    &= \sum\limits_{j\notin\SSS^*} \EE\left[\frac{\III({T_j} > \tau, \hat{\text{KL}}_j\leq \epsilon)}{1 + \sum_{k\notin\SSS^*, k\neq j} \III({T_k} < -\tau)}{\bigg\vert \{\AAA_i\}_{i=1}^N, \{\BBB_i\}_{i=1}^N}\right].
\end{aligned}\]
By property (1) in Lemma~\ref{lem:Tjk}, we have 
\begin{equation}
    \begin{aligned}
    &\sum\limits_{j\notin\SSS^*} \EE\left[\frac{\III({T_j} > \tau, \hat{\text{KL}}_j\leq \epsilon)}{1 + \sum_{k\notin\SSS^*, k\neq j} \III({T_k} < -\tau)} {\bigg\vert \{\AAA_i\}_{i=1}^N, \{\BBB_i\}_{i=1}^N}\right] \\
    &= \sum\limits_{j\notin\SSS^*} \EE\left[\frac{\III({T_j > \tau_j}, \hat{\text{KL}}_j\leq \epsilon)}{1 + \sum_{k\notin\SSS^*, k\neq j} \III({T_k < -\tau_j})}{\bigg\vert \{\AAA_i\}_{i=1}^N, \{\BBB_i\}_{i=1}^N}\right] \\
      &= \sum\limits_{j\notin\SSS^*} \EE\left[\frac{\III({T_j} > 0, \hat{\text{KL}}_j\leq \epsilon)\cdot\III({|{T_j}|>\tau_j})}{1 + \sum_{k\notin\SSS^*, k\neq j} \III({T_k < -\tau_j})}{\bigg\vert \{\AAA_i\}_{i=1}^N, \{\BBB_i\}_{i=1}^N}\right]\\
 &= \sum\limits_{j\notin\SSS^*} \EE\left[\frac{\PP\left({T_j} > 0, \hat{\text{KL}}_j\leq \epsilon \big\vert {|T_j|, \TT_{-j}, \{\AAA_i\}_{i=1}^N, \{\BBB_i\}_{i=1}^N}\right)\cdot\III({|T_j|>\tau_j})}{1 + \sum_{k\notin\SSS^*, k\neq j} \III({T_k < -\tau_j}
 )}{\bigg\vert \{\AAA_i\}_{i=1}^N, \{\BBB_i\}_{i=1}^N}\right].
    \end{aligned}
\end{equation}
The last equation holds by taking conditional expectation conditioned on $\left(|T_j|, \TT_{-j}, \{\AAA_i\}_{i=1}^N, \{\BBB_i\}_{i=1}^N\right)$ inside the outer expectation, while noticing that $\tau_j \text{ is a function of } |T_j| \text{ and } \TT_{-j}$.
Then, by equation \eqref{eq:KLcontrol} we have
\begin{equation}
    \begin{aligned}
    &\sum\limits_{j\notin\SSS^*} \EE\left[\frac{\PP\left({T_j} > 0, \hat{\text{KL}}_j\leq \epsilon \big\vert {|T_j|, \TT_{-j}, \{\AAA_i\}_{i=1}^N, \{\BBB_i\}_{i=1}^N}\right)\cdot\III({|T_j|>\tau_j})}{1 + \sum_{k\notin\SSS^*, k\neq j} \III({T_k < -\tau_j}
 )}{\bigg\vert \{\AAA_i\}_{i=1}^N, \{\BBB_i\}_{i=1}^N}\right]\\ 
 &\leq \sum\limits_{j\notin\SSS^*} \EE\left[e^\epsilon\frac{\PP\left({T_j} < 0\big\vert {|T_j|, \TT_{-j}, \{\AAA_i\}_{i=1}^N, \{\BBB_i\}_{i=1}^N}\right)\cdot\III({|T_j|>\tau_j})}{1 + \sum_{k\notin\SSS^*, k\neq j} \III({|T_j|<-\tau_j})}{\bigg\vert \{\AAA_i\}_{i=1}^N, \{\BBB_i\}_{i=1}^N}\right]\\
&= e^\epsilon\cdot\sum\limits_{j\notin\SSS^*} \EE\left[\frac{\III({T_j < -\tau_j})}{1 + \sum_{k\notin\SSS^*, k\neq j} \III({T_k < -\tau_j})}{\bigg\vert \{\AAA_i\}_{i=1}^N, \{\BBB_i\}_{i=1}^N}\right].
    \end{aligned}
\end{equation}
Finally, by property (2) in Lemma~\ref{lem:Tjk}, we have
\begin{equation}
\begin{aligned}
&\sum\limits_{j\notin\SSS^*}\EE\left[\frac{\III({T_j < -\tau_j})}{1 + \sum_{k\notin\SSS^*, k\neq j} \III({T_k < -\tau_j})}{\bigg\vert \{\AAA_i\}_{i=1}^N, \{\BBB_i\}_{i=1}^N}\right] \\&= \sum\limits_{j\notin\SSS^*}\EE\left[\frac{\III({T_j < -\tau_j})}{1 + \sum_{k\notin\SSS^*, k\neq j} \III({T_k < -\tau_k})}{\bigg\vert \{\AAA_i\}_{i=1}^N, \{\BBB_i\}_{i=1}^N}\right]  \\
 &= \sum\limits_{j\notin\SSS^*}\EE\left[\frac{\III({T_j < -\tau_j})}{ \sum_{j\notin\SSS^*} \III({T_k < -\tau_k})}{\bigg\vert \{\AAA_i\}_{i=1}^N, \{\BBB_i\}_{i=1}^N}\right]\\
 &= 1.
\end{aligned}
\end{equation}
This finishes the proof of $\EE[R_{\epsilon}(\tau){\vert \{\AAA_i\}_{i=1}^N, \{\BBB_i\}_{i=1}^N}] \leq e^\epsilon$.

{ For the case when $\PP = \QQ$, we need to show that $\PP_{ij}( \xx_{ij}\vert \xxo_{i,-j}, \yyo_i)\cdot \PP_{ij}(\xxko_{i,-j}, \xxk_{ij}\vert \xxo_{i,-j}, \xx_{ij}, \yyo_i) $ remains the same after switching the roles of $\xx_{ij}$ and $\xxk_{ij}$. It is equivalent to proving that 
\begin{equation}\label{eq:disswap}
    (\XX_{ij}, \XXk_{ij}, \XXo_{i, -j}, \XXko_{i, -j}, \YYo_i) \stackrel{d}{=} (\XXk_{ij},  \XX_{ij}, \XXo_{i, -j}, \XXko_{i, -j}, \YYo_i)
\end{equation}
for $j\notin \SSS^*$ and $i\in\{i:j\in(\mathcal{M}\cup\AAA_i)\}$ under the true model, where $\XX$, $\XXk$, $\XXo$ and $\XXko$ were defined in Section 4.

 By Proposition \ref{prop:knockZmiss}, for each $i = 1, ..., N$, there exists $\ZZ_i^*$ and $\ZZk_i^*$ satisfying Definition \ref{def:knockoff}. 
Note that for any $j\notin \SSS^*$ and $i\in\{i:j\in(\mathcal{M}\cup\AAA_i)\}$, we have
\begin{align}
    p(Z^*_{ij}, \tilde{Z}^*_{ij}, \ZZ^*_{i, -j}, \ZZk^*_{i, -j}, \YYo_i) 
    &= p(Z^*_{ij}, \tilde{Z}^*_{ij}, \ZZ^*_{i, -j}, \ZZk^*_{i, -j}) p(\YYo_i\vert\ZZ^*_i, \ZZk_i^*)\\
    &= p(Z^*_{ij}, \tilde{Z}^*_{ij}, \ZZ^*_{i, -j}, \ZZk^*_{i, -j}) p(\YYo_i\vert\ZZ^*_i)\\
    &= p(Z^*_{ij}, \tilde{Z}^*_{ij}, \ZZ^*_{i, -j}, \ZZk^*_{i, -j}) p(\YYo_i\vert\ZZ_{i,-j})\\
    &= p((Z^*_{ij}, \tilde{Z}^*_{ij})_{\text{swap}},  \ZZ^*_{i, -j}, \ZZk^*_{i, -j}) p(\YYo_i\vert\ZZ_{i,-j}),
\end{align}
where $(A, B)_{\text{swap}}$ means swapping the values of two random variables $A$ and $B$.
Here the first equality holds since $\ZZk^*_i$ and $\YYo_i$ are conditionally independent given $\ZZ^*_i$, the second equality holds since $\YYo_i\vert\ZZ_i$ follows the true latent regression IRT model as well as $j\notin\SSS^*$, and the third equality holds by letting $\SSS = \{j\}$ in the last property of Definition \ref{def:knockoff}.
This implies that
\begin{equation}\label{eq:swapZ}
    (Z^*_{ij}, \tilde{Z}^*_{ij}, \ZZ^*_{i, -j}, \ZZk^*_{i, -j}, \YYo_i) \stackrel{d}{=} (\tilde{Z}^*_{ij}, Z^*_{ij}, \ZZ^*_{i, -j}, \ZZk^*_{i, -j}, \YYo_i).
\end{equation}
Therefore,
\[\begin{aligned}
    &p((\XX_{ij}, \XXk_{ij})_{\text{swap}}, \XXo_{i, -j}, \XXko_{i, -j},\YYo_i) \\
    &= \int p((\XX_{ij}, \XXk_{ij})_{\text{swap}}, \XXo_{i, -j}, \XXko_{i, -j}\vert Z^*_{ij}, \tilde{Z}^*_{ij}, \ZZ^*_{i, -j}, \ZZk^*_{i, -j},\YYo_i) \cdot p(Z^*_{ij}, \tilde{Z}^*_{ij}, \ZZ^*_{i, -j}, \ZZk^*_{i, -j},\YYo_i) d\ZZ^*d\ZZk^*\\
    &= \int p((\XX_{ij}, \XXk_{ij})_{\text{swap}}, \XXo_{i, -j}, \XXko_{i, -j}\vert Z^*_{ij}, \tilde{Z}^*_{ij}, \ZZ^*_{i, -j}, \ZZk^*_{i, -j}) \cdot p(Z^*_{ij}, \tilde{Z}^*_{ij}, \ZZ^*_{i, -j}, \ZZk^*_{i, -j},\YYo_i) d\ZZ^*d\ZZk^*\\
    &=\int p((\XX_{ij}, \XXk_{ij})_{\text{swap}}\vert Z^*_{ij}, \tilde{Z}^*_{ij})\cdot p( \XXo_{i, -j}, \XXko_{i, -j}\vert\ZZ^*_{i, -j}, \ZZk^*_{i, -j})\cdot p(Z^*_{ij}, \tilde{Z}^*_{ij}, \ZZ^*_{i, -j}, \ZZk^*_{i, -j},\YYo_i)  d\ZZ^*d\ZZk^*\\
\end{aligned}
\]
Here the second equality holds since $\XX, \XXk$ are conditionally independent of $\YYo_i$ given $\ZZ^*$ and $\ZZk^*$, which follows from the second property stated in Definition \ref{def:knockoff}, and the last equality holds since all $\XX_{ij}$'s and $\XXk_{ij}$'s are  conditionally independent given $\ZZ^*$ and $\ZZk^*$. Note that by Definition \ref{def:knockoff}, $\XX_{ij}\vert Z_{ij}^*$ and $\XXk_{ij}\vert \tilde Z_{ij}^*$ follow the same conditional distribution, and thus $p((\XX_{ij}, \XXk_{ij})_{\text{swap}}\vert Z^*_{ij}, \tilde{Z}^*_{ij}) = p(\XX_{ij}, \XXk_{ij}\vert (Z^*_{ij}, \tilde{Z}^*_{ij})_{\text{swap}})$. As a result,
\[\begin{aligned}
    &p((\XX_{ij}, \XXk_{ij})_{\text{swap}}, \XXo_{i, -j}, \XXko_{i, -j},\YYo_i) \\
    &= \int p(\XX_{ij}, \XXk_{ij}\vert (Z^*_{ij}, \tilde{Z}^*_{ij})_{\text{swap}})\cdot p( \XXo_{i, -j}, \XXko_{i, -j}\vert\ZZ^*_{i, -j}, \ZZk^*_{i, -j})\cdot p(Z^*_{ij}, \tilde{Z}^*_{ij}, \ZZ^*_{i, -j}, \ZZk^*_{i, -j},\YYo_i)  d\ZZ^*d\ZZk^* \\
    &= \int p(\XX_{ij}, \XXk_{ij}\vert (Z^*_{ij}, \tilde{Z}^*_{ij})_{\text{swap}})\cdot p( \XXo_{i, -j}, \XXko_{i, -j}\vert\ZZ^*_{i, -j}, \ZZk^*_{i, -j})\cdot p((Z^*_{ij}, \tilde{Z}^*_{ij})_{\text{swap}}, \ZZ^*_{i, -j}, \ZZk^*_{i, -j},\YYo_i)  d\ZZ^*d\ZZk^* \\
    &= \int p(\XX_{ij}, \XXk_{ij}\vert Z^*_{ij}, \tilde{Z}^*_{ij})\cdot p( \XXo_{i, -j}, \XXko_{i, -j}\vert\ZZ^*_{i, -j}, \ZZk^*_{i, -j})\cdot p(Z^*_{ij}, \tilde{Z}^*_{ij}, \ZZ^*_{i, -j}, \ZZk^*_{i, -j},\YYo_i)  d\ZZ^*d\ZZk^* \\
    &=p((\XX_{ij}, \XXk_{ij})_{\text{swap}}, \XXo_{i, -j}, \XXko_{i, -j},\YYo_i),
\end{aligned}\]
where we made use of \eqref{eq:swapZ} in the second equation. This finishes the proof of \eqref{eq:disswap}.}

\end{proof}

\begin{proof}[Proof of Theorem \ref{thm:derandpfer}]
By definition, we have
\begin{align}
    &\EE|\{j: j\in \hat{\mathcal S}\setminus \mathcal S^* \mbox{~and~} \hat{\text{KL}}_j^{(m)}  \leq \epsilon,  m =1, ..., M\}| \\
    &= \EE \left[\sum\limits_{j\in \hat{\mathcal S}\setminus \mathcal S^*} \III\left(\Pi_j \geq \eta \text{~and~} \max_{1\leq m\leq M} \hat{\text{KL}}_{j}^{(m)}\leq\epsilon\right)\right]\\
    &= \sum\limits_{j\in \hat{\mathcal S}\setminus \mathcal S^*} \PP \left(\Pi_j \geq \eta \text{~and~} \max_{1\leq m\leq M} \hat{\text{KL}}_{j}^{(m)}\leq\epsilon\right) \\
    &= \sum\limits_{j\in \hat{\mathcal S}\setminus \mathcal S^*} \PP \left(\frac{1}{M} \sum\limits_{m=1}^{M}\III\left(j\in\SSS^{(m)}\right)\geq \eta \text{~and~} \max_{1\leq m\leq M}\hat{\text{KL}}_{j}^{(m)}\leq\epsilon\right).
\end{align}
Note that within the event $\{\max_{1\leq m\leq M}\hat{\text{KL}}_{j}^{(m)}\leq\epsilon\}$, $$\III\left(j\in\SSS^{(m)}\right) = \III\left(j\in\SSS^{(m)} \text{~and~} \hat{\text{KL}}_{j}^{(m)}\leq\epsilon\right).$$
Thus
\begin{align}
    &\EE|\{j: j\in \hat{\mathcal S}\setminus \mathcal S^* \mbox{~and~} \hat{\text{KL}}_j^{(m)}  \leq \epsilon,  m =1, ..., M\}| \\
    &= \sum\limits_{j\in \hat{\mathcal S}\setminus \mathcal S^*} \PP \left(
    \frac{1}{M} \sum\limits_{m=1}^{M}\III\left(j\in\SSS^{(m)} \text{~and~} \hat{\text{KL}}_{j}^{(m)}\leq\epsilon\right) \geq \eta 
    \text{~and~} \max_{1\leq m\leq M}\hat{\text{KL}}_{j}^{(m)}\leq\epsilon\right)\\
    &\leq \sum\limits_{j\in \hat{\mathcal S}\setminus \mathcal S^*} \PP \left(
    \frac{1}{M} \sum\limits_{m=1}^{M}\III\left(j\in\SSS^{(m)} \text{~and~} \hat{\text{KL}}_{j}^{(m)}\leq\epsilon\right) \geq \eta\right) \\
    &= \sum\limits_{j\in \hat{\mathcal S}\setminus \mathcal S^*} \PP \left(
    \Pi_j^\dagger \geq \eta\right) 
\end{align}
By the assumption $\PP \left(
\Pi_j^\dagger \geq \eta\right) \leq \EE \left[\Pi_j^\dagger\right]$ and the result of Theorem \ref{thm:robust},
\begin{align}
    \sum\limits_{j\in \hat{\mathcal S}\setminus \mathcal S^*} \PP \left(
    \Pi_j^\dagger \geq \eta\right) &
    \leq \sum\limits_{j\in \hat{\mathcal S}\setminus \mathcal S^*} \EE \left[\Pi_j^\dagger \right] = \EE\left[ \sum\limits_{j\in \hat{\mathcal S}\setminus \mathcal S^*}\III\left(j\in\SSS^{(1)} \text{~and~} \hat{\text{KL}}_{j}^{(1)}\leq\epsilon\right)\right]\leq \nu e^\epsilon, 
\end{align}
which proved the first part of Theorem \ref{thm:derandpfer}.
The proof of the second part of Theorem \ref{thm:derandpfer} is totally the same as that of Proposition 2 in \cite{ren2021derandomizing}, and we omit it here.
\end{proof}

\section{Proof of Propositions}
\begin{proof}[Proof of Proposition \ref{prop:knockZmiss}] 


Let $\ZZ^{*, \text{Alg1}}_i$ be the underlying variables obtained from Step~1 of Algorithm \ref{alg:knockcons2}.  
Furthermore, let $\ZZalg_i$ be the complete data obtained by imputing $\ZZ^{*, \text{Alg1}}_i$ using the transformation functions within the true Gaussian copula model, i.e., $\Zalg_{ij} = F_j(Z_{ij}^{* , \text{Alg1}}),~j=1,\ldots, p$. For simplicity, we still use $\WWk_i$, $\ZZk^*_i$, $\ZZk_i$, and $\ZZko_i$ to denote the knockoff copies obtained from Algorithm \ref{alg:knockcons2}. We will verify that $(\WWk_i, \ZZko_i)$ satisfies all conditions in Definition~\ref{def:knockoff}. 

The proof for the first condition is straightforward. 
To verify the second condition, we first notice that under the SMAR condition,
\begin{equation}\label{eq:condunder}
    \begin{aligned}
        p(\ZZ^*_i, \WW_i, \ZZo_i, \YYo_i,  \AAA_i) &= p(\ZZ^*_i, \WW_i, \ZZo_i, \YYo_i) p(\AAA_i \vert \ZZ^*_i, \WW_i, \ZZo_i, \YYo_i) \\
        &= p(\ZZ^*_i, \WW_i, \ZZo_i, \YYo_i) p(\AAA_i \vert \WW_i, \ZZo_i),\\
    \end{aligned}
\end{equation}
where $\ZZ^*_i$ represents the true underlying variables in the data generating process. This implies that the missing mechanism is ignorable. Since $\ZZ^{*, \text{Alg1}}_i$ is sampled according to $p(\ZZ^*_i\vert\WW_i, \ZZo_i, \YYo_i)$, the joint distribution of 
$\ZZ^{*, \text{Alg1}}_i$ , $\WW_i, \ZZo_i,$ and $\YYo_i$ is the same as that of $\ZZ^{*}_i$ , $\WW_i, \ZZo_i, $ and $\YYo_i$. Therefore, we have
$$p(\YYo_i, \WW_i\vert\ZZ^{*, \text{Alg1}}_i) = p(\YYo_i\vert\ZZ^{*, \text{Alg1}}_i) p(\WW_i\vert\ZZ^{*, \text{Alg1}}_i).$$
Moreover, it is obvious that $\ZZk_i^*$ is conditionally independent of $(\YYo_i, \WW_i)$ given $\ZZ_i^*$ based on its construction. This concludes the proof for the second condition.

Based on the aforementioned arguments, as the 
joint distribution of $\ZZ^{*, \text{Alg1}}_i$, $\WW_i$, $\ZZo_i$, and $\YYo_i$
is identical to the true distribution, it is easy to see that third and forth conditions holds. Finally, the last condition is also satisfied,  as the covariance matrix of $(\ZZ_i^*, \ZZk_i^*)$ remains unchanged when simultaneously swapping the $j$-th row with the $j+p$-th row and the $j$-th column with the $j+p$-th column for any $j\in\SSS$. 

\end{proof}

\begin{proof}[Proof of Proposition \ref{prop:knockffstat}]
Let $(\hat{\bbb}, \hat{\rrr})$ be the maximum likelihood estimates obtained by solving \eqref{eq:latregest}. For any subset $\SSS \subset \{1, \ldots, p\},$ 
\begin{equation*}
    	{t_{j}}\left(\left\{(\WW, \ZZo), (\WWk, \ZZko)\right\}_{\mathrm{swap}(\mathcal S)}, \YYo\right)= 
    	\begin{cases}
    	\operatorname{sign}(\|\hat \bbb_j^\dagger\|  - \|\hat \rrr_j^\dagger\|)\max\left\{\|\hat \bbb_j^\dagger\|/\sqrt{p_j}, \|\hat \rrr_j^\dagger\|/\sqrt{p_j}\right\} = {T_j}, & j \notin \mathcal S, 
    	\\ \operatorname{sign}(\|\hat \rrr_j^\dagger\| - \|\hat \bbb_j^\dagger\|  )\max\left\{\|\hat \bbb_j^\dagger\|/\sqrt{p_j}, \|\hat \rrr_j^\dagger\|/\sqrt{p_j}\right\} = {-T_j}, & j \in \mathcal S,
    	\end{cases}
\end{equation*}
Hence {$T_j$} given by \eqref{eq:knockoff} satisfies Definition~\ref{def:feature}.
\end{proof}

\begin{proof}[Proof of Proposition \ref{prop:pfer0}]
This proposition is implied by  Theorem \ref{thm:robust} in the case of $\PP = \QQ$.
\end{proof}

\begin{proof}[Proof of Proposition \ref{prop:pfer}]
This proposition is implied by  Theorem \ref{thm:derandpfer} in the case of $\PP = \QQ$.
\end{proof}

\section{Proof of Lemmas}\label{app:lemproof}
\begin{proof}[Proof of Lemma \ref{lem:null}]
    For simplicity, we omit the subscript $i$ in this proof. Due to the linear structure of the latent regression model \eqref{eq:likelihood}, it is easy to show that when any variable in vector $(g_1(Z_{1}), \ldots, g_p(Z_p))^\top$ cannot be perfectly predicted by the others, $\SSS^*$ is the index set of non-null variables. Therefore, we need only to prove that if the underlying correlation matrix $\SSigma$ has full rank, then it is impossible to make a perfect prediction of any variable in the vector $(g_1(Z_{1}), \ldots, g_p(Z_p))^\top$ using the others.
    Suppose for the sake of contradiction that some variable in $g_j(Z_{j})$ can be perfectly predicted by the other variables. 
\begin{itemize}
    \item If $Z_j$ is continuous, note that $Z_j^* = (Z_j - c_j)/d_j = (g_j(Z_j) - c_j)/d_j$ and $\{g_l(Z_{l})\}_{l\neq j}$ are functions of $\{Z_l\}_{l\neq j}$, which implies that $Z_j^*$ can be predicted by $\{Z_l^*\}_{l\neq j}$. This is impossible under the full rank assumption. 
    \item If $Z_j$ is binary, then similarly we know that $\III(Z_j^* > c_j)$ can be predicted by $\{Z_l^*\}_{l\neq j}$. This leads to a contradiction of the full rank assumption, since $Z_j^*$ given $\{Z_l^*\}_{l\neq j}$ follows a non-degenerated normal distribution with support on the whole real line.
    \item Finally, if $Z_j$ is ordinal with $K_j + 1$ categories, then there exit $k \in \{1, \ldots, K_j\}$ such that $\III(Z_j^* > c_{jk})$ can be perfectly predicted by $\{\III(Z_j^* > c_{jk^\prime}): k^\prime \neq k\}$ along with $\{Z_l^*\}_{l\neq j}$.     
    Consider the event
    $$\mathcal{E} = \left\{ Z_j^* \in (c_{j,k-1}, c_{j, k+1})\right\} = \left\{\III(Z_j^* > c_{jk^\prime}) = 0 \text{ for all } k^\prime > k \text{ and } \III(Z_k^* > c_{jk^\prime}) = 1 \text{ for all } k^\prime < k\right\}.
    $$
     According to the assumption of contraction, $Z_j^*$ follows a distribution with support either on $(c_{j,k-1}, c_{j, k})$ or $(c_{j,k}, c_{j, k+1})$, conditioned on event $\mathcal{E}$ and $\{Z_l^*\}_{l\neq j}$. However,
    due to the full rank assumption, $Z_j^*$ should follow a truncated normal distribution with support $(c_{j,k-1}, c_{j, k+1})$, conditioned on event $\mathcal{E}$ along with $\{Z_l^*\}_{l\neq j}$. This leads to a contradiction.
\end{itemize}
To conclude, we have proved our claim by by using an argument of contradiction.
\end{proof}

\begin{proof}[Proof of Lemma \ref{lem:Wcontrol}]
Hereafter in this section, we will denote $\mathcal{M} = \{1, ..., p_1\}.$ Recall the definition of $ \XXo_{i}$ and $\XXko_{i}$ in Subsection~\ref{subsec:robust}.
For each $i=1, ..., N$ and $j = 1, ... ,p$, we also introduce
$$\begin{aligned}        
        \XXo_{i, -j}&= \{\XX_{ik}: k\in(\mathcal{M}\cup\AAA_i)\backslash\{j\}\}, \\
    \XXo_{j} &= \{\XX_{ij}:  j\in (\mathcal{M}\cup\AAA_i)\},\\ 
    \XXo_{-j}&= \{\XX_{ik}: k\in(\mathcal{M}\cup\AAA_i)\backslash\{j\}\}, \\
    \XXko_{i, -j} &= \{\XXk_{ik}: k\in(\mathcal{M}\cup\AAA_i)\backslash\{j\}\},\\
\XXko_{j} &= \{\XXk_{ij}: j\in (\mathcal{M}\cup\AAA_i)\}, \\ \XXko_{-j} &= \{\XXk_{ik}: k\in(\mathcal{M}\cup\AAA_i)\backslash\{j\}
\}.
\end{aligned}$$ 


From now on we fix {$j\notin \SSS^*$}. Label the unordered pair of vectors 
$\{\XXo_{j}, \XXko_{j}\}$ 
as $\{\XX_{j}^{(0)}, \XX_{j}^{(1)}\}$, such that 
\begin{equation}
    \begin{cases}
    \text{If } \XXo_{j} = \XX_{j}^{(0)} \text{ and } \XXko_{j} = \XX_{j}^{(1)}, \text{then } T_j \geq 0, \\
    \text{If } \XXo_{j} = \XX_{j}^{(1)} \text{ and } \XXko_{j} = \XX_{j}^{(0)}, \text{then } T_j \leq 0.
    \end{cases}
\end{equation}
It follows from the flip-sign property in Definition~\ref{def:feature} that the statistic  {$|T_j|$} is a function of $\XX_{j}^{(0)}, \XX_{j}^{(1)}, \XXo_{-j}, \XXko_{-j},$ and $\YYo$. It is worth noting that the sign of $T_j$ cannot be determined if we only have the information of $\{\XX_{j}^{(0)}, \XX_{j}^{(1)}\}$.
It is also easy to see that for any $k\neq j$, $T_k$ is also a function of $\XX_{j}^{(0)}, \XX_{j}^{(1)}, \XXo_{-j}, \XXko_{-j},$ and $\YYo$.

Ignoring the trivial case $|T_j| = 0$, we have
\begin{align*}
&\frac{\PP(T_j > 0 \vert {\XX_{j}^{(0)}, \XX_{j}^{(1)}, \XXo_{-j}, \XXko_{-j}, \YYo}, { \{\AAA_i\}_{i=1}^N, \{\BBB_i\}_{i=1}^N})}{\PP(T_j < 0 \vert {\XX_{j}^{(0)}, \XX_{j}^{(1)}, \XXo_{-j}, \XXko_{-j}, \YYo, { \{\AAA_i\}_{i=1}^N, \{\BBB_i\}_{i=1}^N}} )} 
\\ 
=& \frac{\PP((\XXo_j, \XXko_j) = (\XX_{j}^{(0)}, \XX_{j}^{(1)}) \vert {\XX_{j}^{(0)}, \XX_{j}^{(1)}, \XXo_{-j}, \XXko_{-j}, \YYo, { \{\AAA_i\}_{i=1}^N, \{\BBB_i\}_{i=1}^N}} )}{\PP((\XXo_j, \XXko_j) = (\XX_{j}^{(1)}, \XX_{j}^{(0)}) \vert {\XX_{j}^{(0)}, \XX_{j}^{(1)}, \XXo_{-j}, \XXko_{-j}, \YYo, { \{\AAA_i\}_{i=1}^N, \{\BBB_i\}_{i=1}^N}} )}
\\
=& {\prod\limits_{{i:j\in (\mathcal{M}\cup\AAA_i})}}\frac{\PP((\XX_{ij}, \tilde{\XX}_{ij}) = (\XX_{ij}^{(0)}, \XX_{ij}^{(1)}) \vert {\XX_{ij}^{(0)}, \XX_{ij}^{(1)}, \XXo_{i,-j}, \XXko_{i,-j}, {\YYo_i}, {\AAA_i, \BBB_i}})}{\PP((\XX_{ij}, \XXk_{ij}) = (\XX_{ij}^{(1)}, \XX_{ij}^{(0)}) \vert {\XX_{ij}^{(0)}, \XX_{ij}^{(1)}, \XXo_{i,-j}, \XXko_{i,-j}, {\YYo_i}, {\AAA_i, \BBB_i}})}. 
\end{align*}
Note that in equation~\eqref{eq:Wprop}, $\{{i:j\in (\mathcal{M}\cup\AAA_i})\} = \{i: i = 1, ..., N\}$ if $j \in \mathcal{M}$, and $\{i:j\in (\mathcal{M}\cup\AAA_i)\} = \{i:j\in\AAA_i\}$ if $j \notin \mathcal{M}$.
For each $i$, given that $j \notin \SSS^*$, by the SMAR assumption on $\AAA_i$ and MCAR assumption on $\BBB_i$, we have
{
\[\begin{aligned}
&p\left({\AAA_i, \BBB_i} \vert (\XX_{ij}, \tilde{\XX}_{ij}) = (\XX_{ij}^{(0)}, \XX_{ij}^{(1)}), {\XX_{ij}^{(0)}, \XX_{ij}^{(1)}, \XXo_{i,-j}, \XXko_{i,-j}, {\YYo_i}}\right)\\ 
    &= p({\AAA_i} \vert \XXo_{i,-j}) p({\BBB_i} \vert \YYo_i)\\
    &= p\left({\AAA_i, \BBB_i} \vert {\XX_{ij}^{(0)}, \XX_{ij}^{(1)}, \XXo_{i,-j}, \XXko_{i,-j}, {\YYo_i}}\right).\\
\end{aligned}\]
}
Therefore,
\[\begin{aligned}
    &\PP\left((\XX_{ij}, \tilde{\XX}_{ij}) = (\XX_{ij}^{(0)}, \XX_{ij}^{(1)}) \vert {\XX_{ij}^{(0)}, \XX_{ij}^{(1)}, \XXo_{i,-j}, \XXko_{i,-j}, {\YYo_i}, {\AAA_i, \BBB_i}}\right)\\
    &= \PP\left((\XX_{ij}, \tilde{\XX}_{ij}) = (\XX_{ij}^{(0)}, \XX_{ij}^{(1)}) \vert {\XX_{ij}^{(0)}, \XX_{ij}^{(1)}, \XXo_{i,-j}, \XXko_{i,-j}, {\YYo_i}}\right) \\
    &\quad\cdot \PP\left({\AAA_i, \BBB_i} \vert (\XX_{ij}, \tilde{\XX}_{ij}) = (\XX_{ij}^{(0)}, \XX_{ij}^{(1)}), {\XX_{ij}^{(0)}, \XX_{ij}^{(1)}, \XXo_{i,-j}, \XXko_{i,-j}, {\YYo_i}}\right)\\
    &\quad\cdot
    {\PP\left({\AAA_i, \BBB_i} \vert {\XX_{ij}^{(0)}, \XX_{ij}^{(1)}, \XXo_{i,-j}, \XXko_{i,-j}, {\YYo_i}}\right)^{-1}}\\
    &= \PP\left((\XX_{ij}, \tilde{\XX}_{ij}) = (\XX_{ij}^{(0)}, \XX_{ij}^{(1)}) \vert {\XX_{ij}^{(0)}, \XX_{ij}^{(1)}, \XXo_{i,-j}, \XXko_{i,-j}, {\YYo_i}}\right).
\end{aligned}\]
Similarly, 
\[\begin{aligned}
    &\PP\left((\XX_{ij}, \tilde{\XX}_{ij}) = (\XX_{ij}^{(1)}, \XX_{ij}^{(0)}) \vert {\XX_{ij}^{(0)}, \XX_{ij}^{(1)}, \XXo_{i,-j}, \XXko_{i,-j}, {\YYo_i}, {\AAA_i, \BBB_i}}\right) \\
    &= \PP\left((\XX_{ij}, \tilde{\XX}_{ij}) = (\XX_{ij}^{(0)}, \XX_{ij}^{(1)}) \vert {\XX_{ij}^{(0)}, \XX_{ij}^{(1)}, \XXo_{i,-j}, \XXko_{i,-j}, {\YYo_i}}\right),
\end{aligned}\]
and we can conclude that 
\begin{equation}\label{eq:Wprop}
    \frac{\PP(T_j > 0 \vert {\XX_{j}^{(0)}, \XX_{j}^{(1)}, \XXo_{-j}, \XXko_{-j}, \YYo}, { \{\AAA_i\}_{i=1}^N, \{\BBB_i\}_{i=1}^N})}{\PP(T_j < 0 \vert {\XX_{j}^{(0)}, \XX_{j}^{(1)}, \XXo_{-j}, \XXko_{-j}, \YYo, { \{\AAA_i\}_{i=1}^N, \{\BBB_i\}_{i=1}^N}} )} = \frac{\PP(T_j > 0 \vert {\XX_{j}^{(0)}, \XX_{j}^{(1)}, \XXo_{-j}, \XXko_{-j}, \YYo})}{\PP(T_j < 0 \vert {\XX_{j}^{(0)}, \XX_{j}^{(1)}, \XXo_{-j}, \XXko_{-j}, \YYo} )}
\end{equation}

Let $\xx_{ij}^{(0)}, \xx_{ij}^{(1)}, \xxo_{i,-j}, \xxko_{i,-j},$ and $ \yyo_i$ be the realisations of $\XX_{ij}^{(0)}, \XX_{ij}^{(1)}, \XXo_{i,-j}, \XXko_{i,-j},$ and $ \YYo_i$, respectively, for each $i \in \{i: j \in (\mathcal{M}\cup\AAA_i)\}$.
{Observe that the conditional probability function of $\XX_{ij}, \XXk_{ij}, \XXko_{i,-j}$ given $\XXo_{i,-j}$ and $\YYo_i$ can be decomposed as
\begin{align}
    p(\XX_{ij}, \XXk_{ij}, \XXko_{i,-j}\vert \XXo_{i,-j},  \YYo_i) = 
    \PP_{ij}(\XX_{ij}\vert \XXo_{i,-j},  \YYo_i) \cdot \QQ_{ij}(\XXko_{i,-j}, \XXk_{ij} \vert \XXo_{i,-j}, \XX_{ij}, \YYo_i).\label{eq:swap}
\end{align}}
Therefore, we have
\begin{align}
    &{\prod\limits_{{i:j\in (\mathcal{M}\cup\AAA_i})}}\frac{\PP((\XX_{ij}, \XXk_{ij}) = (\xx_{ij}^{(0)}, \xx_{ij}^{(1)}) \vert {\XX_{ij}^{(0)} = \xx_{ij}^{(0)}, \XX_{ij}^{(1)} = \xx_{ij}^{(1)}, \XXo_{i,-j} = \xxo_{i,-j}, \XXko_{i,-j} = \xxko_{i,-j}, \YYo_i = \yyo_i})}{\PP((\XX_{ij}, \XXk_{ij}) = (\xx_{ij}^{(1)}, \xx_{ij}^{(0)}) \vert {\XX_{ij}^{(0)} = \xx_{ij}^{(0)}, \XX_{ij}^{(1)} = \xx_{ij}^{(1)}, \XXo_{i,-j} = \xxo_{i,-j}, \XXko_{i,-j} = \xxko_{i,-j}, \YYo_i = \yyo_i})}\\
    =&{\prod\limits_{{i:j\in (\mathcal{M}\cup\AAA_i})}}\frac{p(\XX_{ij}= \xx_{ij}^{(0)}, \XXk_{ij} =\xx_{ij}^{(1)}, \XXko_{i,-j} = \xxko_{i,-j} \vert {\XXo_{i,-j} = \xxo_{i,-j}, \YYo_i = \yyo_i})}{p(\XX_{ij}= \xx_{ij}^{(1)}, \XXk_{ij} =\xx_{ij}^{(0)}, \XXko_{i,-j} = \xxko_{i,-j}  \vert {\XXo_{i,-j} = \xxo_{i,-j}, \YYo_i = \yyo_i})}\\
    =& {\prod\limits_{{i:j\in (\mathcal{M}\cup\AAA_i})}}\frac{\PP_{ij}( \xx_{ij}^{(0)}\vert \xxo_{i,-j}, \yyo_i)\cdot \QQ_{ij}(\xxko_{i,-j}, \xx_{ij}^{(1)}\vert \xxko_{i,-j}, \xx_{ij}^{(0)}, \yyo_i)}{\PP_{ij}( \xx_{ij}^{(1)}\vert \xxo_{i,-j}, \yyo_i)\cdot \QQ_{ij}(\xxko_{i,-j}, \xx_{ij}^{(0)}\vert \xxko_{i,-j}, \xx_{ij}^{(1)}, \yyo_i)}. \label{eq:KL}
\end{align}


We now define $\rho_j$ as a function of $\XX_{j}^{(0)}, \XX_{j}^{(1)}, \XXo_{-j}, \XXko_{-j},$ and $\YYo$ such that
\begin{equation}\label{eq:rhodef}
    \begin{aligned}
    \rho_j = \rho_j(\XX_{j}^{(0)}, \XX_{j}^{(1)}, \XXo_{-j}, \XXko_{-j}, \YYo) = \log \frac{\PP(T_j > 0 \vert {\XX_{j}^{(0)}, \XX_{j}^{(1)}, \XXo_{-j}, \XXko_{-j}, \YYo })}{\PP(T_j < 0 \vert {\XX_{j}^{(0)}, \XX_{j}^{(1)}, \XXo_{-j}, \XXko_{-j}, \YYo} )} 
\end{aligned}.
\end{equation}
By \eqref{eq:Wprop}, $\rho_j$ can also be expressed as
\begin{equation}\label{eq:rhoeq}
    \rho_j = \log\left[{\prod\limits_{{i:j\in (\mathcal{M}\cup\AAA_i})}} \frac{\PP((\XX_{ij}, \XXk_{ij}) = (\XX_{ij}^{(0)}, \XX_{ij}^{(1)}) \vert {\XX_{ij}^{(0)}, \XX_{ij}^{(1)}, \XXo_{i,-j}, \XXko_{i,-j}, \YYo_i})}{\PP((\XX_{ij}, \XXk_{ij}) = (\XX_{ij}^{(1)}, \XX_{ij}^{(0)}) \vert {\XX_{ij}^{(0)}, \XX_{ij}^{(1)}, \XXo_{i,-j}, \XXko_{i,-j}, \YYo_i})}\right].
\end{equation}
According to the definition of $\XX_j^{(0)}$ and $\XX_j^{(1)}$, if $T_j > 0$ then we have $(\XXo_j, \XXko_j) = (\XX_j^{(0)}, \XX_j^{(1)})$. Combining this with \eqref{eq:KL} and the definition of $\hat{\text{KL}}_j$, we get
\begin{equation}\label{eq:rho}
   \rho_j = \rho_j(\XXo_j, \XXko_j, \XXo_{-j}, \XXko_{-j}, \YYo) = \hat{\text{KL}}_j  \text{ within the event } \{T_j > 0\}.
\end{equation} 
Putting things together, we have
\begin{align}
    \PP&\left(T_j > 0, \hat{\text{KL}}_j \leq \epsilon \big\vert |T_j|,\TT_{-j}{ \{\AAA_i\}_{i=1}^N, \{\BBB_i\}_{i=1}^N}\right) \\
    =& \EE\left[\PP\left(T_j > 0, \hat{\text{KL}}_j \leq \epsilon \big\vert {\XX_{j}^{(0)}, \XX_{j}^{(1)}, \XXo_{-j}, \XXko_{-j}, \YYo}, { \{\AAA_i\}_{i=1}^N, \{\BBB_i\}_{i=1}^N}\right)  \big\vert |T_j|,\TT_{-j}, { \{\AAA_i\}_{i=1}^N, \{\BBB_i\}_{i=1}^N} \right] \\
    =& \EE\left[\PP\left(T_j > 0, \rho_j \leq \epsilon \big\vert {\XX_{j}^{(0)}, \XX_{j}^{(1)}, \XXo_{-j}, \XXko_{-j}, \YYo}, { \{\AAA_i\}_{i=1}^N, \{\BBB_i\}_{i=1}^N}\right)  \big\vert |T_j|,\TT_{-j}, { \{\AAA_i\}_{i=1}^N, \{\BBB_i\}_{i=1}^N} \right] \\    
    =& \EE\left[\PP\left(T_j > 0 \big\vert {\XX_{j}^{(0)}, \XX_{j}^{(1)}, \XXo_{-j}, \XXko_{-j}, \YYo}, { \{\AAA_i\}_{i=1}^N, \{\BBB_i\}_{i=1}^N}\right)  {\III(\rho_j\leq\epsilon)} \big\vert |T_j|,\TT_{-j}, { \{\AAA_i\}_{i=1}^N, \{\BBB_i\}_{i=1}^N} \right]\\
    =&  \EE\left[\PP\left(T_j < 0 \big\vert {\XX_{j}^{(0)}, \XX_{j}^{(1)}, \XXo_{-j}, \XXko_{-j}, \YYo}, { \{\AAA_i\}_{i=1}^N, \{\BBB_i\}_{i=1}^N}\right) e^{\rho_j}\III(\rho_j \leq\epsilon) \big\vert |T_j|,\TT_{-j}, { \{\AAA_i\}_{i=1}^N, \{\BBB_i\}_{i=1}^N} \right].
\end{align}
Here, the first equation holds 
since $|T_j|$ and $\TT_{-j} = \{T_k\}_{k\neq j}$ are functions of $\XX_{j}^{(0)}, \XX_{j}^{(1)}, \XXo_{-j}, \XXko_{-j},$ and $\YYo$,
the second equation holds by \eqref{eq:rho}, the third equation holds since $\rho_j$ is a function of $\XX_{j}^{(0)}, \XX_{j}^{(1)}, \XXo_{-j}, \XXko_{-j},$ and $\YYo$, and the final equation holds by \eqref{eq:Wprop} and \eqref{eq:rhodef}.
Note that since $e^{\rho_j}\III(\rho_j \leq\epsilon)\leq e^\epsilon$, \begin{align}
    &\EE\left[\PP\left(T_j < 0 \big\vert {\XX_{j}^{(0)}, \XX_{j}^{(1)}, \XXo_{-j}, \XXko_{-j}, \YYo}\right) e^{\rho_j}\III(\rho_j \leq\epsilon) \big\vert |T_j|,\TT_{-j} \right] \\
    \leq& e^\epsilon \EE\left[\PP\left(T_j < 0 \big\vert {\XX_{j}^{(0)}, \XX_{j}^{(1)}, \XXo_{-j}, \XXko_{-j}, \YY}\right)  \big\vert |T_j|,\TT_{-j} \right]\\
    =& e^\epsilon \PP\left(T_j < 0 \big\vert |T_j|,\TT_{-j}\right),
\end{align}
which finishes the proof of \eqref{eq:KLcontrol}.
\end{proof}

\begin{proof}[Proof of Lemma \ref{lem:Tjk}]
The first result is obvious since $T_j = |T_j|$ if $T_j > \tau \geq 0$. To prove the second result, without loss of generality we assume 
$\tau_j \leq \tau_k$, then we have $$T_j < -\tau_j \text{ and } T_k < -\tau_j,$$ and consequently
\[|T_j| > \tau_j \text{ and } |T_k| > \tau_j.\]
Note that we have
\[\begin{aligned}
    1 + \sum_{l\neq k}\III(T_l < -\tau_j) + \III(|T_k| < -\tau_j) &= 1 + \sum_{l\neq k}\III(T_l < -\tau_j) \\
    &= 1 + \sum_{l\neq j, k}\III(T_l < -\tau_j) + \III(T_j < -\tau_j) \\
    &= 1 + \sum_{l\neq j, k}\III(T_l < -\tau_j) + \III(T_k < -\tau_j) \\
    &= 1 + \sum_{l\neq j}\III(T_l < -\tau_j) \\
    &= 1 + \sum_{l\neq j}\III(T_l < -\tau_j) + \III(|T_j| < -\tau_j) \\
    &= v, \\
\end{aligned}\]
where the last equality holds by definition of $\tau_j$. Hence, by definition of $\tau_k$, it  holds  that $\tau_k \leq \tau_j$,
which finish the proof for $\tau_k = \tau_j$. The proof for the corollary is straightforward.
\end{proof}

\section{Computation}\label{app:algorithm}

\subsection{Algorithm for solving optimisation problem \texorpdfstring{\eqref{eq:mmlcopula}}{1}}
\subsubsection{Algorithm}

Recall the constrained optimisation problem \eqref{eq:mmlcopula}
\begin{equation}
\begin{aligned}
(\hat\Xi, \hat\Lambda_1, ..., \hat \Lambda_{p_1}) = \argmax_{\Xi, \Lambda_1, ..., \Lambda_{p_1}}\quad & l_1(\Xi, \Lambda_1, ..., \Lambda_{p_1})\\
 \text{ subject to }\quad&\Sigma_{jj} = 1,~j = 1, \ldots, p,\\
  & {d_j > 0,~j \notin \DD,}\\&c_{j1} < c_{j2} < \ldots < c_{jK_j},~j\in \DD. 
\end{aligned} 
\end{equation}

\subsubsection{Reparameterisation of the problem}
To better deal with the constraints in the optimisation problem, we first reparametrise part of the parameters in the model. For the correlation matrix $\SSigma$, we reparameterize it as $\SSigma = \LL\LL^\top$, where $\LL = (l_{ab})_{p\times p}$ is the Cholesky decomposition of $\SSigma$. Since $\LL$ is a lower triangular matrix, we have $l_{ab} = 0, ~a = j+1, \ldots, p$. By constraining $l_{aa}\neq 0$  and $\sum_{a=1}^b l_{ab}^2 = 1$, we can also guarantee that $\LL\LL^\top$ is positive definite and has unit diagonals.  For the threshold parameters of discrete variable $Z_j$,  we introduce $\uu_j = (u_{j1}, u_{j2}, \ldots, u_{jK_j})^\top (K_j \geq 1)$, such that 
\begin{align*}
	u_{jk} =
	\begin{cases}
		c_{j1},& \text{ if }k=1,\\
		\log(c_{jk} - c_{j, k-1}),& \text{ if }2\leq k \leq K_j.\\		
	\end{cases}
\end{align*}
Note that there are no constraints on  $\uu_j$ anymore. 

{We now define $\tilde\Xi = \{\LL\}\cup\{c_j, d_j, j\notin\DD\}\cup\{\uu_j,~j\in\DD\}$ as the set of parameters in the Gaussian copula model after reparametrisation, and set 
$\Upsilon = \tilde\Xi \cup\{\Lambda_1, ..., \Lambda_{p_1}\}$ as the set of all unknown parameters.
 Then we can rewrite $f_i(\ww_i, \zzo_i\vert \Xi, \Lambda_1, ..., \Lambda_{p_1})$ as
\begin{equation}\label{eq:redensity}
    f_i(\ww_i, \zzo_i\vert\Upsilon) = \int \cdots\int  f(\zz_i\vert \tilde\Xi) \left(\prod_{j=1}^{p_1} q_j(\mathbf w_{ij}\vert {z_{ij}}; \Lambda_j)\right) \left(\prod_{j\notin \AAA_i}d z_{ij}\right),
\end{equation}
where
\begin{equation}\label{eq:zsdensity}
    f(\zz_i\vert \tilde\Xi)= \int\ldots\int  \left(\prod_{j\in \DD } dz_{j}^{*}\right) \left[\phi(\zz^*\vert \LL\LL^\top) \times \left(\prod_{{j\notin\DD}} d_j^{-1}\right) \times  \left( \mathop{\prod}_{j\in \DD }\III (z_{j}^* \in  (c_{j, z_{j}-1},c_{j, z_{j}} ])\right) \right]\Bigg\vert_{z_{j}^* = \frac{z_{j} - c_j}{d_j}, j\notin\DD}.
\end{equation}
Hereafter, we will use the notations $c_{jk} (k = 0, 1, \ldots, K_j, K_j + 1)$ for $j\in \DD$ with slight abuse, to denote functions of $\uu_j$ such that
\begin{equation}
	c_{jk}(\uu_j) = 
	\begin{cases}
		-\infty, & \text{ if } k = 0.\\
		u_{j1}, & \text{ if } k = 1.\\ 
		u_{j1} + \sum\limits_{k=2}^{K_j}\exp(u_{jl}) & \text{ if } 2 \leq k \leq K_j.\\ 
		\infty, &  \text{ if } k = K_j+1.
	\end{cases}
\end{equation}  

The marginal likelihood function can then be written as
\[l_1(\Upsilon)= \sum_{i=1}^N \log f_i(\ww_i, \zzo_i\vert\Upsilon),\]
and constrained optimisation problem after reparametrisation \eqref{eq:mmlcopula} now becomes 
\begin{align}\label{eq:remmlcopula}
	\argmax_{\Upsilon}\quad& l_1(\Upsilon)\\
	\text{ subject to }\quad& l_{aa}\neq 0, a = 1, ..., p,\\
 & \sum_{a=1}^{b}l_{ab}^2 = 1, b = 1, ..., p,\\
 &l_{ab} = 0, 1\leq a < b \leq p,\\ 
 & d_j > 0,~\text{ if } j \notin \DD.\\
\end{align}
For convenience we also denote $\mathbf{\Omega}$ as the feasible set  in the following, where $\mathbf{\Omega} = \{\Upsilon: l_{aa}\neq 0, a = 1, ..., p; \sum_{a=1}^{b}l_{ab}^2 = 1, b = 1, ..., p; l_{ab} = 0, 1\leq a < b \leq p; d_j > 0,\text{ if } j \notin \DD\}$.}

\subsubsection{The stochastic gradient}
We propose to solve \eqref{eq:remmlcopula} by gradient-based method. However, the gradient of $l_1(\Upsilon)$ involves high-dimensional integrals, thus is difficult to compute exactly.  Fortunately, it can be approximated by method of sampling. For convenience, we define $\mathcal{M} = \{1, ..., p_1\}$, and 
$\CC = \{1, ..., {p}\}\backslash (\mathcal{M}\cup\DD)$. 

Hereafter, we will implicitly make use of the assumption that $Z_{ij} = Z_{ij}^*$, $c_j = 0$, and $d_j = 1$ for $j\in \mathcal{M}$.

\paragraph{Sampling distribution for $\ZZ^*$} 
We begin by rewriting $f_i(\ww_i, \zzo_i\vert\Upsilon)$ as 
\begin{equation}
\begin{aligned}    
     f_i(\ww_i, \zzo_i\vert\Upsilon) &= \int \cdots\int  \xi_i(\ww_i, \zz_i,  \zz_i^*\vert\Upsilon) \left(\prod_{j\notin (\mathcal{M}\cup\AAA_i)}d z_{ij}\right) \left(\prod_{j\in (\mathcal{M}\cup\DD)} dz_{ij}^{*}\right)\\
     &= \int \cdots\int \left[\int \cdots\int  \xi_i(\ww_i, \zz_i,  \zz_i^*\vert\Upsilon) \left(\prod_{j\notin (\mathcal{M}\cup\AAA_i)}d z_{ij}\right)\right] \left(\prod_{j\in (\mathcal{M}\cup\DD)} dz_{ij}^{*}\right)\\
\end{aligned}
\end{equation}
where
\begin{align*}
    &\xi_i(\ww_i, \zz_i,  \zz_i^*\vert\Upsilon)  = \\&\qquad\left(\phi(\zz_i^*\vert \LL\LL^\top) \times \prod_{j\in\CC} d_j^{-1} \times \prod_{j=1}^{p_1} q_j(\mathbf w_{ij}\vert {z_{ij}^*}; \Lambda_j) 
    \times \prod_{j\in\DD} \III (z_{ij}^* \in  (c_{j, z_{ij}-1},c_{j, z_{ij}} ])
    \right)\Bigg\vert_{z_{ij}^*= \frac{z_{ij} - c_j}{d_j}, j\in \mathcal{C}}.
\end{align*}
{Note that we have replaced $q_j(\mathbf w_{ij}\vert {z_{ij}}; \Lambda_j)$ by $q_j(\mathbf w_{ij}\vert {z_{ij}^*}; \Lambda_j)$ here.} It can be shown by change of variables that
\[\begin{aligned}
    &\int \cdots\int  \xi_i(\ww_i, \zz_i,  \zz_i^*\vert\Upsilon) \left(\prod_{j\notin \AAA_i}d z_{ij}\right) \\
    &= \left(\phi(\zz_i^*\vert \LL\LL^\top) \times \prod_{j\in\CC\cap\AAA_i} d_j^{-1} \times \prod_{j=1}^{p_1} q_j(\mathbf w_{ij}\vert {z_{ij}^*}; \Lambda_j) 
    \times \prod_{j\in\DD\cap\AAA_i} \III (z_{ij}^* \in  (c_{j, z_{ij}-1},c_{j, z_{ij}}])
    \right)\Bigg\vert_{z_{ij}^*= \frac{z_{ij} - c_j}{d_j}, j\in \CC\cap\AAA_i},
\end{aligned}\]
which depends only on $\ww_i, \zzo_i,$ and $\zz_i^*$. Hence we introduce the notation $\psi_i(\ww_i, \zzo_i, \zz_i^*\vert\Upsilon)$ to denote this integral, that is, let
\[\begin{aligned}
    &\psi_i(\ww_i, \zzo_i, \zz_i^*\vert\Upsilon) \\
    &= \int \cdots\int  \xi_i(\ww_i, \zz_i,  \zz_i^*\vert\Upsilon) \left(\prod_{j\notin \AAA_i}d z_{ij}\right)\\
    &= \left(\phi(\zz_i^*\vert \LL\LL^\top) \times \prod_{j\in\CC\cap\AAA_i} d_j^{-1} \times \prod_{j=1}^{p_1} q_j(\mathbf w_{ij}\vert {z_{ij}^*}; \Lambda_j) 
    \times \prod_{j\in\DD\cap\AAA_i} \III (z_{ij}^* \in  (c_{j, z_{ij}-1},c_{j, z_{ij}} ])
    \right)\Bigg\vert_{z_{ij}^*= \frac{z_{ij} - c_j}{d_j}, j\in \CC\cap\AAA_i}.
\end{aligned}\]

When $\ww_i$ and $\zzo_i$ are fixed, $\psi_i(\ww_i, \zzo_i, \zz_i^*\vert\Upsilon)$
can result in a density function for $\ZZ_i^*$ after normalisation. In the following, we use the notation $\mathcal{P}_i(\ww_i, \zzo_i\vert\Upsilon)$ to denote such a distribution for $\ZZ_i^*$.

\paragraph{Reformulation of derivatives}
Next, we will represent the derivatives of $l_1(\Upsilon)$ with respect to $\Upsilon$ using the prescribed sampling distribution for $\ZZ_i^*$.

For the derivatives with respect to $\LL$, we can use a change of measure to show that:
\[\frac{\partial \log f_i(\ww_i, \zzo_i\vert\Upsilon)}{\partial \LL} =\EE_{\ZZ_i^*\sim\mathcal{P}_i(\ww_i, \zzo_i\vert\Upsilon)}\left[\frac{\partial}{\partial \LL} \log \psi_i(\ww_i, \zzo_i, \ZZ_i^*\vert\Upsilon)\right]\]
for each $i = 1, .., N$. Here $\EE_{\ZZ_i^*\sim\mathcal{P}_i(\ww_i, \zzo_i\vert\Upsilon)}$ indicates that the expectation is taken with respect to $\ZZ_i^*$, which follows the distribution $\mathcal{P}(\ww_i, \zzo_i\vert\Upsilon)$. Note that if $z_{ij}^* \in (c_{j, z_{ij}-1},c_{j, z_{ij}}]$ for each $j\in\DD\cap\AAA_i$, we will have
\[\begin{aligned}
    \log\psi_i(\ww_i, \zzo_i, \zz_i^*\vert\Upsilon) =& -\frac{p}{2}\log(2\pi)-\left[\frac{1}{2} \zz_i^{*\top}(\LL\LL^\top)^{-1} \zz_i^*\right]\Bigg\vert_{z_{ij}^*= \frac{z_{ij} - c_j}{d_j},~j\in\CC\cap \AAA_i} -\sum_{j\in\CC\cap \AAA_i} \log d_j - \log\det(\LL)\\
    &+ \sum_{j=1}^{p_1}\log q_j(\ww_{ij}\vert z_{ij}^*; \Lambda_j).\\
\end{aligned}\]
Therefore, the derivatives $\frac{\partial}{\partial \LL} \log \psi_i(\ww_i, \zzo_i, \zz_i^*\vert\Upsilon)$ have simple expressions.

Similarly, for the derivatives with respect to  $c_j$ and $d_j$ when $j \in \CC$, and for the derivatives with respect to $\Lambda_j$ when $j \in \mathcal{M}$, we have
\[
\begin{aligned}
    &\frac{\partial \log f_i(\ww_i, \zzo_i\vert\Upsilon)}{\partial c_j} =\EE_{\ZZ_i^*\sim\mathcal{P}_i(\ww_i, \zzo_i\vert\Upsilon)}\left[\frac{\partial}{\partial c_{j}} \log\psi_i(\ww_i, \zzo_i, \ZZ_i^*\vert\Upsilon)\right],\\    
    &\frac{\partial \log f_i(\ww_i, \zzo_i\vert\Upsilon)}{\partial d_j} =\EE_{\ZZ_i^*\sim\mathcal{P}_i(\ww_i, \zzo_i\vert\Upsilon)}\left[\frac{\partial}{\partial d_{j}} \log \psi_i(\ww_i, \zzo_i, \ZZ_i^*\vert\Upsilon)\right],\\
    &\frac{\partial \log f_i(\ww_i, \zzo_i\vert\Upsilon)}{\partial \Lambda_j} =\EE_{\ZZ_i^*\sim\mathcal{P}_i(\ww_i, \zzo_i\vert\Upsilon)}\left[\frac{\partial}{\partial \Lambda_{j}} \log \psi_i(\ww_i, \zzo_i, \ZZ_i^*\vert\Upsilon)\right].
\end{aligned}
\]
Note that when $j \in \CC\backslash\AAA_i$, it always holds that $\frac{\partial \log f_i(\ww_i, \zzo_i\vert\Upsilon)}{\partial c_j} = \frac{\partial \log f_i(\ww_i, \zzo_i\vert\Upsilon)}{\partial d_j} = 0$.

For the derivatives with respect to $\uu_j$ when $j \in \DD\cap \AAA_i$, we can also perform change of measure and obtain:
\[\frac{\partial \log f_i(\ww_i, \zzo_i\vert\Upsilon)}{\partial  \uu_{j}} =\EE_{\ZZ_i^*\sim\mathcal{P}_i(\ww_i, \zzo_i\vert\Upsilon)}\left[\frac{\partial}{ \partial \uu_{j}} \log \left(\int_{c_{j, z_{ij}-1}}^{c_{j, z_{ij}}} \psi_i(\ww_i, \zzo_i, \zz_i^*\vert\Upsilon)\big\vert_{z_{ij}=v; z_{ik}^* = Z_{ik}^*, k \neq j}~dv \right)\right]\]
for each $i = 1, ..., N$. It is easy to check that the expression inside the expectation can be simplified as:
\[\begin{aligned}
    \frac{\partial}{ \partial \uu_{j}} \log \left(\int_{c_{j, z_{ij}-1}}^{c_{j, z_{ij}}} \psi_i(\ww_i, \zzo_i, \zz_i^*\vert\Upsilon)\big\vert_{z_{ij}=v; z_{ik}^* = Z_{ik}^*, k \neq j}~dv \right)
    = \frac{\partial}{ \partial \uu_{j}} \log \left(\int_{c_{j, z_{ij}-1}}^{c_{j, z_{ij}}} \phi(\zz_i^*\vert\LL\LL^\top)\big\vert_{z_{ij}^*=v; z_{ik}^* = Z_{ik}^*, k \neq j}~dv \right).
\end{aligned}\]
However, if $j \in \DD\backslash\AAA_i$, we always have $\frac{\partial \log f_i(\ww_i, \zzo_i\vert\Upsilon)}{\partial  \uu_{j}}  = \mathbf{0}.$

\paragraph{Approximation formulas} After obtaining a realisation $\zz_i^*$ of $\ZZ_i^*$ sampled from the distribution $\mathcal{P}_i(\ww_i, \zzo_i\vert\Upsilon)$, we can use the expression above to derive approximations for the derivatives of $l_1(\Upsilon)$ with respect to $\Upsilon$.

Specifically, for the derivatives with respect to $\LL$ we have
\begin{equation}\label{eq:derivcor}
    \begin{aligned}
    \frac{\partial \log f_i(\ww_i, \zzo_i\vert\Upsilon)}{\partial \LL} &\approx \frac{\partial}{\partial \LL} \log \psi_i(\ww_i, \zzo_i, \zz_i^*\vert\Upsilon) \\
 &= \left(\LL\LL^\top\right)^{-1}\zz_{i}^*(\zz_{i}^*)^\top\left(\LL^\top\right)^{-1}-\operatorname{diag}\left(\left(\frac{1}{l_{11}}, \ldots, \frac{1}{l_{pp}}\right)^\top\right),
\end{aligned}
\end{equation}
where $\operatorname{diag}(\vv)$ denotes the diagonal matrix whose diagonal is the given vector $\vv$.

For the derivatives with respect to $c_j$ and $d_j$ when $j \in \CC\cap\AAA_i$, we have
\begin{equation}\label{eq:derivcon}
    \begin{aligned}
         \frac{\partial \log f_i(\ww_i, \zzo_i\vert\Upsilon)}{\partial c_j} \approx \frac{\partial}{\partial c_j} \log \psi_i(\ww_i, \zzo_i, \zz_i^*\vert\Upsilon) &= \frac{1}{d_j}\left[(\LL\LL^\top)^{-1}\zz_{i}^*\right]_j,  \text{ and }\\
    \frac{\partial \log f_i(\ww_i, \zzo_i\vert\Upsilon)}{\partial d_j} \approx \frac{\partial}{\partial d_j} \log \psi_i(\ww_i, \zzo_i, \zz_i^*\vert\Upsilon) &= \frac{z_{ij} - 1}{d_j}\left[(\LL\LL^\top)^{-1}\zz_{i}^*\right]_j,
\end{aligned}
\end{equation}
where $\left[(\LL\LL^\top)^{-1}\zz_{i}^*\right]_j$ denote the $j-th$ entry of vector $(\LL\LL^\top)^{-1}\zz_{i}^*$.

For the derivatives with respect to $\Lambda_j$ when $j \in \mathcal{M}$, we have
\begin{equation}\label{eq:derivmea}
    \begin{aligned}
         \frac{\partial \log f_i(\ww_i, \zzo_i\vert\Upsilon)}{\partial \Lambda_j} \approx \frac{\partial}{\partial \Lambda_j} \log \psi_i(\ww_i, \zzo_i, \zz_i^*\vert\Upsilon) =  \frac{\partial}{\partial \Lambda_j} \log q_j(\ww_{ij}\vert z_{ij}^*; \Lambda_j).
\end{aligned}
\end{equation}

For the derivatives with respect to $\uu_j$ when $j \in \DD\cap\AAA_i$, there are two cases to consider:
\begin{enumerate}
    \item If $Z_{ij}$ is a binary variable, there is only one parameter, $u_{j1}$, whose derivative can be approximated by
    \begin{equation}\label{eq:derivbin}
        \begin{aligned}
	\frac{\partial \log f_i(\ww_i, \zzo_i\vert\Upsilon)}{\partial u_{j1}}  & \approx \frac{\partial}{\partial u_{j1}} \log \left(\int_{c_{j, z_{ij}-1}}^{c_{j, z_{ij}}} \phi(\zz_i^*\vert\LL\LL^\top)\big\vert_{z_{ij}^*=v}~dv \right) \\
    &= \begin{cases}
	-{\phi(\zz_{i}^*\vert\LL\LL^\top)\big\vert_{z_{ij}^*=u_{j1}}}\Big/{\int_{u_{j1}}^{\infty} \phi(\zz_i^*\vert\LL\LL^\top)\big\vert_{z_{ij}^*=v}~dv} , & \text{ if } z_{ij} = 1. \\		{\phi(\zz_{i}^*\vert\LL\LL^\top)\big\vert_{z_{ij}^*=u_{j1}}}\Big/{\int_{-\infty}^{u_{j1}} \phi(\zz_i^*\vert\LL\LL^\top)\big\vert_{z_{ij}^*=v}~dv} , & \text{ if } z_{ij} = 0.
	\end{cases}
\end{aligned}  
    \end{equation} 
    
    \item If $Z_{ij}$ is an ordinal variable, for each $k = 1, ..., K_j$, the derivative with respect to $u_{jk}$ can be approximated by
    \begin{equation}
        \label{eq:derivord}\frac{\partial \log f_i(\ww_i, \zzo_i\vert\Upsilon)}{\partial u_{jk}} \approx \sum_{l=1}^{K_j}  \frac{\partial}{\partial c_{jl}} \left[\log \left(\int_{c_{j, z_{ij}-1}}^{c_{j, z_{ij}}} \phi(\zz_i^*\vert\LL\LL^\top)\big\vert_{z_{ij}^*=v}~dv \right)\right] \cdot \frac{\partial c_{jl}}{\partial u_{jk}}.
    \end{equation}
    Here, we have used the chain rule, and for each $l = 1, ..., K_j$, we have
    \[\begin{aligned}
        \frac{\partial c_{jl}}{\partial u_{jk}} = \begin{cases}
		1, & \text{ if } k = 1,\\
		\exp(u_{jk}), & \text{ if } 2\leq k \leq l.\\
		0, &\text{ others,}
	\end{cases}
    \end{aligned}\]
    and
    \[\begin{aligned}
    &\frac{\partial}{\partial c_{jl}} \left[\log \left(\int_{c_{j, z_{ij}-1}}^{c_{j, z_{ij}}} \phi(\zz_i^*\vert\LL\LL^\top)\big\vert_{z_{ij}^*=v}~dv \right)\right]  \\&
    = 
        \begin{cases}
		{\phi(\zz_{i}^*\vert\LL\LL^\top)\vert_{z_{ij}^* = c_{jl}}}\Big/{\int_{c_{j,l-1}}^{c_{jl}} \phi(\zz_{i}^*\vert\LL\LL^\top)\vert_{z_{ij}^*= v}~dv} , & \text{ if } z_{ij} < K_j  \text{ and } l = z_{ij} + 1,\\
 		-{\phi(\zz_{i}^*\vert\LL\LL^\top)\vert_{z_{ij}^* = c_{jl}}}\Big/{\int_{c_{jl}}^{c_{j,l+1}} \phi(\zz_{i}^*\vert\LL\LL^\top)\vert_{z_{ij}^*(\tilde{\Xi}) = v}~dv} , & \text{ if } z_{ij} > 0 \text{ and } l = z_{ij}, \\		 		
		0, &\text{ others.}
	\end{cases}
    \end{aligned}\]    
\end{enumerate}

Finally, it is worth noting that any derivatives not explicitly mentioned in the previous equations are equal to zero.

\subsubsection{Gibbs sampler for sampling from \texorpdfstring{$\mathcal{P}_i(\ww_i, \zzo_i\vert\Upsilon)$}{1}}\label{sec:GibbsZ}
 The stochastic gradient algorithm requires sampling $\ZZ_i^*$ from $\mathcal{P}_i(\ww_i, \zzo_i\vert\Upsilon)$, which is actually the conditional distribution of $\ZZ_i^*$ given $\WW_i = \ww_i$ and $\ZZo_i = \zzo_i$, that is, 
 \[ p(\ZZ_{i}^* \vert \WW_i = \ww_i, \ZZo_i = \zzo_i; \Upsilon).\]
 We consider using a Gibbs sampler, which involves sequentially sampling $Z_{ij}^*$, for each $j = 1, ..., p$, from
 \[p(Z_{ij}^* \vert \ZZ_{i, -j}^*, \ZZo_i, \WW_i; \Upsilon)\]
If $j \in (\mathcal{C}\cap \AAA_i)$, it is easy to seed that 
$p(Z_{ij}^* \vert \ZZ_{i, -j}^*, \ZZo_i, \WW_i; \Upsilon) = \III(Z_{ij}^* = \frac{Z_{ij}-c_j}{d_j}).$
Thus we need only to discuss $j \in \mathcal{M}\cup (\mathcal{C}\backslash \AAA_i) \cup \mathcal{D} $, where $\mathcal{M} = \{1, ..., p_1\}$. We break them down into three cases:
\begin{enumerate}
    \item If $j\in\mathcal{M}$, then 
    \[\begin{aligned}
        p(Z_{ij}^* \vert \ZZ_{i, -j}^*, \ZZo_i, \WW_{i}; \Upsilon)&=p(Z_{ij}^* \vert \ZZ_{i, -j}^*, \WW_{ij}; \Upsilon) \\
        &~\propto~ p(Z_{ij}^* \vert \ZZ_{i, -j}^*; \LL\LL^\top) p(\WW_{ij} \vert Z_{ij}^*; \Lambda_j).
    \end{aligned}\]
    Note that $p(Z_{ij}^* \vert \ZZ_{i, -j}^*; \LL\LL^\top)$
    results in a Gaussian density for $Z_{ij}^*$ (after normalisation). If $\WW_{ij} \vert Z_{ij}$ follows a unidimensional linear factor model, then $p(\WW_{ij}\vert  Z_{ij}^*; \Lambda_j)$ will also  yield a  Gaussian density with respect to $Z_{ij}^*$,  thereby making $p(Z_{ij}^* \vert \ZZ_{i, -j}^*, \ZZo_i, \WW_{ij}; \Upsilon)$ a Gaussian distribution, which is easy to sample from. On the other hand, if $\WW_{ij} \vert Z_{ij}$ follows a unidimensional IRT model, then $p(Z_{ij}^* \vert \ZZ_{i, -j}^*, \ZZo_i, \WW_{ij}; \Upsilon)$ will have a log-concave density. In such case, we can employ adaptive sampling methods to sample from this distribution.

    \item If $j\in \AAA_i$, $Z_{ij}$ missing and we have \[\begin{aligned}
        p(Z_{ij}^* \vert \ZZ_{i, -j}^*, \ZZo_i, \WW_{i}; \Upsilon)&~\propto~p(Z_{ij}^* \vert \ZZ_{i, -j}^*; \LL\LL^\top),
    \end{aligned}\]
    which is a Gaussian distribution for $Z_{ij}^*$.

    \item If $j\in (\DD \cap \AAA_i)$, $Z_{ij}$ is discrete and missing. In this case we have 
     \[\begin{aligned}
        p(Z_{ij}^* \vert \ZZ_{i, -j}^*, \ZZo_i, \WW_{i}; \Upsilon)&~\propto~p(Z_{ij}^* \vert \ZZ_{i, -j}^*, Z_{ij}; \Upsilon)\\
        &~\propto~ p(Z_{ij}^* \vert \ZZ_{i, -j}^*;\LL\LL^\top)\times \mathbb{I}(Z_{ij}^* \in (c_{j, Z_{ij}}, c_{j, Z_{ij}+1}]),
    \end{aligned}\]
    which is a truncated normal distribution.
\end{enumerate}

\subsubsection{The stochastic proximal gradient algorithm}
We summarize the algorithm for solving optimisation problem \eqref{eq:remmlcopula} in Algorithm \ref{alg:SGD}, following the framework in \cite{zhang2022computation}.
\setcounter{algorithm}{3}
\begin{algorithm}[Stochastic proximal gradient algorithm for solving problem \eqref{eq:remmlcopula}]\label{alg:SGD}~
	\begin{itemize}
			\item[]{\bf Input:} Observed data $\ww_i$ and $\zzo_i,~i=1, \ldots, N$, initial value of all unknown parameters $\Upsilon^{(0)}$, initial value of underlying variables $\zz_i^{*(0)},~i=1, \ldots, N$, burn-in size $M$, number of follow-up iterations $T$, sequence of step sizes $\{\gamma_t\}_{t=1}^{M+T}$.
		\item[]{\bf Update:} At $t$-th iteration where $1 \leq t \leq M +T$, perform the following two steps:
		\begin{enumerate}
  \item{(\bf Sampling step)} For $i=1, 2,  \ldots, N$, sample $\zz_i^{*(t)}$  from $\mathcal{P}_i(\ww_i, \zzo_i\vert\Upsilon^{(t-1)})$ using the Gibbs sampler described in \ref{sec:GibbsZ}, starting from initial value $\zz_i^{*(t-1)}$.
				
			\item{(\bf Proximal gradient ascent step)} For each $i=1, ..., N$, compute the approximated gradient $$\hat{\nabla}_{\Upsilon}\log f_i(\ww_i, \zzo_i\vert \Upsilon)\vert_{\Upsilon = \Upsilon^{(t-1)}}$$
   using equations \eqref{eq:derivcor}, \eqref{eq:derivcon}, \eqref{eq:derivmea}, \eqref{eq:derivbin}, and \eqref{eq:derivord} with $\zz_i^* = \zz_i^{*(t)}$.  
			Compute $$\mathbf{G}^{(t)} = \sum_{i=1}^{N}  \hat{\nabla}_{\Upsilon}\log f_i(\ww_i, \zzo_i\vert \Upsilon)\vert_{\Upsilon = \Upsilon^{(t-1)}}$$
            as well as a positive definite matrix $\mathbf{H}^{(t)}$ as a preconditioning matrix. Then update $\Upsilon$ by
			\[\Upsilon^{(t)} = \operatorname{proj}_\mathbf{\Omega} \left(\Upsilon^{(t-1)} + \gamma_t  \left(\mathbf H^{(t)}\right)^{-1}\mathbf{G}^{(t)}\right).\]
			 Here $ \operatorname{proj}_\mathbf{\Omega}$ is the projection operator onto the feasible set $\mathbf{\Omega}$.
		\end{enumerate}
	\item[]{\bf Output:} $\hat{\Upsilon}= \frac{1}{T} \sum\limits_{t=B+1}^{B+T} \Upsilon^{(t)}.$
\end{itemize}
\end{algorithm}

\begin{remark}
In practice, we run only one scan of Gibbs sampling in the sampling step, which is enough for the algorithm to converge. 
{Also we can employ a random scan Gibbs sampler for a better mixing rate. See \cite{givens2012computational} for more details.}
\end{remark}

\begin{remark}
    The stepsize can be chosen as $\gamma_t~\propto~t^{-0.51}$, as is recommended in \cite{zhang2022computation}. In practice we use $\gamma_t = (t+100)^{-0.51}$.
\end{remark}
\begin{remark}
	In practice, initial value of $\{c_j, d_j, j\in \CC\}$ and $\{\uu_j, j\in \DD\}$ can be obtained by marginal distribution of the corresponding variables. Initial values of $\Lambda_1$, ..., $\Lambda_{p_1}$ can be obtained by performing factor analysis with a single factor. Initial value of $\SSigma = \LL\LL^\top$ can be obtained by the mixed correlation matrix composed of the {Pearson/ polychoric/ tetrachoric/ biserial/ polyserial} correlation of each pair of variables, where the type of correlation depends on the combination of variables types of the corresponding pair. This mixed correlation matrix can be computed by the  \textit{mixedCor} function in R package \textit{`psych'}.  If the raw mixed correlation matrix is not positive definite, we shall project it to its nearest positive semi-definite matrix.
\end{remark}
\begin{remark}
As long as the  preconditioning matrix $\mathbf H^{(t)}$ converges to a positive definite diagonal matrix, theoretical results in \cite{zhang2022computation} guarantee that the output of the algorithm {converges} as $T \to \infty$. In our study, during the burn-in period we set $\mathbf H^{(t)}$ as an diagonal matrix, whose diagonal elements are composed of
\begin{itemize}
    \item[]{(1)} $-\sum_{i=1}^{N}\frac{\partial^2}{\partial l_{ab}^2}\log \psi_i(\ww_i, \zzo_i, \zz_i^{*(t)}\vert\Upsilon^{(t)})$ for $a, b = 1, \ldots, p$,
    \item[]{(2)} $-\sum_{i: j \in \AAA_i}\frac{\partial^2}{\partial c_j^2}\log \psi_i(\ww_i, \zzo_i, \zz_i^{*(t)}\vert\Upsilon^{(t)})$ and $-\sum_{i: j \in \AAA_i}\frac{\partial^2}{\partial d_j^2}\log \psi_i(\ww_i, \zzo_i, \zz_i^{*(t)}\vert\Upsilon)$ for $j \in \CC$,
    \item[]{(3)} $-\sum_{i: j \in \AAA_i}\frac{\partial^2}{\partial u_{jk}^2} \log \left(\int_{c_{j, z_{ij}-1}}^{c_{j, z_{ij}}} \phi(\zz_i^{*(t)}\vert\LL^{(t)}(\LL^{(t)})^\top)\big\vert_{z_{ij}^{*(t)}=v}~dv \right) $ for $j \in \DD$ and $k = 1, \ldots, K_j$,
    \item[]{(4)} $-N$ for other parameters (if any).
\end{itemize}
After the burn-in period, we simply fix $\mathbf H^{(t)}$ to $\mathbf H^{(B)}$.
\end{remark}


\subsection{Gibbs sampler for constructing knockoff copies}\label{sec:Gibbs}
The most difficult step in constructing knockoff copies is to sample $\ZZ_i^*$ from their conditional distribution given $\WW_i, \ZZo_i$, and $\YYo_i$ (Step 1 in Algorithm \ref{alg:knockcons2}). To accomplish this step, we employ a Gibbs sampler combined with the data augmentation technique.

Specifically, instead of sampling $\ZZ_{i}^*$ from $p(\ZZ_{i}^* \vert \ZZo_i, \WW_i, \YYo_i; \bbb, \beta_0, \sigma^2, \Delta, \Xi, \Lambda_1, ..., \Lambda_{p_1}),$ 
 our goal is to  sample $(\theta_i, \ZZ_{i}^*)$ from their conditional distribution given $\YYo_i, \WW_i$, and $\ZZo_i$, which is written as
 \begin{equation}\label{eq:jointpost}
     p(\theta_i, \ZZ_{i}^* \vert \ZZo_i, \WW_i, \YYo_i; \bbb, \beta_0, \sigma^2, \Delta, \Xi, \Lambda_1, ..., \Lambda_{p_1}).
 \end{equation}
The Gibbs sampler used to sample from \eqref{eq:jointpost} can be divided into two parts:
\begin{enumerate}
    \item Sampling $\theta_i$ from $p(\theta_i \vert \ZZ_{i}^*, \ZZo_i, \WW_i, \YYo_i; \bbb, \beta_0, \sigma^2, \Delta, \Xi, \Lambda_1, ..., \Lambda_{p_1})$ 
    \item Sampling $Z_{ij}^*$ from $p(Z_{ij}^* \vert \theta_i, \ZZ_{i, -j}^*, \ZZo_i, \WW_i, \YYo_i; \bbb, \beta_0, \sigma^2, \Delta, \Xi, \Lambda_1, ..., \Lambda_{p_1})$ for each $j=1, ..., p$ sequentially.
\end{enumerate}

\subsubsection{Sampling \texorpdfstring{$\theta_i$}{1}}\label{sec:sampletheta}  First we note that since $Z_{ij}$ is a function of $Z^*_{ij}$, and $\WW_i$ is conditionally independent of $\theta_i$ given $\ZZ_i$, we need only to sample $\theta_i$ from
$$p(\theta_i \vert \ZZ_{i}, \YYo_i; \bbb, \beta_0, \sigma^2, \Delta),$$
where $\ZZ_i$ is obtained from $\ZZ_i^*$ based on the given Gaussian copula model.
The density function of this distribution can be written as
\[f_i(\theta_i \vert  \zz_i, \yyo_i; \bbb, \beta_0, \sigma^2, \Delta) = \frac{1}{\sqrt{2\pi\sigma^2}}  \exp\left(-\frac{1}{2\sigma^2}{\left(\theta_i - \beta_0 - \sum\limits_{j=1}^p \bbb_j^\top g_j(z_{ij})\right)^2} \right) \prod_{j\in \BBB_i}h_j(y_{ij}\vert \theta_i; \Delta),\]
where for each $j = 1, ..., J$, and $ h_j(y_{ij} \vert \theta_i; \Delta)$ denotes the the conditional probability density/mass function of $y_{ij}$ given $\theta_i$, which takes the form of \eqref{eq:2pl} or \eqref{eq:gpcm} depending on whether item $j$ is dichotomous or polytomous. It is easy to verify that the density function is log-concave with respect to $\theta_i$. Therefore, we can use the adaptive rejection sampling method \citep{gilks1992adaptive} to sample $\theta_i$ from $p(\theta_i \vert \ZZ_{i}, \YYo_i; \bbb, \beta_0, \sigma^2, \Delta)$.

\subsubsection{Sampling \texorpdfstring{$Z_{ij}^*$}{1}}\label{sec:sampleZ}
Since $\ZZ_i^*$ is conditionally independent of $\YYo_i$ given $\theta_i$,  our focus is on sampling $Z_{ij}^*$ from the distribution
\begin{equation}
    p(Z_{ij}^* \vert \theta_i, \ZZ_{i, -j}^*, \ZZo_i, \WW_i; \bbb, \beta_0, \sigma^2,  \Xi, \Lambda_1, ..., \Lambda_{p_1}).
\end{equation}
It is also worth noting that we need only to consider sampling $Z_{ij}^*$ for $j \in \mathcal{M}\cup (\mathcal{C}\backslash \AAA_i) \cup \mathcal{D} $, where $\mathcal{M} = \{1, ..., p_1\}$.
We break them down into four cases:
\begin{enumerate}
    \item If $j \in \mathcal{M}$, we have $Z_{ij}^* \perp \!\!\! \perp  \ZZo_i \vert \ZZ_{i,-j}^*$ and $Z_{ij}^* \perp \!\!\! \perp  \WW_{ik} \vert \ZZ_{i,-j}^*$ for $k\neq j$. Thus the target distribution will become
    \[p(Z_{ij}^* \vert \theta_i, \ZZ_{i, -j}^*, \WW_{ij}; \bbb, \beta_0, \sigma^2,  \Xi, \Lambda_j).\]
    Furthermore, since $\WW_{ij}^* \perp \!\!\! \perp  \theta_i \vert \ZZ_{i}^*$ and  $\WW_{ij}^* \perp \!\!\! \perp  \ZZ_{i, -j} \vert Z_{ij}^*$, we have
    \[\begin{aligned}
        &p(Z_{ij}^* \vert \theta_i, \ZZ_{i, -j}^*, \WW_{ij}; \bbb, \beta_0, \sigma^2,  \Xi, \Lambda_j) \\ &~\propto~p(Z_{ij}^*\vert \ZZ_{i, -j}^*; \SSigma) p(\theta_i\vert  Z_{ij}^*, \ZZ_{i, -j}^*; \bbb, \beta_0, \sigma^2, \Xi) p(\WW_{ij}\vert \theta_i, Z_{ij}^*, \ZZ_{i, -j}^*; \bbb, \beta_0, \sigma^2, \Xi, \Lambda_j) \\
        &=p(Z_{ij}^*\vert \ZZ_{i, -j}^*; \SSigma) p(\theta_i\vert  Z_{ij}^*, \ZZ_{i, -j}^*; \bbb, \beta_0, \sigma^2, \Xi) p(\WW_{ij}\vert  Z_{ij}^*; \Lambda_j)
    \end{aligned}\]
    Obviously, $p(Z_{ij}^*\vert \ZZ_{i, -j}^*; \SSigma)$
    result in a Gaussian density for $Z_{ij}^*$ (after normalisation). Note that 
    \[p(\theta_i\vert  Z_{ij}^*, \ZZ_{i, -j}^*; \bbb, \beta_0, \sigma^2, \Xi) ~\propto~\exp\left(-\frac{1}{2\sigma^2}\left[\bbb_{j}Z_{ij}^* - \left(\theta_i - \beta_0 - \sum\limits_{k\neq j}\bbb_{k}^\top g_k(F_j(Z^*_{ik}))\right)\right]^2\right),\]
   which also results in a Gaussian density with respect to $Z_{ij}^*$. If $\WW_{ij} \vert Z_{ij}$ follows a unidimensional linear factor model, then $p(\WW_{ij}\vert  Z_{ij}^*; \Lambda_j)$ will also  yield a  Gaussian density with respect to $Z_{ij}^*$,  thereby making $p(Z_{ij}^* \vert \theta_i, \ZZ_{i, -j}^*, \WW_{ij}; \bbb, \beta_0, \sigma^2,  \Xi, \Lambda_j)$ a Gaussian distribution, which is easy to sample from. On the other hand, if $\WW_{ij} \vert Z_{ij}$ follows a unidimensional IRT model, then $p(Z_{ij}^* \vert \theta_i, \ZZ_{i, -j}^*, \WW_{ij}; \bbb, \beta_0, \sigma^2,  \Xi, \Lambda_j)$ will have a log-concave density. In such case, we can employ adaptive sampling methods to sample from this distribution.
    
\item If $j\in \CC \backslash \AAA_i$, $Z_{ij}$ is continuous and missing. We have $Z_{ij}^* \perp \!\!\! \perp  \ZZo_i \vert \ZZ_{i,-j}^*$ and $Z_{ij}^* \perp \!\!\! \perp  \WW_i \vert \ZZ_{i,-j}^*$. 
Therefore, the target distribution becomes
    \[p(Z_{ij}^* \vert \theta_i, \ZZ_{i, -j}^*; \bbb, \beta_0, \sigma^2,  \Xi).\]
Since $g_j(Z_{ij}) = Z_{ij} =  d_jZ_{ij}^* + c_j$, 
	\[\begin{aligned}
		p&(Z_{ij}^*\big\vert  \theta_i, \ZZ_{i,-j}^*; \bbb, \beta_0, \sigma^2, \Xi)\\
  \propto& p(Z_{ij}^*\vert \ZZ_{i,-j}^*; \SSigma) p(\theta_i \vert Z_{ij}^*, \ZZ_{i,-j}^*; \bbb, \beta_0, \sigma^2, \Xi)  \\
		\propto &p(Z_{ij}^*\vert \ZZ_{i,-j}^*; \SSigma) 
 \times\exp\left(-\frac{1}{2\sigma^2}\left[\bbb_{j}(d_jZ_{ij}^* + c_j) - \left(\theta_i - \beta_0 - \sum\limits_{k\neq j}\bbb_{k}^\top g_k(F_k(Z_{ik}^*))\right)\right]^2\right).	\end{aligned}\]
Thus $p(Z_{ij}^*\big\vert  \theta_i, \ZZ_{i,-j}^*; \bbb, \beta_0, \sigma^2, \Xi)$ is a Gaussian distribution.

	\item If $j\in \DD \backslash \AAA_i$, $Z_{ij}$ is discrete and missing. Similar to the previous case, we have  $Z_{ij}^* \perp \!\!\! \perp  \ZZo_i \vert \ZZ_{i,-j}^*$ and $Z_{ij}^* \perp \!\!\! \perp  \WW_i \vert \ZZ_{i,-j}^*$, and the target distribution becomes
    \[p(Z_{ij}^* \vert \theta_i, \ZZ_{i, -j}^*; \bbb, \beta_0, \sigma^2,  \Xi).\]
 For convenience, we denote $\bbb_j = (\beta_{j,1}, \ldots, \beta_{j,K_j})^\top$ and  set $\beta_{j,0} = 0$. Then
	\[\begin{aligned}
		&p(Z_{ij}^* \vert \theta_i, \ZZ_{i, -j}^*; \bbb, \beta_0, \sigma^2,  \Xi) \\ \propto& p(Z_{ij}^*\vert \ZZ_{i,-j}^*; \SSigma) p(\theta_i \vert Z_{ij}^*, \ZZ_{i,-j}^*; \bbb, \beta_0, \sigma^2, \Xi) \\
		\propto&p(Z_{ij}^*\vert \ZZ_{i,-j}^*; \SSigma)\\
  &\times\left\{\sum_{k=0}^{K_j}\exp\left(-\frac{1}{2\sigma^2}\left[\sum_{l=0}^{k}\beta_{j,l}  - \left(\theta_i - \beta_0 - \sum\limits_{m\neq j}\bbb_{m}^\top g_m(F_m(Z_{im}^*))\right)\right]^2\right)\mathbb{I}(Z_{ij}^*\in (c_{j, k}, c_{j, k+1}])\right\}.
	\end{aligned}\]
	This distribution can be viewed as a mixture of truncated normal distribution. To sample from this distribution, we can first sample a category $k$ from the set $\{0, 1, \ldots, K_j\}$ with corresponding weights $$\left\{\sum_{k=0}^{K_j}\exp\left(-\frac{1}{2\sigma^2}\left[\sum_{l=0}^{k}\beta_{j,l}  - \left(\theta_i - \beta_0 - \sum\limits_{m\neq j}\bbb_{m}^\top g_m(F_m(Z_{im}^*))\right)\right]^2\right){\cdot \int_{c_{j,k}}^{c_{j, k+1}} p(Z_{ij}^* = v\vert \ZZ_{i,-j}^*; \SSigma) dv }  \right\}_{k=0}^{K_j},$$
	and then  sample $Z_{ij}^*$ from the truncated normal distribution $$p(Z_{ij}^*\vert \ZZ_{i,-j}^*; \SSigma)
 \times\mathbb{I}(Z_{ij}^* \in (c_{jk}, c_{j, k+1}]).$$
	\item If $j\in \DD \cap \AAA_i$, $Z_{ij}$ is discrete and observed. In this case, the target distribution will become
    \[p(Z_{ij}^* \vert \theta_i, \ZZ_{i, -j}^*, Z_{ij}; \bbb, \beta_0, \sigma^2,  \Xi).\]
    Observe that    
	\[\begin{aligned}
		p(Z_{ij}^* \vert \theta_i, \ZZ_{i, -j}^*, Z_{ij}; \bbb, \beta_0, \sigma^2,  \Xi)
		\propto& p(Z_{ij}^*\vert \ZZ_{i,-j}^*; \SSigma) p(\theta_i \vert Z_{ij}^*, \ZZ_{i,-j}^*, Z_{ij}; \bbb, \beta_0, \sigma^2, \Xi)\\
		\propto& p(Z_{ij}^*\vert \ZZ_{i,-j}^*; \SSigma) \times\mathbb{I}(Z_{ij}^*\in (c_{j, Z_{ij}}, c_{j, Z_{ij}+1}]) \\&\times\exp\left(-\frac{1}{2\sigma^2}\left(\theta_i - \beta_0 - \bbb_j^\top g_j(Z_{ij}) - \sum\limits_{k\neq j}\bbb_k^\top g_k(F_k({Z}^*_{ik}))\right)^2\right)\\ 
		\propto& p(Z_{ij}^* \vert \ZZ_{i,-j}^*; \SSigma)\times\mathbb{I}(Z_{ij}^*\in (c_{j, Z_{ij}}, c_{j, Z_{ij}+1}]).
	\end{aligned}\]
	Hence $p(Z_{ij}^* \vert \theta_i, \ZZ_{i, -j}^*, Z_{ij}; \bbb, \beta_0, \sigma^2,  \Xi)$ is a univariate truncated normal distribution, which can be efficiently sampled from.
\end{enumerate}

\subsection{Algorithm for computing MLE of  latent regression model and extended latent regression model}

In this section, we will give a detailed description of the stochastic EM algorithm used to compute the MLE of latent regression model and extended latent regression model. 

Given $\Xi$ and $\Lambda_1$, ..., $\Lambda_{p_1}$,  the likelihood function $l_2(\bbb, \beta_0, \sigma^2, \Delta)$ of the latent regression model takes the form:
\begin{equation}
l_2(\bbb, \beta_0, \sigma^2, \Delta) = \sum_{i=1}^N \log \left(\int \cdots \int \left(\prod_{j\notin \AAA_i }~dz_{ij}\right) f(\zz_i\vert  \Xi) \left(\prod_{j=1}^{p_1} q_j(\mathbf w_{ij}\vert z_{ij};  \Lambda_j)\right) f_i(\yyo_i \vert \zz_i; \bbb, \beta_0, \sigma^2, \Delta)\right),
\end{equation}
where
\[
f(\zz\vert \Xi)= \int\ldots\int  \left(\prod_{j\in \DD }~dz_{j}^{*}\right) \left[\phi(\zz^*\vert \SSigma) \times \left(\prod_{j\notin\DD} d_j^{-1}\right) \times  \left( \mathop{\prod}_{j\in \DD }\III (z_{j}^* \in  (c_{j, z_{j}-1},c_{j, z_{j}} ])\right) \right]\Bigg\vert_{z_{j}^* = \frac{z_{j} - c_j}{d_j}, j\notin\DD},
\]
and
{\small
\[
\begin{aligned}
    &f_i(\yyo_i \vert \zz_i; \bbb, \beta_0, \sigma^2, \Delta) = \\
    &\left( \frac{1}{\sqrt{2\pi\sigma^2}} \int \cdots \int d\theta_i (\prod_{j \notin \BBB_i}dy_{ij})  h(\yy_i\vert \theta_i; \Delta) \exp\left(-\frac{(\theta_i - (\beta_0 + \bbb_1^\top g_1(z_{i1}) + \cdots + \bbb_p^\top g_p(z_{ip}))^2}{2\sigma^2}\right) \right).
\end{aligned}
\]}
The likelihood function $\tilde l_2  (\bbb, \rrr) $ of the extended latent regression model is in a similar form:
{\small
\begin{equation}
\begin{aligned}
    \tilde l_2  (\bbb, \rrr) =&\sum_{i=1}^N \log \left\{\int \cdots \int \left[\left(\prod_{j\notin \AAA_i }~dz_{ij}\right) \left(\prod_{j\notin \AAA_i }~d\tilde z_{ij}\right) \right.\right.\times\\
    &\quad \left.\left.
    f(\zz_i, \zzk_i \vert \Xi) { \left(\prod_{j=1}^{p_1} q_j(\mathbf w_{ij}\vert z_{ij};  \Lambda_j)\right)\left(\prod_{j=1}^{p_1} q_j(\tilde{\mathbf w}_{ij}\vert \tilde{z}_{ij};  \Lambda_j)\right)} f_i(\yyo_i \vert \zz_i, \zzk_i; \bbb,\rrr, \beta_0, \sigma^2, \Delta)\right]\right\},
\end{aligned}
\end{equation}}
where
\begin{equation}\label{eq:jointzz}
    \begin{aligned}
    f(\zz, \zzk \vert \Xi) = &\int\ldots\int  \left(\prod_{j\in \DD }~dz_{j}^{*}\right) 
\left(\prod_{j\in \DD }~d\tilde{z}_{j}^{*}\right) \times \phi(\zz^*, \zzk^*\vert \GG) \times \left(\prod_{j\notin\DD} d_j^{-2}\right) \times  \\
&\left[\left( \mathop{\prod}_{j\in \DD }\III (z_{j}^* \in  (c_{j, z_{j}-1},c_{j, z_{j}} ])\right) \times  \left( \mathop{\prod}_{j\in \DD }\III (\tilde{z}_{j}^* \in  (c_{j, \tilde{z}_{j}-1},c_{j, \tilde{z}_{j}} ])\right)\right]\Bigg\vert_{z_{j}^* = \frac{z_{j} - c_j}{d_j}, \tilde{z}_{j}^* = \frac{\tilde{z}_{j} - c_j}{d_j},  j\notin\DD},
\end{aligned}
\end{equation}
and
\begin{equation*}
    \begin{aligned}
      &f(\yyo_i \vert \zz_i, \zzk_i; \bbb,\rrr, \beta_0, \sigma^2,\Delta) = \\
      &\left( \frac{1}{\sqrt{2\pi\sigma^2}} \int \cdots \int d\theta_i  (\prod_{j \notin \BBB_i}dy_{ij})  h(\yy_i\vert \theta_i; \Delta) \exp\left(-\frac{(\theta_i - (\beta_0 + \bbb_1^\top g_1(z_{i1}) + ... + \rrr_1^\top g_1(\tilde z_{i1}) + ... + \rrr_p^\top g_p(\tilde z_{ip}))^2}{2\sigma^2}\right) \right).  
    \end{aligned}
\end{equation*}
Note that in \eqref{eq:jointzz}, $\GG$ is the covariance matrix for $(\ZZ_i^*, \ZZk_i)$ as defined in Algorithm \ref{alg:knockcons2}, given by
\[\GG = 	\begin{pmatrix} \SSigma &  \SSigma-\SS\\  \SSigma-\SS & \SSigma\end{pmatrix},\]
where $\mathbf{S}$ is a diagonal matrix constructed by MVR procedure \citep{spector2022powerful}.

The similarity in structure between both likelihood functions is quite apparent. Hence, for the sake of simplicity, our focus will be on describing the algorithm to maximise the simper one, namely $l_2(\bbb, \beta_0, \sigma^2, \Delta)$.
In Algorithm \ref{alg:StEM}, we will provide the comprehensive details of the stochastic EM algorithm for maximizing $l_2(\bbb, \beta_0, \sigma^2, \Delta)$.
The stochastic EM algorithm used to maximise the $\tilde l_2(\bbb, \rrr)$  requires only minor modifications.

Before presenting the algorithm, we introduce the complete data log-likelihood function for clarity of exposition. For each $i=1, ..., N$, we define
\[cl_i(\bbb, \beta_0, \sigma^2, \Delta \vert \theta_i, \zz^*_i, \yyo_i) = -\frac{1}{2}\log\sigma^2 - \frac{1}{2\sigma^2} \left(\theta_i - \beta_0 - \sum\limits_{j=1}^p \bbb_j^\top g_j(F_j(z_{ij}^*)) \right)^2 - \sum\limits_{j\in \BBB_i} \log h_j(y_{ij} \vert \theta_i; \Delta), \]
where for each $j = 1, ..., J$, and $ h_j(y_{ij} \vert \theta_i; \Delta)$ denotes the the conditional probability density/mass function of $y_{ij}$ given $\theta_i$, which takes the form of \eqref{eq:2pl} or \eqref{eq:gpcm} depending on whether item $j$ is dichotomous or polytomous.

Now we can present the proposed optimisation algorithm:
\begin{algorithm}[Stochastic EM algorithm for maximizing $l_2(\bbb, \beta_0, \sigma^2, \Delta)$]\label{alg:StEM}~
	\begin{itemize}
        \item[]{\bf Input:} Item responses data $\yyo_i, i = 1, ..., N$, observed data $\ww_i$ and $\zzo_i,~i=1, \ldots, N$, initial value of unknown parameters $\bbb^{(0)}$, $\beta_0^{(0)}$, $(\sigma^2)^{(0)}$, and $\Delta^{(0)}$, initial value of underlying variables $\zz_i^{*(0)},~i=1, \ldots, N$,  parameters of the Gaussian copula model ${\Xi}$, parameters $\Lambda_1, ..., \Lambda_{p_1}$ in measurement models for $Z_{ij}$, burn-in size $M$, number of follow-up iterations $T$,

		
		\item[]{\bf Update:}  At the $t$-th iteration where $1 \leq t \leq M +T$, perform the following steps:
		\begin{enumerate}
			\item{(\bf Sampling step)} For $i = 1, 2, \ldots N$, sample
            $(\theta_i^{(t)}, \zz_i^{*(t)})$ from the conditional distribution
            $$p(\theta_i, \ZZ_{i}^* \vert \ZZo_i=\zzo_i, \WW_i=\ww_i, \YYo_i=\yyo_i; \bbb^{(t-1)}, \beta_0^{(t-1)}, (\sigma^2)^{(t-1)}, \Delta^{(t-1)}, \Xi, \Lambda_1, ..., \Lambda_{p_1})$$
            using the Gibbs sampler described in Section \ref{sec:Gibbs}, starting from the initial value 
 $(\theta_i^{(t-1)}, \zz_i^{*(t-1)})$.
			 
			\item{(\bf Maximisation step)}
            Update $(\bbb, \beta_0, \sigma^2, \Delta)$  by 
            \begin{equation}
                (\bbb^{(t)}, \beta_0^{(t)}, (\sigma^2)^{(t)}, \Delta^{(t)}) = \argmax_{\bbb, \beta_0, \sigma^2, \Delta} \left\{\sum_{i=1}^N cl_i(\bbb, \beta_0, \sigma^2, \Delta \vert \theta_i^{(t)}, \zz_i^{*(t)}, \yyo_i)\right\}.
            \end{equation}            
		\end{enumerate}
		\item[]{\bf Output:} $\hat\bbb = \frac{1}{T}\sum\limits_{t=B}^{B+T} \bbb^{(t)}, 
  \hat\beta_0 = \frac{1}{T}\sum\limits_{t=B}^{B+T} \beta_0^{(t)},
  \hat\sigma^2  = \frac{1}{T}\sum\limits_{t=B}^{B+T} (\sigma^2)^{(t)}$,  and $\hat\Delta  = \frac{1}{T}\sum\limits_{t=B}^{B+T} \Delta^{(t)}$.
	\end{itemize}
\end{algorithm}

\begin{remark}
    Similar to Algorithm~\ref{alg:SGD}, within the sampling step, we can perform only a single round of random scanning using the Gibbs sampler described in \ref{sec:Gibbs}. 
\end{remark}
\begin{remark}
    In the maximisation step, the optimization problem can be equivalently formulated as solving a series of (generalized) linear regression problems. For example, at the $t$-th iteration, $(\beta_0^{(t)}, \bbb^{(t)})$ can be obtained by regressing $\{\theta_i^{(t)}\}_{i=1}^N$ on $\{g_j(F_j(z_{ij}^{*(t)}))\}_{i=1}^N$, and the iterated values of the parameters in the measurement model for the $j$-th item, denoted as $\Delta_j^{(t)}$, can be obtained by regressing $\{y_{ij}\}_{i:j\in\BBB_i}$ on $\{\theta_i^{(t)}\}_{i:j\in\BBB_i}$ for each $j = 1, ..., J$.
\end{remark}

\subsection{The \texorpdfstring{$l_2$}{1}-penalisation}
In the simulation settings described in Section~\ref{sec:sim} and Appendix~\ref{app:sim} below, some cases arise where the number of predictors is comparable to the sample size (e.g., $p=100$ and $N=1000$). Such a scenario may result in overfitting during the M-step of the stochastic EM algorithm, causing the algorithm to become unstable.
To mitigate this problem, we propose to incorporate an $l_2$-penalty into the likelihood functions \eqref{eq:loglik} and\eqref{eq:likelihood}
during the aforementioned simulations.

Specifically, for \eqref{eq:loglik} we apply an $l_2$-penalty on $\bbb$, and the resulting penalised log-likelihood is denoted as $l_2^{\lambda}(\bbb, {\beta_0, } \sigma^2, \Delta)$,  where $\lambda > 0$ represents the tuning parameter. The $l_2^{\lambda}(\bbb, {\beta_0, } \sigma^2, \Delta)$ is formulated as
\begin{equation}\label{eq:penloglik}
l_2^{\lambda}(\bbb, {\beta_0, } \sigma^2, \Delta) = \frac{1}{N}\sum_{i=1}^N l_2(\bbb, {\beta_0, } \sigma^2, \Delta) - \lambda\|\bbb\|_2^2.
\end{equation}
Similarly, for \eqref{eq:likelihood}  we impose an $l_2$-penalty on the vector $(\bbb^\top, \rrr^\top)^\top$ and the resulting penalised log-likelihood is denoted as $\tilde l_2^{\lambda}(\bbb, \rrr)$, which is given by
\begin{equation}\label{eq:penlikelihood}
\tilde l_2^{\lambda}(\bbb, \rrr) = \frac{1}{N}\tilde l_2(\bbb, \rrr) - \lambda(\|\bbb\|_2^2 + \|\rrr\|_2^2).
\end{equation}
In all of our simulation studies, we set $\lambda = \sqrt{\frac{1}{N}}$.

To optimize \eqref{eq:penloglik} and \eqref{eq:penlikelihood}, we need only to modify the M-step within the stochastic EM algorithm. Efficient solutions to the corresponding sub-problems can be achieved using standard algorithms such as the \textit{glmnet} algorithm \citep{friedman2009glmnet}.

\section{Real Data}\label{app:realdata}
\subsection{Data Preprocessing}
%

We perform the following pre-processing steps:
\begin{enumerate}
	\item We discard samples and variables that include too much missing values.
	\item For ordinal variables, we merge the categories with too few samples (below 5\% of observed samples) into their adjacent categories.
	\item For highly correlated variables (correlations higer than 0.7), we discard some of them, or combine them into less correlated variables. This step is necessary since technically we don't want $[\XX, \XXk]$ to suffer from issue of multicollinearity.
	\item We log-transform the variable OUTHOU.
\end{enumerate}

\subsection{Candidate Variables}

\begin{center}
	\begin{longtable}{p{2cm}p{1cm}p{10.5cm}}
		\caption{Full description of candidate variables used in real data analysis.}
		\label{tab:var}\\
		\toprule
		Name & Type & Description \\		
		\midrule				
		OUT.FRI &B& Whether the student meet friends or talk to friends on the phone outside the school.  This variable is obtained by combining questionnaire items ST076Q07NA (whether meet or talk to friends before school) and ST078Q07NA (whether meet or talk to friends after school). Coding: 1 = yes, 0 = no.\\ \hline
		OUT.GAM &B& Whether the student  play video-games outside the school.  This variable is obtained by combining questionnaire items ST076Q06NA (whether play video-games before school) and ST078Q06NA (whether play video-games after school). Coding: 1 = yes, 0 = no.\\ \hline
		OUT.HOL &B&  Whether the student work in the household outside the school.  This variable is obtained by combining questionnaire items ST076Q09NA (whether   work in the household  before school) and ST078Q09NA (whether  work in the household  after school). Coding: 1 = yes, 0 = no.\\ \hline
		OUT.NET &B& Whether the student use Internet outside the school.  This variable is obtained by combining questionnaire items ST076Q05NA (whether  use Internet  before school) and ST078Q05NA (whether use Internet   after school). Coding: 1 = yes, 0 = no.\\ \hline
		OUT.JOB &B& Whether the student work for pay outside the school.  This variable is obtained by combining questionnaire items ST076Q10NA (whether work for pay before school) and ST078Q10NA (whether work for pay after school). Coding: 1 = yes, 0 = no.\\ \hline
		OUT.MEA &B& Whether the  student have meals before shool or after school.  This variable is obtained by combining questionnaire items ST076Q01NA (whether have breakfast before school) and ST078Q01NA (whether have dinner after school). Coding: 1 = yes, 0 = no.\\ \hline
		OUT.PAR &B& Whether the student talk to parents outside the school.  This variable is obtained by combining questionnaire items ST076Q08NA (whether  talk to parents  before school) and ST078Q08NA (whether  talk to parents  after school). Coding: 1 = yes, 0 = no.\\ \hline
		OUT.REA &B& Whether the student read a book/newspaper/magazine outside the school.  This variable is obtained by combining questionnaire items ST076Q04NA (whether read a book/newspaper/magazine before school) and ST078Q04NA (whether read a book/newspaper/magazine after school). Coding: 1 = yes, 0 = no.\\ \hline
		OUT.SPO &B& Whether the student exercise or do a sport outside the school.  This variable is obtained by combining questionnaire items ST076Q11NA (whether exercise or do a sport before school) and ST078Q11NA (whether exercise or do a sport after school). Coding: 1 = yes, 0 = no.\\ \hline
		OUT.STU &B&  Whether the student study for shool or homework outside the school.  This variable is obtained by combining questionnaire items ST076Q02NA (whether study before school) and ST078Q02NA (whether study after school). Coding: 1 = yes, 0 = no.\\ \hline
		OUT.VED &B& Whether the student watch TV/DVD/Video outside the school.  This variable is obtained by combining questionnaire items ST076Q03NA (whether watch TV/DVD/Video before school) and ST078Q03NA (whether watch TV/DVD/Video after school). Coding: 1 = yes, 0 = no.\\ \hline				
		REPEAT &B& Has the student ever repeated a grade. Coding: 1 = yes, 0 = no. \\ \hline	
		SCI.APP &B& Whether the student attend applied sciences and technology courses in this school year or last school year. This variable is obtained by combining questionnaire items ST063Q05NA (whether attend this year) and ST063Q05NB (whether attend last year).  Coding: 1 = yes, 0 = no.\\ \hline
		SCI.BIO &B& Whether the student attend biology courses in this school year or last school year. This variable is obtained by combining questionnaire items ST063Q03NA (whether attend this year) and ST063Q03NB (whether attend last year).  Coding: 1 = yes, 0 = no.\\ \hline
		SCI.CHE &B& Whether the student attend chemistry courses in this school year or last school year. This variable is obtained by combining questionnaire items ST063Q02NA (whether attend this year) and ST063Q02NB (whether attend last year).  Coding: 1 = yes, 0 = no.\\ \hline		
		SCI.EAR &B& Whether the student attend earth and space courses in this school year or last school year. This variable is obtained by combining questionnaire items ST063Q04NA (whether attend this year) and ST063Q04NB (whether attend last year).  Coding: 1 = yes, 0 = no.\\ \hline		
		SCI.GEN &B& Whether the student attend general, integrated, or comprehen science courses in this school year or last school year. This variable is obtained by combining questionnaire items ST063Q06NA (whether attend this year) and ST063Q06NB (whether attend last year).  Coding: 1 = yes, 0 = no.\\ \hline		
		SCI.PHY &B& Whether the student attend physics courses in this school year or last school year. This variable is obtained by combining questionnaire items ST063Q01NA (whether attend this year) and ST063Q01NB (whether attend last year).  Coding: 1 = yes, 0 = no.\\ \hline
		GENDER &B& Student's gender. The original item is named `ST004D01T'. Coding: 1 = male, 0 = female.\\ \hline
		LANGAH &B& Is student's language at home different from the test language. The original item is named `ST022Q01TA'. Coding:  1 = yes, 0 = no.\\ \hline
		DUECEC &O& Duration in ECEC (Early Childhood Education and Care) of student. The original variable is named `DURECEC' in the original dataset, which has 9 categories. We merged them into  4 categories. Coding: 0 = attended ECEC for less than two years, 1 = attended ECEC for at least two but less than three years, 2 = attended ECEC for at least three but less than four years, 3 = attended ECEC for at least four years.\\ \hline
		FISCED &O& Father's education in ISCED (International Standard Classification of Education) level. The original variable has 6 categories, we merged them into 5 categories. Coding: 0 = none or ISCED 1 (primary education), 1 = ISCED 2 (lower secondary), 2 = ISCED 3B or 3C (vocational/pre-vocational upper secondary), 3 = ISCED 3A (general upper secondary) or 4 (non-tertiary
		post-secondary), 4 = ISCED 5B (vocational tertiary), 5 = ISCED 5A  (theoretically oriented tertiary) or 6 (post-graduate). \\ \hline
		GRADE &O& Student's grade. 0 = lower than modal grade, 1 = not lower than modal grade, 2 = higher than modal grade. Here the modal grade is the grade level that most 15-year-old students in the country attend.\\ \hline
		MISCED &O& Mother's education in ISCED (International Standard Classification of Education) level. The original variable has 6 categories, we merged them into 5 categories. Coding: 0 = none or ISCED 1 (primary education), 1 = ISCED 2 (lower secondary), 2 = ISCED 3B or 3C (vocational/pre-vocational upper secondary), 3 = ISCED 3A (general upper secondary) or 4 (non-tertiary
		post-secondary), 4 = ISCED 5B (vocational tertiary), 5 = ISCED 5A or ISCED 6 (theoretically oriented tertiary
		and post-graduate).\\ \hline
		DAYPEC &O& Averaged days that student attends physical education classes each week. The original item is named `ST031Q01NA', which has 8 categories. We merged them into 4 categories. Coding: 0 = 0 days, 1 = 1 or 2 days, 2 = 3 or 4 days, 3 = 5 days or more.\\ \hline
		DAYMPA &O& Number of days with moderate physical activities for a total of at least 60 minutes per each week. The original item is named `ST032Q01NA', which has 8 categories. Coding:  0 = 0 days, 1 = 1 day, 2 = 2 days, 3 = 3 days, 4 = 4 days, 5 =5 days, 6 = 6 days, 7 = 7 days, 8 = 8 days. \\ 
		\hline
		SKIDAY &O& How often did student skipped a whole school day in the last two full weeks of school. The original item is named `ST062Q01TA', which has 4 categories. We merged them into 3 categories. Coding: 0 = none, 1 = one or two times, 2 = three or more times. \\ 
		\hline
		SKICAL &O& How often did student skipped some classes in the last two full weeks of school. The original item is named `ST062Q02TA', which has 4 categories. We merged them into 3 categories. Coding: 0 = none, 1 = one or two times, 2 = three or more times.\\ 
		\hline
		ARRLAT &O& How often did student arrived late for school in the last two full weeks of school. The original item is named `ST062Q03TA', which has 4 categories. We merged them into 3 categories.. Coding: 0 = none, 1 = one or two times, 2 = three or more times.\\ \hline		
		CHOCOU &O&  Can student choose the school science course(s) he or she study.  The original item is named `ST064Q01NA', which has 3 categories. Coding: 0 = no, not at all; 1 = yes, to a certain degree; 2 = yes, can choose freely. \\ \hline
		CHODIF &O&  Can student choose the level of difficulty for school science course(s) he or she study.  The original item is named `ST064Q02NA', which has 3 categories. Coding: 0 = no, not at all; 1 = yes, to a certain degree; 2 = yes, can choose freely.\\ 
		\hline
		CHONUM &O& Can student choose the number of school science course(s) he or she study. The original item is named `ST064Q03NA', which has 3 categories. Coding: 0 = no, not at all; 1 = yes, to a certain degree; 2 = yes, can choose freely.\\ 
		\hline
		EISCED &O& The ISCED level that student expects to complete. The original item is named `ST111Q01TA', which has 6 categories. We merged them into 3 categories. Coding: 0 = level 2 or 3A, 1 = level 4 or 5B, 2 = level 5A or 6. \\ \hline
		
		ADINST &C& Adaption of instruction. This variable is derived based on IRT scaling. The observed values range from -1.97 to 2.04.\\ \hline
		ANXTES &C& Personality: test anxiety. This variable is named `ANXTEST' in the original dataset, which is derived based on IRT scaling. The observed values range from -2.51 to 2.55.\\ \hline
		BELONG &C& Subjective well-being: sense of belonging to school. This variable is derived based on IRT scaling. The observed values range from -3.13 to 2.61.\\ \hline
		FISEIO &C& ISEI (International Socio-economic Index) of occupational status of father. This variable is named `BFMJ2' in the original dataset. The observed values range from 12 to 89.\\ \hline
		MISEIO &C& ISEI (International Socio-economic Index) of occupational status of mother. This variable is named `BMMJ1' in the original dataset. The observed values range from 12 to 89.\\ \hline
		EISEIO &C& Student's expected ISEI of occupational status. This variable is named `BSMJ' in the original dataset. The observed values range from 16 to 89.\\ \hline
		COOPER &C& Collaboration and teamwork dispositions: enjoy cooperation. This variable is named `COOPERATE' in the original dataset, which is derived based on IRT scaling. The observed values range from -3.33 to 2.29.\\ \hline
		CPSVAL &C& Collaboration and teamwork dispositions: value cooperation. This variable is named `CPSVALUE' in the original dataset, which is based on IRT scaling. The observed values range from -2.83 to 2.14.\\ \hline
		CULTPO &C& Cultural possessions at home. This variable is named `CULTPOSS' in the original dataset, which is derived based on IRT scaling. The observed values range from -1.71 to 2.63.\\ \hline
		DISCLI &C& Disciplinary climate in science classes. This variable is named `DISCLISCI' in the original dataset, which is derived based on IRT scaling. The observed values range from -2.41 to 1.88.\\ \hline
		EMOSUP &C& Parents emotional support. This variable is named `EMOSUPS' in the original dataset, which is derived based on IRT scaling. The observed values range from -3.08 to 1.10.\\ \hline
		ENVAWA &C& Environmental awareness. This variable is named `ENVAWARE' in the original dataset, which is derived based on IRT scaling. The observed values range from -3.38 to 3.29.\\ \hline
		ENVOPT &C& Environmental optimism. This variable is derived based on IRT scaling. The observed values range from -1.80 to 3.01.\\ \hline		
		EPIST &C& Epistemological beliefs. This variable is derived based on IRT scaling. The observed values range from -2.79 to 2.16.\\ \hline
		HEDRES &C& Home educational resources. This variable is derived based on IRT scaling. The observed values range from -4.41 to 1.78. \\ \hline
		EBSCIT &C& Enquiry-based science teaching and learning practices. This variable is named `IBTEACH' in the original dataset, which is derived based on IRT scaling. The observed values range from -3.34 to 3.18. \\ \hline
		INSTSC &C& Instrumental motivation. This variable is named `INSTSCIE' in the original dataset, which is derived based on IRT scaling. The observed values range from -1.93 to 1.74. \\ \hline
		INTBRS &C& Interest in broad science topics. This variable is named `INTBRSCI' in the original dataset, which is derived based on IRT scaling. The observed values range from -2.55 to 2.73.\\ \hline
		JOYSCI &C& Enjoyment of science. This variable is named `JOYSCIE' in the original dataset, which is derived based on IRT scaling. The observed values range from -2.12 to 2.16.\\ \hline
		MOTIVA &C& Achievement motivation. This variable is named `MOTIVAT' in the original dataset, which is derived based on IRT scaling. The observed values range from -3.09 to 1.85. \\ \hline
		OUTHOU &C& Out-of-school study time per week (hours).This variable is obtained from the original variable named `OUTHOURS' by setting 0 as missing and performing a log-transformation. The observed values range from 0 to 4.25. \\ \hline
		PERFEE &C& Perceived feedback. This variable is named `PERFEED' in the original dataset, which is derived based on IRT scaling. The observed values range from -1.53 to 2.50. \\ \hline				
		SCIACT &C& Index science activities. This variable is named `SCIEACT' in the original dataset, which is derived based on IRT scaling. The observed values range from -1.76 to 3.36. \\ \hline
		SCIEEF &C& Science self-efficacy. This variable is named `SCIEEFF' in the original dataset, which is derived based on IRT scaling. The observed values range from -3.76 to 3.28. \\ \hline		
		TDSCIT &C& Teacher-directed science instruction.  This variable is named `TDTEACH' in the original dataset, which is derived based on IRT scaling. The observed values range from -2.45 to 2.08.\\ \hline
		TEASUP &C& Teacher support in a science classes of students choice. This variable is named `TEACHSUP' in the original dataset, which is derived based on IRT scaling. The observed values range from -2.72 to 1.45.\\ \hline
		TMINS &C& Learning time in class per week (minutes). The observed values range from 0 to 3000. \\ \hline
	    UNFAIR &C& Teacher unfairness. This variable is named `unfairteacher' in the original dataset, which is derived by taking the sum of the responses to questionnaire items ST039Q01NA to ST039Q06NA (questions that ask students about how
		often in the past 12 months they had experienced unfair treatment by teachers). The observed values range from 1 to 24.\\ \hline		
		WEALTH &C& Family wealth. This variable is derived based on IRT scaling. The observed values range from -7.01 to 4.27. \\
		\bottomrule
	\end{longtable}
\end{center}

\section{Additional Simulation}\label{app:sim}

\subsection{When IRT parameters are unknown}
{When the IRT parameters are unknown, they need to be jointly estimated within Algorithm \ref{alg:StEM}. To avoid identifiability issue, we fix 
$\beta_0$ and $\sigma^2$ to $0$ and $1$ 
respectively within the algorithm. 

The simulation results are presented in Table \ref{tab:simA1}. It can be seen that the performance of the revised procedure is nearly the same as the that of the procedure when IRT are known. The baseline algorithm controls PFER slightly over the
nominal level, while the derandomised knockoff method strictly controls the PFER  below the
nominal level and has a greater power in most cases.
}


\begin{table}[H]
	\caption{Simulation results when  IRT parameters are unknown. Here, ``Baseline" refers to the baseline algorithm, Algorithm 2, and ``DRM" refers to derandomised knockoffs, Algorithm 3. 
 $\nu$ refers to the nominal PFER level.} 
	\label{tab:simA1}
	\centering
	\adjustbox{max height=0.5\textheight}
	{
	\begin{tabular}{cccccccc}			
		\toprule
		&&& $\nu=1$  &  $\nu=2$ & $\nu=3$ & $\nu=4$ & $\nu=5$ \\
		\midrule
		& &  Baseline &  1.03&  2.06&  3.12&   3.90& 4.72 \\
		\cmidrule{3-8}
		&\multirow{-2}{*}{PFER} &  DRM &   0.02&   0.11&   0.29&   0.50& 0.79   \\	
		\cmidrule{2-8}
		& & Baseline &  55.8\% &  65.4\% &  71.0\% &  74.1\% &  76.4\%\\
		\cmidrule{3-8} 
		\multirow{-5}{*}{N = 1000}&\multirow{-2}{*}{TPR}& DRM &  60.8\% &  68.8\% &  74.0\% &  77.5\% &  80.5\%\\
            
		\midrule
		& &  Baseline  & 1.08 & 2.16 & 3.31 & 4.23 & 5.18\\
		\cmidrule{3-8}
		&\multirow{-2}{*}{PFER}&  DRM  & 0.04 & 0.32 & 0.57 & 0.98 & 1.37 \\		
		\cmidrule{2-8}
		& & Baseline &  81.7\% & 88.3\% & 91.1\% & 92.6\% & 93.7\%\\
		\cmidrule{3-8} 
		\multirow{-5}{*}{N = 2000}&\multirow{-2}{*}{TPR}& DRM &  84.4\% & 90.2\% & 93.5\% & 95.2\% & 96.0\%\\              
		\midrule

		& &  Baseline &   1.19 & 2.19 &3.31 & 4.36 &5.24 \\
		\cmidrule{3-8}
		&\multirow{-2}{*}{PFER}&
            DRM & 0.15 & 0.52 & 0.96 & 1.43& 2.08\\			
		\cmidrule{2-8}
		& & Baseline & 96.5 \%& 98.5\%&99.0\%&99.4\%&99.6\%\\
		\cmidrule{3-8} 
		\multirow{-5}{*}{N = 4000}&  \multirow{-2}{*}{TPR}& DRM & 98.1\%&99.3\%&99.7\%&99.7\%&99.7\%\\
		\bottomrule
	\end{tabular}}
\end{table}

\subsection{Comparison with imputation-based stepwise AIC/BIC selection}
{To compare performance, we test an imputation-based stepwise model selection procedure, using either AIC or BIC, on the same simulation data as in Section~\ref{sec:sim}. The procedure is as follows:

First, we use the Gibbs sampler in \ref{sec:Gibbs} to sample $(\ttt_i, \ZZ_i^*)$ from their conditional distribution given $\ZZo_i, \WW_i,$ and $\YYo_i$.
These samples were obtained based on the estimated latent regression IRT model, where the IRT parameters are assumed to be known. 

Second, we initiate the model selection procedure by starting with the full model (i.e., $\theta \sim 1 + Z$), and iteratively add or remove one covariate in a greedy manner at each iteration, such that the updated model exhibits the most significant decrease in AIC or BIC after each step. This procedure continues until the model does not change any more.

The simulation results of such procedure are reported in Table \ref{tab:simA2}. 
These results indicate that the stepwise AIC/BIC procedure  tends to select a model that is significantly larger than the true model.
 Although the TPR of the selected model remains consistently high, there is no guarantee regarding the control of the PFER.
}

\begin{table}[H]
	\caption{Simulation results of the stepwise AIC/BIC procedure. Here, ``AIC" refers to the stepwise algorithm based on AIC, and ``BIC" refers to the stepwise algorithm based on BIC.} 
	\label{tab:simA2}
	\centering
	\adjustbox{max height=0.5\textheight}
	{
	\begin{tabular}{ccccc}			
		\toprule
		&& $N=1000$  &  $N=2000$ & $N=4000$\\
		\midrule
		&AIC &  43.75 &  40.18 &  33.78 \\
		\cmidrule{2-5}
		\multirow{-2}{*}{PFER} &  BIC &  14.92  &    10.46 &  5.85 \\	\midrule
		&AIC &  99.2\% &  100\% &  98.3\%  \\
		\cmidrule{2-5}
		\multirow{-2}{*}{TPR} &  BIC &  96.9\% &  99.7\% &  94.9\% \\

            

		\bottomrule
	\end{tabular}}
\end{table}

\subsection{Including latent predictors}
In this simulation study, we extend our setting to include latent predictors. The simulation setup is similar to that described in Section \ref{sec:sim}, with some modifications. 

Specifically, in this simulation study, we designate the 1st, 3rd, 5th, 7th, and 9th continuous variables within each block to be latent. In other words, for each $j$ in the form of $10u+v$ where $u, v \in \{1, 3, 5, 7, 9\}$, $Z_{ij}$ becomes a latent construct. Additionally, we introduce five continuous indicators, denoted as $\mathbf{W}_{ij} = (W_{ij1}, \ldots, W_{ij5})^\top \in \mathbb{R}^5$, associated with these $Z_{ij}$, such that
\[W_{ijl} = Z_{ij} + \varepsilon_{ijl},~l = 1, ..., 5.\]
where $\varepsilon_{ijl}$'s are independent stabdard Gaussian noise.
For the sake of simplicity, in this simulation study, we assume that all parameters in the conditional model of $\mathbf{W}_{ij}$ given $Z_{ij}$ are known.

Besides, we make modifications to the missing mechanism described in Section \ref{sec:sim}, allowing it to potentially depend on $\mathbf{W}_{ij}$. Similar to Section \ref{sec:sim}, the following steps are taken:

For each observation $i$, we generate a random variable $R_i$ from a categorical distribution with support ${1, 2, ..., 5}$, where $P(R_i = k) = 0.2$ for all $k = 1, ..., 5$. Let $\mathcal{S}_k^*$ denote the set of non-null variables in the $k$th block. For observation $i$, when $R_i = k$, we ensure that all the variables in $\mathcal{S}_k^*$ which are not latent are observed.
We also define $\bar{X}_{ij}$ for each $j$ as follows:
\[\bar{X}_{ij} = \begin{cases}
    \frac{1}{5}\sum_{l=1}^5 W_{ijl}, & \text{if } Z_{ij} \text{ is latent, }\\
    Z_{ij}, & \text{otherwise}.
\end{cases}\]
Then, for each of variables $j\notin\SSS^*$ that are not latent, we set its probability of being missing as
$(1+\exp(1-(\sum_{j'\in \mathcal{S}_k^*} \bar{X}_{ij'})/2))^{-1}.$
Under this setting, around 35\% of the entries of the data matrix (not including $\WW_{ij}$'s) for predictors are missing. 

In Table \ref{tab:simA3}, we present the simulation results under the described setting. Both the baseline algorithm the derandomised knockoff method 
 control PFER below the
nominal level. However, there seems to be a decrease in power compared to the results in Section~\ref{sec:sim}. We conjecture that there may be more powerful knockoff statistics in this setting, which we leave for future investigation.
\begin{table}[H]
	\caption{Simulation results involving latent predictors. Here, ``Baseline" refers to the baseline algorithm, Algorithm 2, and ``DRM" refers to derandomised knockoffs, Algorithm 3. 
 $\nu$ refers to the nominal PFER level.} 
	\label{tab:simA3}
	\centering
	\adjustbox{max height=0.5\textheight}
	{
	\begin{tabular}{cccccccc}			
		\toprule
		&&& $\nu=1$  &  $\nu=2$ & $\nu=3$ & $\nu=4$ & $\nu=5$ \\
		\midrule
		& &  Baseline &  0.38&  0.82&  1.63&   2.62& 3.56 \\
		\cmidrule{3-8}
		&\multirow{-2}{*}{PFER} &  DRM &   0.01&   0.04&   0.09&   0.20& 0.40   \\	
		\cmidrule{2-8}
		& & Baseline &  33.2\% &  42.4\% &  48.7\% &  55.0\% &  58.4\%\\
		\cmidrule{3-8} 
		\multirow{-5}{*}{N = 1000}&\multirow{-2}{*}{TPR}& DRM &  27.1\% &  36.0\% &  43.9\% &  49.8\% &  55.0\%\\
            
		\midrule
		& &  Baseline   & 0.15 & 0.59 & 1.60 & 2.58 & 3.46\\
		\cmidrule{3-8}
		&\multirow{-2}{*}{PFER}&  DRM  & 0.00 & 0.00 & 0.10 & 0.32 & 0.72 \\		
		\cmidrule{2-8}
		& & Baseline   & 40.2 \%& 54.4\%& 66.4\%& 72.6\%& 74.9\%\\
		\cmidrule{3-8} 
		\multirow{-5}{*}{N = 2000}&\multirow{-2}{*}{TPR}& DRM &  34.3\% & 49.2\% & 62.2\% & 69.0\% & 71.8\%\\              
		\midrule

		& &  Baseline &   0.05 & 0.27 & 1.35 & 2.24 & 3.37 \\
		\cmidrule{3-8}
		&\multirow{-2}{*}{PFER}&
            DRM & 0.00 & 0.00 & 0.15 & 0.62 & 1.02\\			
		\cmidrule{2-8}
		& & Baseline & 35.4 \%& 51.7\%& 71.3\%& 79.2\%& 79.5\%\\
		\cmidrule{3-8} 
		\multirow{-5}{*}{N = 4000}&  \multirow{-2}{*}{TPR}& DRM & 31.3\%& 49.1\%& 73.4\%&77.3\%&77.4\%\\
		\bottomrule
	\end{tabular}}
\end{table}

\bibliography{ref}

\end{document}